\newtheorem{theorem}{Theorem}
\newtheorem{definition}{Definition}
\newtheorem{corollary}{Corollary}
\newtheorem{property}{Property}
\begin{document}

\begin{frontmatter}




\title{Generation Matrix: An Embeddable Matrix Representation for Hierarchical Trees}
\tnotetext[label3]{This work was supported in part by the National	Natural Science Foundation of China (project numbers 62072109 and U1804263).}
\cortext[label4]{Corresponding author.}

\author[label1]{Jianping~Cai}
\ead{jpingcai@163.com}
\author[label1]{Ximeng~Liu\corref{label4}}
\ead{nbnix@qq.com}
\author[label1]{Jiayin~Li}
\ead{lijiayin2019@gmail.com}
\author[label2]{Shuangyue~Zhang}
\ead{zhangsy@hxxy.edu.cn}

\address[label1]{College of Computer and Data Science, Fuzhou University, Fuzhou 350108, China}
\address[label2]{College of information and Smart Electromechanical Engineering, Xiamen Huaxia University, Xiamen 361024, China}


\begin{abstract}
Starting from the local structures to study hierarchical trees is a common research method. However, the cumbersome analysis and description make the naive method challenging to adapt to the increasingly complex hierarchical tree problems. To improve the efficiency of hierarchical tree research, we propose an embeddable matrix representation for hierarchical trees, called Generation Matrix. It can transform the abstract hierarchical tree into a concrete matrix representation and then take the hierarchical tree as a whole to study, which dramatically reduces the complexity of research. Mathematical analysis shows that Generation Matrix can simulate various recursive algorithms without accessing local structures and provides a variety of interpretable matrix operations to support the research of hierarchical trees. Applying Generation Matrix to differential privacy hierarchical tree release, we propose a Generation Matrix-based optimally consistent release algorithm (GMC). It provides an exceptionally concise process description so that we can describe its core steps as a simple matrix expression rather than multiple complicated recursive processes like existing algorithms. Our experiments show that GMC takes only a few seconds to complete a release for large-scale datasets with more than $10$ million nodes. The calculation efficiency is increased by up to $100$ times compared with the state-of-the-art schemes.
\end{abstract}

\begin{keyword}
Generation Matrix\sep Matrix Representation\sep Hierarchical Tree\sep Differential Privacy\sep Consistency

\MSC[2020] 05C62 \sep 05C05
\end{keyword}

\end{frontmatter}


\section{Introduction}
\label{sec:introduction}

As a fundamental data structure, hierarchical trees are widely used in different areas, including file systems \cite{htapp:hsh}, census \cite{dt:tuc}, evolution \cite{htapp:lmf}, etc. For example, in the U.S. Census Bureau's plan to apply differential privacy to protect privacy\cite{dt:tuc}, designing a novel hierarchical tree releasing algorithm is one of the important challenges\cite{htapp:ied}. The scale of the census data is so large that we can organize them into a hierarchical tree with more than 10 million nodes. Hence, the hierarchical tree release algorithms must be specially designed and highly efficient to ensure the timely release of such large-scale data. However, the design of efficient algorithms usually requires a large amount of hierarchical tree research as a theoretical basis.

Most hierarchical tree research works naturally regard the hierarchical tree as a collection of nodes and relationships. Starting from the perspective of individual nodes and local relationships to study hierarchical trees is a common research method, called Naive Research Method. Empirically, Naive Research Method often leads to overly cumbersome analysis and abstract algorithm descriptions, mainly reflected in two aspects. On the one hand, hierarchical trees contain rich relationships, such as father, son, ancestor, descendant, sibling, or cousin, but these relationships usually lack a concrete enough description. When multiple node relationships occur in an algorithm simultaneously, the intricate node relationships will make the algorithm challenging to understand.
On the other hand, Naive Research Method focuses on the local structure of hierarchical trees rather than the overall structure. The local structure is part of the overall structure, but the studies falling into local structures may prevent researchers from solving problems from a macro perspective. Worse, the additional auxiliary symbols or indexes for describing relationships and local structures pose a significant challenge to the researchers' data analysis capabilities. Imagine a researcher facing a half-page expression with various randomly labeled symbols and subscripts; how should the next step go? Therefore, Naive Research Method is too cumbersome to solve the increasingly complex hierarchical tree problems effectively.

Considering the complexity of Naive Research Method, we adopted a Matrixing Research Method for hierarchical tree problems. Its core idea is to transform the hierarchical tree into a specific matrix representation and then embed it into the research works or algorithm designs, making the initially abstract and indescribable hierarchical tree concrete and analyzable. As a more advanced research method, similar ideas widely exist in many fields such as graph theory \cite{mxp:twe,mxp:twm,mxp:sdm,mxp:tam}, group theory \cite{mxp:gti}, and  deep learning\cite{gnn:gnn,gnn:udg}. Unlike Naive Research Method, the research object of Matrixing Research Method is the hierarchical tree itself rather than the local structure. It emphasizes avoiding visiting individual nodes as much as possible but implementing operations of the hierarchical tree by the matrix representation. Therefore, the matrix representation design is critical and directly determines whether the research can proceed smoothly. The challenges of matrix representation design are as follows.

\begin{itemize}
	\itemindent 0em
	\item[1)] A non-negligible problem is the universality of recursion in hierarchical tree algorithms, while the access of individual nodes and the description of local relationships are almost inevitable in recursion. It violates the core idea of the Matrixing Research Method. Some matrix representations \cite{mxp:twe,mxp:twm,mxp:sdm,mxp:tam} have been used in the spectral theory of trees, but there is no achievement to show that the existing matrix representations can implement recursion without accessing local structures. So that whether supporting recursion is critical to the matrix representation.
	\item[2)] We hope that the matrix representation can directly serve algorithm designs, not just a theoretical analysis tool. Therefore, the matrix representation should be succinct to ensure the efficiency of algorithms. Specifically, the space overhead of each hierarchical tree node should be constant rather than the dense matrices like Distance Matrix \cite{mxp:sdm} or Ancestral Matrix \cite{mxp:tam}.
	
\end{itemize}

\subsection{Our contributions}

Considering the challenges above, we propose an embeddable matrix representation called Generation Matrix. Generation Matrix is a lower triangular matrix containing only $2n-1$ non-zero elements (i.e., the weights of nodes and edges). Applying sparse storage technologies\cite{spm:msm}, we only need to store non-zero elements, satisfying the succinctness. Compared with others \cite{mxp:twe,mxp:twm,mxp:sdm,mxp:tam}, Generation Matrix emphasizes the application in the hierarchical tree algorithms. Our analysis of properties shows that many calculations on Generation Matrix have specific mathematical meanings. We can explain them and combine them to design complex hierarchical tree algorithms. More importantly, we demonstrate that the inverse of the Generation Matrix contains the inherent logic of recursion. Therefore, we can use Generation Matrix to simulate the top-down and bottom-up recursions without accessing the local structures. Besides, we study the relationship between Generation Matrix and some existing matrix representations and find it can be easily converted to others. It implies that Generation Matrix can be combined with the theories from other matrix representations to solve hierarchical tree problems.

To demonstrate the practicability of Generation Matrix, we introduce an application on the differentially private hierarchical tree release above. Considering the consistency problem\cite{dpht:bta,dpht:pef}, we design a Generation Matrix-based optimally consistent release algorithm (GMC) for differentially private hierarchical trees. To our knowledge, GMC is the first solution to the problem by using matrix theory. It has an exceptionally concise process description so that just a simple matrix expression can summarize the core process. GMC embodies the advantages of Generation Matrix in solving problems in cross-domain. Therefore, Generation Matrix has positive significance for promoting the development of hierarchical tree-related research and applying matrix theories to solve hierarchical tree problems.

\subsection{Organization of paper}
The rest of the paper is organized as follows. Section 2 reviews the related works of existing hierarchical tree representations and differentially private hierarchical tree release. Section 3 introduces the preliminaries of hierarchical trees and the optimally consistent release of differentially private hierarchical trees. Section 4 defines Generation Matrix, then analyzes its mathematical properties and the conversion relationship with other matrix representations. In Section 5, we show the application of Generation Matrix on differentially private hierarchical tree release and design GMC. Finally, Section 6 compares GMC with the existing technology through experiments and demonstrates its efficiency.

\section{Related Works}

The research on the hierarchical tree representations mainly concentrates on data structure and graph theory fields. In the data structure field, researchers have achieved better performance in storage \cite{dst:ucr}, query \cite{dst:srol}, or structure updating \cite{dst:srod}. However, these representations are mainly for storage in computers but do not support mathematical analysis. We can not symbolize them and use them as tools for hierarchical tree researches. In the graph theory field, many works adopt matrix representations to represent trees, including Adjacency Matrix \cite{mxp:twe}, Laplacian Matrix \cite{htapp:lmf,mxp:twm}, Distance Matrix \cite{mxp:sdm}, etc. Among them, Adjacency Matrix only describes the edge information, so that it is challenging to undertake the complex model analysis. Laplacian matrix and Distance matrix are two meaningful matrix representations widely used in spectral graph theory. However, they are for undirected graphs or trees but not suitable for representing rooted hierarchical trees.
To summarize, none of the three matrix representations is the best choice for representing hierarchical trees. Subsequently, Eric et al. \cite{mxp:tam} proposed a new matrix representation for rooted trees (i.e., hierarchical trees), named Ancestor Matrix. It represents the structure of a hierarchical tree by describing the number of overlapping edges on the path from any two leaves to the root. Studying Ancestral Matrix, Eric et al. \cite{mxp:tam} obtained many essential conclusions, such as the maximum spectrum radius and the determinants of the characteristic polynomial. However, it is also not the best choice for the calculations of hierarchical trees. First, the dense Ancestral Matrix is not succinct enough. Secondly, Ancestral Matrix is a kind of matrix with a high degree of feature summary. Although it can deterministically express the structure of a hierarchical tree only by describing the leaves, it is very unintuitive and cannot simulate the operations of hierarchical trees. Therefore, the existing matrix representations are not suitable for the analysis and calculation of hierarchical tree models. Significantly, the broad application of the Laplacian matrix in deep learning\cite{gnn:gnn,gnn:udg} in recent years implies that matrix representation has essential value for solving complex scientific problems. It motivates us to design a new matrix representation to solve hierarchical tree problems and design algorithms.

Differentially private hierarchical tree release is a data releasing technology that organizes the data into a hierarchical tree and applies differential privacy (DP) \cite{dp:cnt} to protect individual privacy. It is widely used in many scenarios, such as histogram publishing \cite{dpht:bta,dpht:uhm}, location privacy release based on spatial partitioning \cite{dpht:dps}, trajectory data publishing \cite{dpht:dpt}, frequent term discovery \cite{dpht:pef}. By adding random noise to the data, DP provides a provable and quantifiable guarantee of individual privacy. However, the random noise will destroy the consistency that the hierarchical tree should satisfy, i.e., ``the sum of the children's values equals the value at the parent''\cite{dpht:bta}. Therefore, ensuring that the released results meet consistency and obtain a higher accuracy is one of the leading research goals. Hay et al. \cite{dpht:bta} first applied a hierarchical tree to improve the accuracy of range query and designed Boosting for the consistency of histogram release. However, Boosting can only support complete $k-$ary trees, which significantly limits its application. 
Moreover, Hay et al.'s error analysis \cite{dpht:bta} of the released results is rough, and only qualitative error results are obtained. Subsequently, Wahlbeh et al. analyzed the error of Boosting and designed an algorithm to calculate the error. However, it also can only support complete $k-$ary trees. In the differentially private frequent term discovery problem studied by Ning et al. \cite{dpht:pef}, the hierarchical tree is arbitrary. Therefore, it is impossible to apply Boosting.
For this reason, Ning et al. designed an optimally consistent release algorithm for arbitrary hierarchical trees in their proposed algorithms PrivTrie \cite{dpht:pef}. Its implementation is based on multiple complex recursions, which is not easy to understand and a large number of function calls result in significant additional computational overhead. Applying the idea of maximum likelihood estimation, Lee et al. \cite{dpc:mlp} proposed a general solution for differentially private optimally consistent release. It solves the optimally consistent release by establishing a quadratic programming equation and has a closed-form matrix expression. Theoretically, it can apply to arbitrary optimally consistent release, but the computational overhead is so significant that it can only be processed for small-scale releases. However, Lee et al.'s research work \cite{dpc:mlp} is inspiring. It motivates us to try to analyze the issues of differentially private hierarchical tree release from matrix analysis.

Under the representation of Generation Matrix, we can introduce many matrix analysis methods to solve hierarchical tree problems. One of them is QR decomposition \cite{mx:laa}. In QR decomposition, we can transform any matrix into a triangular matrix by orthogonal transformation. Compared with the original form, the triangular matrix is simpler, has many exciting properties \cite{mx:tar}. The orthogonal transformation methods commonly used for QR decomposition include Householder transformation \cite{mx:rea}, Gram-Schmidt orthogonalization \cite{mx:fio}, Givens rotation \cite{mx:agr}, etc. Among them, Householder transformation is the simplest and more suitable for sparse matrices.

\section{Preliminaries}
In this section, we describe some preliminaries of our work. Before formally describing, we first introduce some of the main notation definitions shown in Tab \ref{tab:note}.

\begin{table}[!h]
	\caption{Notations Descriptions in Our Work}
	\fontsize{9}{12}\selectfont
	\centering
	\footnotesize
	\label{tab:note}
	\begin{tabular}{p{2.5cm}<{\centering}p{10.8cm}<{\raggedright}}
		\toprule[1.5pt]
		\multirow{1}{2.5cm}{\centering \textbf{Notations}}&\multirow{1}{10.8cm}{\centering \textbf{Descriptions}}\cr\midrule
		$\mathcal{T}$&Hierarchical tree with arbitrary structure\cr\hline
		${{\mathcal{T}}^{\left( k \right)}}$&$k-$order subtree of $\mathcal{T}$\cr\hline
		${{f}_{i}}$, ${{\mathcal{C}}_{i}}$&Parent of node $i$; the set of children of node $i$\cr\hline
		\multirow{1}*{$n$, ${{n}_{k}}$}&The number of nodes of the hierarchical tree; the $k-$order subtree\cr\hline
		\multirow{1}*{$h$, ${{h}_{i}}$}&The height of the hierarchy tree; The height of node $i$\cr\hline
		$m$&The number of unit counts\cr\hline
		\multirow{2}{2.5cm}{\centering $\mathbf{G}_{\mathcal{T}}^{\left( {{\mathbf{w}}_{node}},{{\mathbf{w}}_{edge}} \right)}\in {{\mathbb{R}}^{n\times m}}$}&\multirow{2}{10.8cm}{The Generation Matrix defined by a $\mathcal{T}$ with the node weights ${{\mathbf{w}}_{node}}$ and the edge weights ${{\mathbf{w}}_{edge}}$} \cr\\\hline
		\multirow{1}{2.5cm}{\centering ${{\mathbf{G}}_{\mathcal{T}}}\in {{\mathbb{R}}^{n\times n}}$}&The structure matrix of $\mathcal{T}$\cr\hline
		\multirow{1}{2.5cm}{\centering ${{\mathbf{M}}_{\mathcal{T}}}\in {{\mathbb{R}}^{n\times m}}$}&The consistency constraint matrix of $\mathcal{T}$\cr\hline
		\multirow{1}{2.5cm}{\centering ${{\mathbf{G}}_{{{\mathcal{T}}^{\left( 1 \right)}}\leftarrow \mathcal{T}}}\in {{\mathbb{R}}^{{n}_{1}\times {n}_{1}}}$}&The Generation Matrix inner-product equivalent to ${{\mathbf{M}}_{\mathcal{T}}}$\cr\hline
		${{\mathbf{A}}_{\mathcal{T}}},{{\mathbf{L}}_{\mathcal{T}}},{{\mathbf{D}}_{\mathcal{T}}}\in {{\mathbb{R}}^{n\times n}}$, ${{\mathbf{C}}_{\mathcal{T}}}\in {{\mathbb{R}}^{m\times m}}$ &\multirow{2}{10.8cm}{The Adjacency Matrix, Laplacian Matrix, Distance Matrix and Ancestral Matrix of $\mathcal{T}$}\cr\hline
		$\mathbf{x}\in {{\mathbb{R}}^{m\times 1}}$&The vector composed of unit counts ${{x}_{i}}$\cr\hline
		\multirow{1}*{$\mathbf{v}\in {{\mathbb{R}}^{n\times 1}}$}&The vector composed of the values of nodes of the hierarchical tree arranged in order\cr\hline
		\multirow{1}*{$\widetilde{\mathbf{v}}\in {{\mathbb{R}}^{n\times 1}}$, $\widetilde{\mathbf{x}}\in {{\mathbb{R}}^{m\times 1}}$}&The noisy $\mathbf{v}$ and $\mathbf{x}$ satisfying DP\cr\hline
		\multirow{1}*{$\overline{\mathbf{v}}\in {{\mathbb{R}}^{n\times 1}}$, $\overline{\mathbf{x}}\in {{\mathbb{R}}^{m\times 1}}$}&Optimally consistent release after post-processing and the vector restored from $\overline{\mathbf{v}}$\cr\hline
		\multirow{1}*{${{\mathbf{H}}_{\hbar }}\in {{\left\{ 0,1 \right\}}^{m\times n}}$}&The mapping matrix representing the mapping relationship between ${{x}_{i}}$ and ${{v}_{i}}$\cr
		\bottomrule[1.5pt]
	\end{tabular}
\end{table}

\subsection{Hierarchical Tree}

\begin{figure}[h]
	\centering
	\includegraphics[width=3.6in,trim=0 80 0 0,clip]{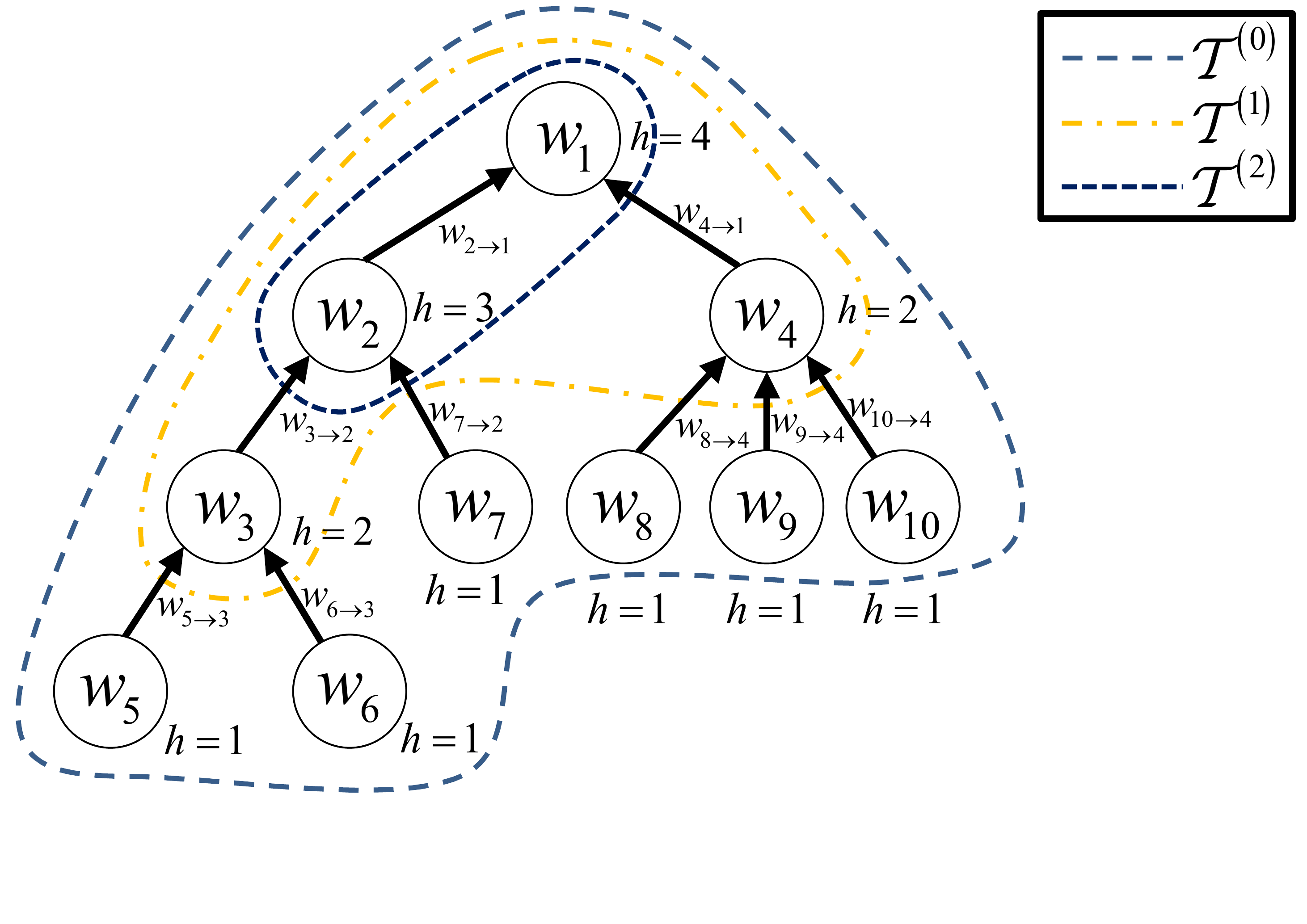}
	\caption{Hierarchical Tree and Its $k-$Order Subtree}
	\label{fig:ht}
\end{figure}

We first recall the definition of the hierarchical tree.

\begin{definition}[Hierarchical Tree\cite{ht:ecv}]	
	The hierarchical tree $\mathcal{T}$ is a collection of nodes numbered $1,2,\ldots ,n$, which satisfies	
	\begin{itemize}
		\itemindent 0em
		\item[1)] $\mathcal{T}$ contains a specially designated node called the root.
		\item[2)] The remaining nodes are divided into several non-empty collections called the subtrees of the root.		
	\end{itemize}
	\label{def:ht}
\end{definition}

In the hierarchical tree, we denote the relationships between the nodes as $i\to j$, indicating node $j$ is the parent of node $i$. Fig. \ref{fig:ht} shows a weighted hierarchical tree with $10$ nodes. The node weights and edge weights are denoted as ${{w}_{i}}$ and ${{w}_{i\to j}}$, respectively. By the height of each node, we can obtain a good quasi-ranking, which is defined as follows.

\begin{definition}[Node Height]	
	The node height of a hierarchical tree refers to the height of the subtree rooted at the current node. Let the height of node $i$ be denoted as ${{h}_{i}}$, then ${{h}_{i}}$ can be calculated by the following recursive expression.
	\begin{equation}
	{{h}_{i}}=\left\{ \begin{array}{*{35}{l}}
	1 & ,\text{node }i\text{ is leaf}  \\
	\underset{j\in {{\mathcal{C}}_{i}}}{\mathop{\max }}\,{{h}_{j}}+1 & ,\text{otherwise}  \\
	\end{array} \right..
	\label{dpIeq}
	\end{equation}
	\label{def:ndH}
\end{definition}

By the height of the nodes, we define a kind of induced subtree of hierarchical trees, called $k-$Order Subtree.

\begin{definition}[$k-$Order Subtree]
	For a hierarchical tree $\mathcal{T}$ under the descending order of height, the $k-$Order Subtree ${{\mathcal{T}}^{\left( k \right)}}$ is defined as an induced subtree retained after $\mathcal{T}$ deletes all leaves $k$ times. ${{\mathcal{T}}^{\left( k \right)}}$ satisfies
	\begin{equation}
	{{\mathcal{T}}^{\left( k \right)}}=\left\{ i\left| i\in \mathcal{T}\wedge {{h}_{i}}>k \right. \right\}.
	\label{hexp}
	\end{equation}
	\label{def:kos}
\end{definition}

As a bottom-up induced subtree, ${{\mathcal{T}}^{\left( k \right)}}$ satisfies transitive, i.e., ${{\mathcal{T}}^{\left( a \right)\left( b \right)}}={{\mathcal{T}}^{\left( a+b \right)}}$. It can help us determine whether the subtrees obtained in different ways are equivalent. In subsequent applications, we use the concept of $k-$Order Subtree to simplify the description of the tree structure. In Fig. \ref{fig:ht}, we use dotted circles to mark  the subtrees of $\mathcal{T}$ from $0$ to $2$ orders. It can be seen that the $0-$Order Subtree is $\mathcal{T}$ itself actually; ${{\mathcal{T}}^{\left( 1 \right)}}$ is a subtree composed of non-leaf nodes of $\mathcal{T}$. Let ${{n}_{k}}$ denote the number of nodes contained in ${{\mathcal{T}}^{\left( k \right)}}$, then there are ${{n}_{0}}\equiv n$ and ${{n}_{1}}$ equal to the number of non-leaf nodes of $\mathcal{T}$.

\subsection{Optimally Consistent Release of Differentially Private Hierarchical Tree}

Before describing the optimally consistent releasing of the differentially private hierarchical tree, we first recall the hierarchical tree releasing model. Consider a set of unit counts ${{x}_{i}}:\mathcal{D}\to \mathbb{N}$$\left( 1\le i\le m \right)$ for private dataset $\mathcal{D}$, where ${{x}_{i}}$ indicates the number of records in $\mathcal{D}$ that satisfies the mutually exclusive unit condition ${{\varphi }_{i} }$. The unit count ${{x}_{i}}$ satisfies
\begin{equation}
{{x}_{i}}=\left| \left\{ t\in \mathbf{D}\left| {{\varphi }_{i}}\left( t \right)=True \right. \right\} \right|.
\label{unitCnt}
\end{equation}

Since ${{\varphi}_{i}}$ is mutually exclusive, any $t\in \mathcal{D}$ satisfies and only satisfies one ${{\varphi }_{i}}$. Therefore, organizing ${{x}_{i}}$ into the form of a vector, we will get $\mathbf{x}={{\left[ {{x}_{1}},{{x}_{2 }},\ldots ,{{x}_{m}} \right]}^{T}}$.

\begin{figure}[h]
	\centering
	\includegraphics[width=4.5in,trim=0 300 0 0,clip]{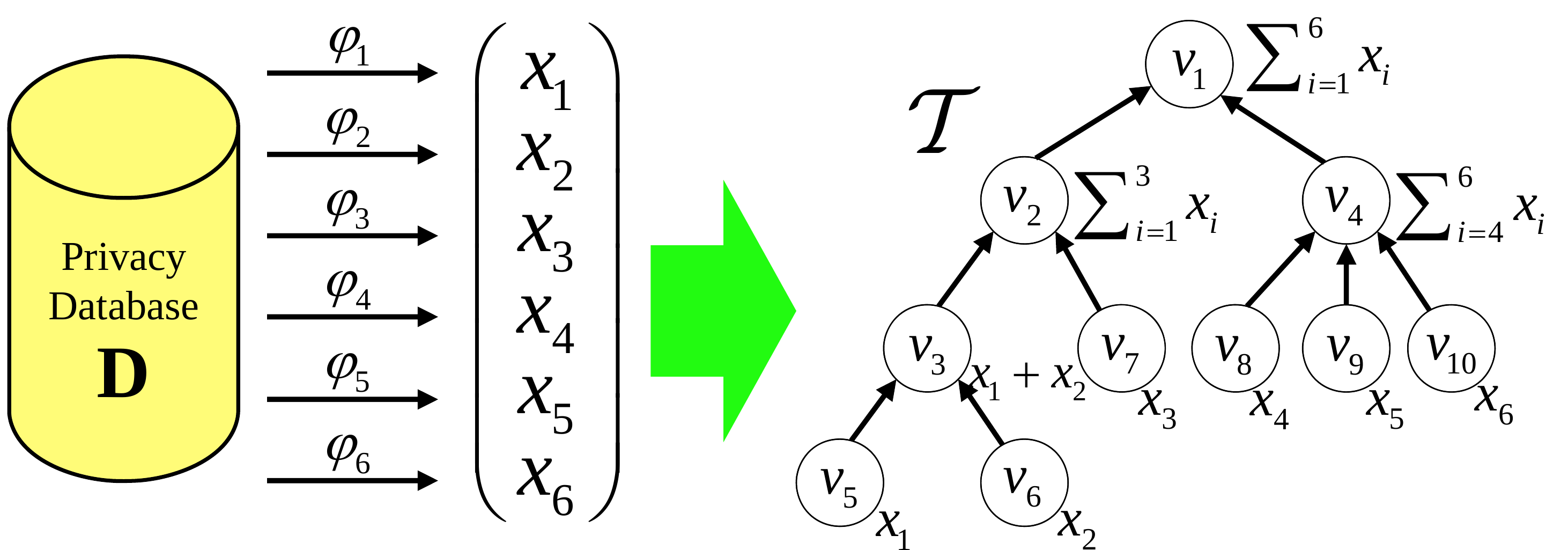}
	\caption{The Process of Hierarchical Tree Release}
	\label{fig:htr}
\end{figure}
As shown in Fig. \ref{fig:htr}, each leaf corresponds to a ${{x}_{i}}$. The non-leaf node's value equals the sum of the leaves' value of the subtree rooted at that node. Therefore, the results of the hierarchical tree meet the consistency, i.e., ``the sum of the values of the child nodes is equal to the value of the parent node''. In Fig. \ref{fig:htr}, we denote the value of the $i$-th node as ${{v}_{i}}$. Then, organizing ${{v}_{i}}$ into the form of $\mathbf{v}={{\left[ {{v}_{1}},{{v}_{2}},\ldots ,{{v}_{n}} \right]}^{\operatorname{T}}}$ in turn, we can get the to-be-released data $\mathbf{v}$.

However, several works\cite{dpht:bta,dpht:uhm,dpht:dps,dpht:dpt,dpht:pef} have demonstrated that releasing an unprotected hierarchical tree will result in privacy disclosure. To protect individual privacy, DWork et al. \cite{dp:cnt} proposed differential privacy, defined as follows. 
\begin{definition}[$\varepsilon -$Differential Privacy\cite{dp:cnt}]	
	If a random algorithm $\mathcal{M}$ satisfies $\varepsilon -$difference privacy, then for any two neighboring datasets $\mathcal{D}$ and $\mathcal{D}'$, all outputs $O\in \operatorname{Range}\left( \mathcal{M} \right)$ satisfies	
	\begin{equation}
	\Pr \left( \mathcal{M}\left( \mathcal{D} \right)=O \right)\le {{e}^{\varepsilon }}\Pr \left( \mathcal{M}\left( \mathcal{D}' \right)=O \right).
	\label{dpIeq}
	\end{equation}
	\label{def:dp}
\end{definition}

Under differential privacy, the process of hierarchical tree releasing can be described as
\begin{equation}
\widetilde{\mathbf{v}}=\mathbf{v}+\mathbf{\xi },
\label{dp:addNoi}
\end{equation}
where $\widetilde{\mathbf{v}}$ is the $\mathbf{v}$ after noise addition, which satisfies differential privacy. $\mathbf{\xi }$ is the random vector for the noise addition. Each element ${{\xi }_{i}}$ is i.i.d and satisfies ${{\xi }_{i}}\sim \operatorname{Lap}\left( {\Delta }/{\varepsilon }\; \right)$, where $\operatorname{Lap}$ represents a Laplacian distribution and $\Delta $ is data sensitivity. In hierarchical tree releasing, $\Delta $ equals the height of $\mathcal{T}$\cite{dpht:bta}.

To keep the consistency of the hierarchical tree after adding noise, we can get the optimally consistent release $\overline{\mathbf{v}}$ by following the optimization equation according to maximum likelihood post-processing proposed by Lee et al. \cite{dpc:mlp}.

\begin{equation}
\begin{array}{*{35}{l}}
\underset{\overline{\mathbf{v}}}{\mathop{\min }}\, & \left\| \overline{\mathbf{v}}-\widetilde{\mathbf{v}} \right\|  \\
s.t. & {{\mathbf{M}}_{\mathcal{T}}}^{T}\overline{\mathbf{v}}=\mathbf{0}  \\
\end{array},
\label{opt:ic}
\end{equation}
where ${{\mathbf{M}}_{\mathcal{T}}}$ is the consistency constraint matrix of a hierarchical tree, defined as follows.

\begin{definition}[Consistency Constraint Matrix of Hierarchical Tree]
	
	Given a hierarchical tree $\mathcal{T}$ containing $n$ nodes. Let ${{n}_{1}}$ denote the number of non-leaf nodes in $\mathcal{T}$. The value ${{m}_{ij}}$ in row $i$ and column $j$ of the consistency constraint matrix ${{\mathbf{M}}_{\mathcal{T}}}\in {{\mathbb{R}}^{n\times {{n}_{1}}}}$ is defined as follows:
	\begin{equation}
	{{m}_{ij}}=\left\{ \begin{array}{*{35}{l}}
	1 & ,i=j  \\
	-1 & ,j={{f}_{i}}  \\
	0 & ,\text{otherwise}  \\
	\end{array} \right.,	 
	\label{vCCM}
	\end{equation}
	where ${{f}_{i}}$ is the parent of node $i$.
	\label{def:cmm}
\end{definition}

The optimization equation \eqref{opt:ic} has the following closed-form expression:
\begin{equation}
\overline{\mathbf{v}}=\widetilde{\mathbf{v}}-{{\mathbf{M}}_{\mathcal{T}}}{{\left( {{\mathbf{M}}_{\mathcal{T}}}^{T}{{\mathbf{M}}_{\mathcal{T}}} \right)}^{-1}}{{\mathbf{M}}_{\mathcal{T}}}^{T}\widetilde{\mathbf{v}} .
\label{exp:ic}
\end{equation}

Since Formula \eqref{exp:ic} involves the inner product and inverse operations of the matrix, the time complexity of the direct solution is as high as $O\left( {{n}^{3}} \right)$. The amount of calculation is too large to obtain an efficient enough algorithm directly by the expressions. On the surface, Formula \eqref{exp:ic} is not a good choice for solving optimally consistent releases, but under the theories of Generation Matrix, we can convert it into another form and apply the properties of Generation Matrix to obtain an efficient algorithm.

\section{Generation Matrix Model for Hierarchical Tree}
\subsection{Generation Matrix}
Before defining Generation Matrix, we number the nodes of $\mathcal{T}$ by descending order of height firstly.

\begin{definition}[Descending Order of Height]
	Let ${{h}_{i}}$ denote the height of node $i$ defined by Def. \ref{def:ndH}. If any two nodes $i$ and $j$ in $\mathcal{T}$ satisfy 
	\begin{equation}
	i<j\Rightarrow {{l}_{i}}\ge {{l}_{j}},
	\label{hcond}
	\end{equation}
	we say that $\mathcal{T}$ satisfies Descending Order of Height.
	\label{def:doh}
\end{definition}
Under the descending order of height, we define Generation Matrix for $\mathcal{T}$.

\begin{figure}[h]
	\centering
	\includegraphics[width=5in,trim=0 270 0 0,clip]{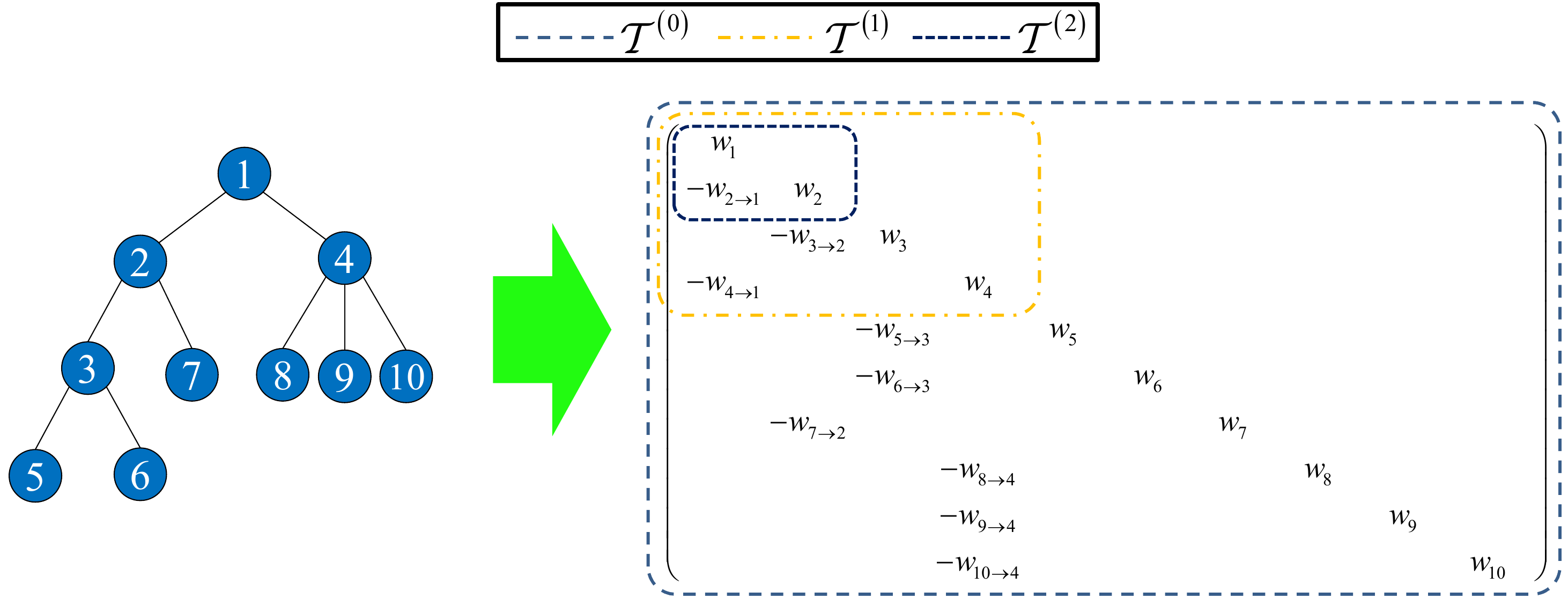}
	\caption{Generation Matrix and Its $k-$Order Submatrix}
	\label{fig:gm}
\end{figure}

\begin{definition}[Generation Matrix]
	Considering a non-zero weighted hierarchical tree $\mathcal{T}$ under descending order of height, let ${{w}_{i}}$ and ${{w}_{i\to {{f}_{i}}}}$$\left( {{w}_{i}},{{w}_{i\to {{f}_{i}}}}\ne 0 \right)$ denote the weight values of the node $i$ and the edge $i\to {{f}_{i }}$. Organizing them into a vector, denoted as ${{\mathbf{w}}_{node}}$ and ${{\mathbf{w}}_{edge}}$, the generation matrix is denoted as $\mathbf{G}_{\mathcal{T}}^{\left( {{\mathbf{w}}_{node}},{{\mathbf{w}}_{edge}} \right)}\in {{\mathbb{R}}^{n\times n}}$, whose element ${{g}_{i,j}}$ in row $i$ and column $j$ is defined as follows,
	\begin{equation}
	{{g}_{i,j}}=\left\{ \begin{array}{*{35}{l}}
	{{w}_{i}} & ,i=j  \\
	-{{w}_{i\to {{f}_{i}}}} & ,i\to {{f}_{i}}  \\
	0 & ,\text{otherwise}  \\
	\end{array} \right..
	\label{hexp}
	\end{equation}
	\label{def:gm}
\end{definition}

As shown in Fig. \ref{fig:gm}, since $\mathcal{T}$ satisfies Descending Order of Height, the number of any non-root node $i$ in $\mathcal{T}$ is always bigger than its parent. It ensures Generation Matrix is always a lower triangular matrix. According to Def. \ref{def:gm}, there is a one-to-one mapping between arbitrary non-zero weighted hierarchical tree and $\mathbf{G}_{\mathcal{T}}^{\left( {{\mathbf{w}}_{node}},{{\mathbf{w}}_{edge}} \right)}\in {{\mathbb{R}}^{n\times n}}$, i.e., the matrix representation of the non-zero weighted hierarchical tree is unique. When we only need to describe the structure of the hierarchical tree, we can use a Generation Matrix with all weights of $1$ to represent it, i.e., $\mathbf{G}_{\mathcal{T}}^{\left( \mathbf{1},\mathbf{1} \right)}$. We call $\mathbf{G}_{\mathcal{T}}^{\left( \mathbf{1},\mathbf{1} \right)}$ \textbf{Structure Matrix}, which is abbreviated as ${{\mathbf{G}}_{\mathcal{T}}}$. If two hierarchical trees have the same structure and arrangement, the positions of non-zero elements in their Generation Matrices are always the same, which we call Similar Generation Matrices.

\begin{definition}[Similar Generation Matrices]
	
	If ${{\mathbf{G}}_{1}}$ and ${{\mathbf{G}}_{2}}$ are two Generation Matrices defined by the hierarchical trees with the same structure and arrangement (or the same tree), we call them similar Generation Matrices, which are denoted as ${{\mathbf{G}}_{1}}\sim {{\mathbf{G}}_{2}}$.
	\label{def:sgm}
\end{definition}

Every Generation Matrix from the same hierarchical tree is always similar. We can use $\mathbf{G}\sim {{\mathbf{G}}_{\mathcal{T}}}$ as sufficient to judge whether the hierarchical tree represented by $\mathbf{G}$ has the same structure as $\mathcal{T}$.

According to Def. \ref{def:kos}, ${{\mathcal{T}}^{\left( k \right)}}$ is an induced subtree composed of nodes $i$ with ${{h}_{i}}\ge k$ in $\mathcal{T}$. Under Descending Order of Height, these nodes are always ranked first. 
In Fig. \ref{fig:gm}, $k-$Order Submatrix that represents the $k-$Order Subtree is denoted as $\mathbf{G}_{{{\mathcal{T}}^{\left( k \right)}}}^{\left( {{\mathbf{w}}_{node}},{{\mathbf{w}}_{edge}} \right)}$. We can obtain it by taking the ${{n}_{k}}-$order leading principal minor of $\mathbf{G}_{\mathcal{T}}^{\left( {{\mathbf{w}}_{node}},{{\mathbf{w}}_{edge}} \right)}$ (i.e., the elements in rows and columns from $1$ to ${{n}_{k}}$). In particular, $\mathbf{G}_{{{\mathcal{T}}^{\left( 1 \right)}}}^{\left( {{\mathbf{w}}_{node}},{{\mathbf{w}}_{edge}} \right)}$ represents the sub-tree composed of non-leaf nodes of $\mathcal{T}$.

Considering a specific application, one problem we may encounter is that the nodes of the hierarchical tree are numbered but not in Descending Order of Height. Under the matrix representation, the problem is elementary to solve. We can adopt a sparse mapping matrix to convert the original number into Descending Order of Height.

\begin{definition}[Mapping Matrix]
	Given an ordered set $\mathcal{S}=\left\langle {{s}_{1}},{{s}_{2}},\ldots ,{{s}_{m}} \right\rangle $ represents the mapping relationship between integers, satisfying ${{s}_{i}}\in {{\mathbb{N}}^{+}}\wedge {{s}_{i}}\le n$, the mapping matrix defined by $\mathcal{S}$ is denoted as ${{\mathbf{H}}_{\mathcal{S}}}\in {{\left\{ 0,1 \right\}}^{m\times n}}$. The value ${{h}_{i,j}}$ in row $i$ and column $j$ of ${{\mathbf{H}}_{\mathcal{S}}}$ satisfies
	\begin{equation}
	{{h}_{i,j}}=\left\{ \begin{array}{*{35}{l}}
	1 & ,{{s}_{i}}=j  \\
	0 & ,\text{otherwise}  \\
	\end{array} \right.	 .
	\label{mmEq}
	\end{equation}
	\label{def:mpm}
\end{definition}

By the definition of the mapping matrix, $\mathcal{S}$ always represents an injection and satisfies ${{\mathbf{H}}_{\mathcal{S}}}{{\mathbf{H}}_{ \mathcal{S}}}^{T}=\mathbf{I}$. If $\mathcal{S}$ represents a bijection, ${{\mathbf{H}}_{\mathcal{S}}}$ will be a permutation matrix. When the basis vector ${{\mathbf{e}}_{i}}$ acts on ${{\mathbf{H}}_{\mathcal{S}}}$, the following equations holds.
\begin{equation}
{{\mathbf{H}}_{\mathcal{S}}}^{T}{{\mathbf{e}}_{i}}={{\mathbf{e}}_{{{s}_{i}}}},
\label{mmPro1}
\end{equation}
\begin{equation}
{{\mathbf{H}}_{\mathcal{S}}}{{\mathbf{e}}_{i}}=\left\{ \begin{array}{*{35}{l}}
{{\mathbf{e}}_{s_{i}^{-1}}} & ,i\in \mathcal{S}  \\
\mathbf{0} & ,i\notin \mathcal{S}  \\
\end{array} \right.,
\label{mmPro2}
\end{equation}
where $s_{i}^{-1}$ represents the inverse mapping of ${{s}_{i}}$, which satisfies $s_{{{s}_{i}}}^{- 1}=i$.

To describe the mapping relationship between the nodes before and after sorting by Descending Order of Height, we only need to define the ordered set $\mathcal{S}=\left\langle {{s}_{1}},{{s}_{2}},\ldots ,{{s}_{n}} \right\rangle$, where ${{s}_{i}}$ represents the sorted number of the node initially numbered $i$. Then, we can use ${{\mathbf{H}}_{\mathcal{S}}}\mathbf{v}$ to get the vector before sorting from $\mathbf{v}$ after sorting.

In addition, the mapping matrix can be used to represent other mapping relationships, such as the mapping of ${{v}_{i}}$ to ${{x}_{i}}$. For example, in Fig. \ref{fig:ht}, if we use an ordered set $\mathcal{H}=\left\langle {{\hbar }_{1}},{{\hbar }_{2}},\ldots ,{{\hbar }_{m}} \right\rangle$ to represent the mapping relationship between ${{v}_{i}}$ and ${{x}_{i}}$, then ${{\hbar }_{i}}=i+5$, and we have the mapping matrix ${{\mathbf{H}}_{\hbar }}$ to represent their mapping relationship.

\subsection{Properties of Generation Matrix}

Our research shows that Generation Matrix has many mathematical properties that deserve attention. These properties can help us solve various problems about the analysis and calculation of hierarchical trees. According to Def. \ref{def:gm}, it is not difficult to find that Generation Matrix satisfies sparsity.
\begin{property}[Sparsity]
	Considering a hierarchical tree $\mathcal{T}$ consists of $n$ nodes, $\mathbf{G}_{\mathcal{T}}^{\left( {{\mathbf{w}}_{node}},{{\mathbf{w}}_{edge}} \right)}$ is has and only has $2n-1$ non-zero elements. Its first row has only $1$ non-zero element, and the remaining $n-1$ rows have $2$ non-zero elements. 
	\label{prop:sp}
\end{property}

Due to the sparsity of Generation Matrix, we can apply various sparse matrix technologies such as COO (Coordinate Format) and CSR (Compressed Sparse Row) to calculate hierarchical tree models efficiently. Under the sparse representations, the storage and calculation of Generation Matrix are both only $O\left( n \right)$. Therefore, the application based on Generation Matrix does not cause more computational overhead. Currently, the computing technologies of sparse matrices are very mature and widely used in various high-performance platforms \cite{spm:its,spm:tlm,spm:cca}. 

One of the fundamental properties of Generation Matrices is invertibility. By solving the equations $\mathbf{G}{{_{\mathcal{T}}^{\left( {{\mathbf{w}}_{node}},{{\mathbf{w}}_{edge}} \right)}}^{T}}\mathbf{z}=\mathbf{v}$ and $\mathbf{G}_{\mathcal{T}}^{\left( {{\mathbf{w}}_{node}},{{\mathbf{w}}_{edge}} \right)}\mathbf{z}=\mathbf{v}$ about $\mathbf{z}$, we find two interesting and important mathematical properties of Generation Matrices. We collectively call them the propagation of Generation Matrix.

\begin{property}[Upward Propagation]
	Let ${{g}_{i,j}}$ denotes the element in row $i$ and column $j$ of $\mathbf{G}_{\mathcal{T}}^{\left( {{\mathbf{w}}_{node}},{{\mathbf{w}}_{edge}} \right)}$, and ${{v}_{i}}$ is the value of node $i$ of $\mathcal{T}$. Organize ${{v}_{i}}$ into vector $\mathbf{v}={{\left[ {{v}_{1}},{{v}_{2} },\ldots ,{{v}_{n}} \right]}^{T}}$, then $\mathbf{z}=\mathbf{G}{{_{\mathcal{T}}^{\left( {{\mathbf{w}}_{node}},{{\mathbf{w}}_{edge}} \right)}}^{-T}}\mathbf{v}$ is an upward propagation on $\mathbf{v}$. The value ${{z}_{i}}$ of $\mathbf{z}$ satisfies
	\begin{equation}
	{{z}_{i}}=\left\{ \begin{array}{*{35}{l}}
	{{{v}_{i}}}/{{{g}_{i,i}}}\; & ,\text{node }i\text{ is leaf}  \\
	{\left( {{v}_{i}}-\sum\limits_{j\in {{\mathcal{C}}_{i}}}{{{g}_{j,i}}{{z}_{j}}} \right)}/{{{g}_{i,i}}}\; & ,\text{otherwise}  \\
	\end{array} \right.	.
	\label{hexp}
	\end{equation}
	\label{prop:up}
\end{property}
\begin{property}[Downward Propagation]
	If $\mathbf{G}_{\mathcal{T}}^{\left( {{\mathbf{w}}_{node}},{{\mathbf{w}}_{edge}} \right)}$ and $\mathbf{v}$ have the exact definition as Prop. \ref{prop:up}, then $\mathbf{z}=\mathbf{G}{{_{\mathcal{T}}^{\left( {{\mathbf{w}}_{node}},{{\mathbf{w}}_{edge}} \right)}}}^{-1}\mathbf{v}$ is a downward propagation on $\mathbf{v}$. The value ${{z}_{i}}$ of $\mathbf{z}$ satisfies
	\begin{equation}
	{{z}_{i}}=\left\{ \begin{array}{*{35}{l}}
	{{{v}_{i}}}/{{{g}_{i,i}}}\; & ,\text{node }i\text{ is root}  \\
	{\left( {{v}_{i}}-{{g}_{i,{{f}_{i}}}}{{z}_{{{f}_{i}}}} \right)}/{{{g}_{i,i}}}\; & ,\text{otherwise}  \\
	\end{array} \right..
	\label{hexp}
	\end{equation}
	\label{prop:dp}
\end{property}

Prop. \ref{prop:up} and Prop. \ref{prop:dp} show that Generation Matrix can simulate multiple recursive operations of hierarchical trees. According them, Cor. \ref{cor:infEle} analyzes the elements affected by the propagations of Generation Matrix.

\begin{corollary}
	For the upward propagation, affected ${{z}_{j}}$ by ${{v}_{i}}$ satisfies $j=i$, or $j$ is the ancestor of $i$ in $\mathcal{T}$; for the downward propagation, affected ${{z}_{j}}$ by ${{v}_{i}}$ satisfies $j=i$, or $j$ is a descendant of $i$ in $\mathcal{T}$.
	\label{cor:infEle}
\end{corollary}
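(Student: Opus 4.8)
The plan is to read each propagation formula as a linear recursion and track, for every node $j$, precisely which entries of $\mathbf{v}$ can enter the expression for $z_j$; the statement then follows by contraposition. Since the maps $\mathbf{z}=\mathbf{G}{{_{\mathcal{T}}^{\left( {{\mathbf{w}}_{node}},{{\mathbf{w}}_{edge}} \right)}}}^{-T}\mathbf{v}$ and $\mathbf{z}=\mathbf{G}{{_{\mathcal{T}}^{\left( {{\mathbf{w}}_{node}},{{\mathbf{w}}_{edge}} \right)}}}^{-1}\mathbf{v}$ are linear, the phrase ``$z_j$ is affected by $v_i$'' means precisely that the coefficient of $v_i$ in $z_j$ is nonzero, i.e. that $v_i$ lies in the support of the expression for $z_j$. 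Equivalently, the set of $z_j$ affected by a fixed $v_i$ is the support of the $i$-th column of the relevant inverse, so the corollary amounts to locating the nonzero pattern of that inverse; I find it cleanest, however, to argue directly from the recursions of Prop. \ref{prop:up} and Prop. \ref{prop:dp}.

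For the upward case I would prove the dual inclusion: for every node $j$, the value $z_j$ depends only on $\left\{ v_k : k=j \text{ or } k \text{ is a descendant of } j \right\}$, i.e. on the $v$-values of the subtree rooted at $j$. The argument is a bottom-up induction on the height $h_j$. In the base case $j$ is a leaf and Prop. \ref{prop:up} gives $z_j=v_j/g_{j,j}$, which involves only $v_j$. For the inductive step, every child $k\in {{\mathcal{C}}_{j}}$ has $h_k<h_j$, so by the induction hypothesis $z_k$ depends only on the subtree rooted at $k$; feeding these into $z_j=\left( v_j-\sum_{k\in {{\mathcal{C}}_{j}}}g_{k,j}z_k \right)/g_{j,j}$ shows $z_j$ depends only on $v_j$ together with the union of those subtrees, which is exactly the subtree rooted at $j$. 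Taking the contrapositive, $v_i$ can affect $z_j$ only when $i$ lies in the subtree rooted at $j$, i.e. when $j=i$ or $j$ is an ancestor of $i$.

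The downward case is symmetric, replacing ``subtree rooted at $j$'' by ``the root-to-$j$ path'' and reversing the induction order. Here I would show that $z_j$ depends only on $\left\{ v_k : k=j \text{ or } k \text{ is an ancestor of } j \right\}$ by induction on the depth of $j$. The base case is the root, where Prop. \ref{prop:dp} gives $z_j=v_j/g_{j,j}$. For the step, the parent ${{f}_{j}}$ has strictly smaller depth, so by hypothesis $z_{{{f}_{j}}}$ depends only on ${{f}_{j}}$ together with its ancestors; substituting into $z_j=\left( v_j-g_{j,{{f}_{j}}}z_{{{f}_{j}}} \right)/g_{j,j}$ adjoins only $v_j$, giving the whole root-to-$j$ path. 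Contrapositively, $v_i$ affects $z_j$ only when $i$ lies on that path, i.e. when $j=i$ or $j$ is a descendant of $i$.

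No single step is a genuine obstacle; the only points requiring care are choosing the right induction order in each case (ascending height for the upward recursion, ascending depth for the downward one, both compatible with the Descending Order of Height numbering so that each $z$-value is defined before it is used), and observing that the corollary asserts only the necessary (``only if'') direction, bounding the set of affected $z_j$ without claiming that every such node is genuinely affected. One could upgrade this to an exact characterization using $g_{i,i}=w_i\ne 0$ and $g_{i,{{f}_{i}}}=-w_{i\to {{f}_{i}}}\ne 0$ to rule out accidental cancellation, but that refinement is not needed for the statement as given.
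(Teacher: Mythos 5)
Your proof is correct and takes essentially the paper's route: the paper states Cor.~\ref{cor:infEle} without a separate proof, presenting it as an immediate consequence of the recursions in Prop.~\ref{prop:up} and Prop.~\ref{prop:dp}, and your bottom-up (height) and top-down (depth) inductions are exactly the formalization of that dependence-tracking argument. Your two cautionary remarks are also on point---the corollary asserts only the ``only if'' inclusion, and the nonzero node/edge weights ($g_{i,i}=w_i\ne 0$, $g_{i,f_i}=-w_{i\to f_i}\ne 0$) would indeed be what upgrades it to an exact support characterization, consistent with the pattern later recorded for ${{\mathbf{G}}_{\mathcal{T}}}^{-1}$ in Prop.~\ref{prop:inv}.
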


Combining the propagations, we further study a variety of matrix operations of Generation Matrices. They have strong interpretability and provide crucial theoretical support for the research of hierarchical trees.

\begin{property}
	Let the vector $\mathbf{z}=\left( \mathbf{I}-{{\mathbf{G}}_{\mathcal{T}}}^{T} \right)\mathbf{1}$, then the $i$-th element ${{z}_{i}}$ of $\mathbf{z}$ represents the number of children of the node $i$, i.e., ${{z}_{i}}=\left| {{\mathcal{C}}_{i}} \right|$.
	\label{prop:chn}
\end{property}

\begin{property}
	If the vector $\mathbf{z}={{\mathbf{G}}_{\mathcal{T}}}^{-T}\mathbf{1}$,  the $i$-th element ${{z}_{i}}$ of $\mathbf{z}$ represents the number of nodes contained in the subtree rooted at node $i$.
	\label{prop:stn}
\end{property}

\begin{property}
	Let the vector $\mathbf{z}={{\mathbf{G}}_{\mathcal{T}}}^{-1}\mathbf{1}$, then the $i$-th element ${{z}_{i}}$  of $\mathbf{z}$ represents the depth of node $i$, where the depth of the root is $1$.
	\label{prop:depn}
\end{property}

The properties above indicate that Generation Matrix is an effective and easy-to-use tool for various hierarchical tree analyses. In addition to the conclusions about vectors discussed above, there are some conclusions about matrices as follows. Compared with conclusions about vectors, they focus on describing the characteristics between nodes.

\begin{property}
	Let $g_{ij}^{\left( -1 \right)}$ denote the element in row $i$ and column $j$ of ${{\mathbf{G}}_{\mathcal{T}}}^{-1}$,  then $g_{ij}^{\left( -1 \right)}$ satisfies
	\begin{equation}
	g_{ij}^{\left( -1 \right)}=\left\{ \begin{array}{*{35}{l}}
	1 & ,i=j\vee j\text{ is an ancestor of }i  \\
	0 & ,\text{otherwise}  \\
	\end{array} \right..
	\label{hexp}
	\end{equation}
	\label{prop:inv}
\end{property}
\begin{proof}
	See Appendix  A.1.
\end{proof}

Prop. \ref{prop:inv} shows that ${{\mathbf{G}}_{\mathcal{T}}}^{-1}$ is a matrix indicating the relationship between ancestors and descendants. Although ${{\mathbf{G}}_{\mathcal{T}}}^{-1}$ is denser than ${{\mathbf{G}}_{\mathcal{T}}}$, in most cases, ${{\mathbf{G}}_{\mathcal{T}}}^{-1}$ is still sparse.

\begin{property}
	Let $\mathbf{M}={{\mathbf{G}}_{\mathcal{T}}}{{\mathbf{G}}_{\mathcal{T}}}^{T}$ and ${{m}_{i,j}}$ denote the element in row $i$ and column $j$ of $\mathbf{M}$. Except for ${{m}_{11}}=1$, other elements satisfy ``${{m}_{ij}}=1$ $\Leftrightarrow$ $i$ and $j$ are sibling nodes''.
	\label{prop:sib}
\end{property}

\begin{property}
	Let $\mathbf{M}={{\left( {{\mathbf{G}}_{\mathcal{T}}}^{T}{{\mathbf{G}}_{\mathcal{T}}} \right)}^{-1}}$ and ${{m}_{ij}}$ is the element in row $i$ and column $j$ of $\mathbf{M}$, then the value of ${{m}_{ij}}$ represents the number of common ancestors of the node pair $i$ and $j$, and ${{m}_{ii}}$ represents the depth of $i$.
	\label{prop:can}
\end{property}
\begin{proof}
	See Appendix A.2.
\end{proof}

Similar to Prop. \ref{prop:inv}, Prop. \ref{prop:sib} can also be used as an indicator matrix to describe the relationship between nodes. Prop. \ref{prop:can} is an important property, which describes an effective method of calculating common ancestors. As an essential feature to describe the correlation between nodes, the number of common ancestors has an important application value for hierarchical tree analyses\cite{mxp:tam}.

In the study of spectral graph theory, feature analysis is usually indispensable. Our research shows that the eigenvalues and eigenvectors of a Generation Matrix satisfy the following properties.

\begin{property}[Eigenvalues and Eigenvectors]
	Let $\boldsymbol{\lambda }={{\left[ {{\lambda }_{1}},{{\lambda }_{2}},\ldots ,{{\lambda }_{n}} \right]}^{T}}$ denote the eigenvalues of $\mathbf{G}_{\mathcal{T}}^{\left( {{\mathbf{w}}_{node}},{{\mathbf{w}}_{edge}} \right)}$, then the $i$-th eigenvalue${{\lambda }_{i}}$ is $w_{i}$.
	
	Let the left eigenvector and the right eigenvector corresponding to the $i$-th eigenvalue denote as ${{\mathbf{u}}_{i}}$ and ${{\mathbf{v}}_{i}}$, respectively. The premise of the existence of ${{\mathbf{u}}_{i}}$ is that each ancestor $j$ of $i$ satisfies $ w_{i}\ne w_{j}$, and the premise of the existence of ${{\mathbf{v}}_{i}}$ is that each descendant $j$ of $i$ satisfies $ w_{i}\ne w_{j}$.
	
	Let the $j$-th element of ${{\mathbf{u}}_{i}}$ and ${{\mathbf{v}}_{i}}$ denote as $u_{j}^{\left( i \right)}$ and $v_{j}^{\left( i \right)}$, respectively. If ${{\mathbf{u}}_{i}}$ exists, then $u_{j}^{\left( i \right)}=0$ for any $j>i$. Let $u_{i}^{\left( i \right)}=1$. The remaining elements $u_{j}^{\left( i \right)}\left( j<i \right)$ can be obtained by 
	\begin{equation}
	u_{j}^{\left( i \right)}=\left\{ \begin{array}{*{35}{l}}
	\frac{\sum\nolimits_{k\in {{\mathcal{C}}_{j}}}{w_{k\to j}u_{k}^{\left( i \right)}}}{w_{j}-w_{i}} & ,j\text{ is a ancestor of }i  \\
	0 & ,\text{otherwise}  \\
	\end{array} \right..
	\label{hexp}
	\end{equation}
	If ${{\mathbf{v}}_{i}}$ exists, then $v_{j}^{\left( i \right)}=0$ for any $j<i$. Let $v_{i}^{\left( i \right)}=1$. The remaining elements $v_{j}^{\left( i \right)}\left( j>i \right)$ can be obtained by
	\begin{equation}
	v_{j}^{\left( i \right)}=\left\{ \begin{array}{*{35}{l}}
	\frac{w_{j\to {{f}_{j}}}v_{{{f}_{j}}}^{\left( k \right)}}{w_{j}-w_{i}} & ,j\text{ is a descendant of }i  \\
	0 & ,\text{otherwise}  \\
	\end{array} \right..
	\label{hexp}
	\end{equation}
	
	\label{prop:eig}
\end{property}

Prop. \ref{prop:eig} shows that the eigenvalues and eigenvectors of the Generation Matrix have many interesting properties. For example, the eigenvalue of Generation Matrix is the weights of the nodes, which is much easier to solve than other matrix representations; the eigenvectors also satisfy some propagation properties similar to Prop. \ref{prop:up} and \ref{prop:dp}. Notably, the eigenvectors are conditional, which means that Generation Matrix is not always diagonalizable. Some Generation Matrices, especially the eigenvectors of ${{\mathbf{G}}_{\mathcal{T}}}$, still have many problems waiting to be studied. Although feature analysis is not the main focus in this paper, Prop. \ref{prop:eig} still provides some valuable references for the subsequent research works.

Considering the relationship between $\mathbf{G}_{\mathcal{T}}^{\left( {{\mathbf{w}}_{node}},{{\mathbf{w}}_{edge}} \right)}$ and corresponding ${{\mathbf{G}}_{\mathcal{T}}}$, we find that $\mathbf{G}_{\mathcal{T}}^{\left( {{\mathbf{w}}_{node}},{{\mathbf{w}}_{edge}} \right)}$ satisfies a particular  decomposition form, which we call the diagonal decomposition of Generation Matrix.

\begin{property} [Diagonal Decomposition]
	Given arbitrary $\mathbf{G}_{\mathcal{T}}^{\left( {{\mathbf{w}}_{node}},{{\mathbf{w}}_{edge}} \right)}$, there is always a pair of vectors $\boldsymbol{\alpha }, \boldsymbol{\beta }\in {{\mathbb{R}}^{n}}$, making the following decomposition hold for $\mathbf{G}_{\mathcal{T}}^{\left( {{\mathbf{w}}_{node}},{{\mathbf{w}}_{edge}} \right)}$ and the structure matrix ${{\mathbf{G}}_{\mathcal{T}}}$.
	\begin{equation}
	\mathbf{G}_{\mathcal{T}}^{\left( {{\mathbf{w}}_{node}},{{\mathbf{w}}_{edge}} \right)}=\operatorname{diag}\left( \boldsymbol{\beta } \right){{\mathbf{G}}_{\mathcal{T}}}\operatorname{diag}\left( \boldsymbol{\alpha } \right)
	\label{eq:ddEq}
	\end{equation}
	Let ${{\alpha }_{i}}$ and ${{\beta }_{i}}$ denote the $i$-th element of them, respectively. Then a pair of legal $\boldsymbol{\alpha }$ and $\boldsymbol{\beta }$ can be obtained by
	\begin{equation}
	\left\{ \begin{aligned}
	& \boldsymbol{\alpha }=\exp \left( {{\mathbf{G}}_{\mathcal{T}}}^{-1}\left( \ln \left( {{\mathbf{w}}_{node}} \right)-\ln \left( {{\mathbf{w}}_{edge}} \right) \right) \right) \\ 
	& \boldsymbol{\beta }={{\mathbf{w}}_{node}}\oslash \boldsymbol{\alpha } \\ 
	\end{aligned} \right..
	\label{eq:svab}
	\end{equation}
	Where ``$\oslash$'' denote the element-wise division of vectors. Note, since the root numbered $1$ has no parent, we set ${{w}_{1\to \varnothing }}=1$ as the first element of ${{\mathbf{w}}_{edge}}$, and ${{w}_{i\to {{f}_{i}}}}$ is the $i$-th element of ${{\mathbf{w}}_{edge}}$ in the remaining elements.
	\label{prop:dd}
\end{property}
\begin{proof}
	See Appendix  A.3.
\end{proof}

By the diagonal decomposition of Generation Matrix, we can express arbitrary $\mathbf{G}_{\mathcal{T}}^{\left( {{\mathbf{w}}_{node}},{{\mathbf{w}}_{edge}} \right)}$ as an expression with ${{\mathbf{G}}_{\mathcal{T}}}$. Using Prop. \ref{prop:dd}, we can extend some mathematical properties of ${{\mathbf{G}}_{\mathcal{T}}}$ to $\mathbf{G}_{\mathcal{T}}^{\left( {{\mathbf{w}}_{node}},{{\mathbf{w}}_{edge}} \right)}$ to solve more problems effectively.

\subsection{The Conversion between Generation Matrix and Other Matrix Representations}

\begin{figure}[h]
	\centering
	\includegraphics[width=5in,trim=0 185 0 0,clip]{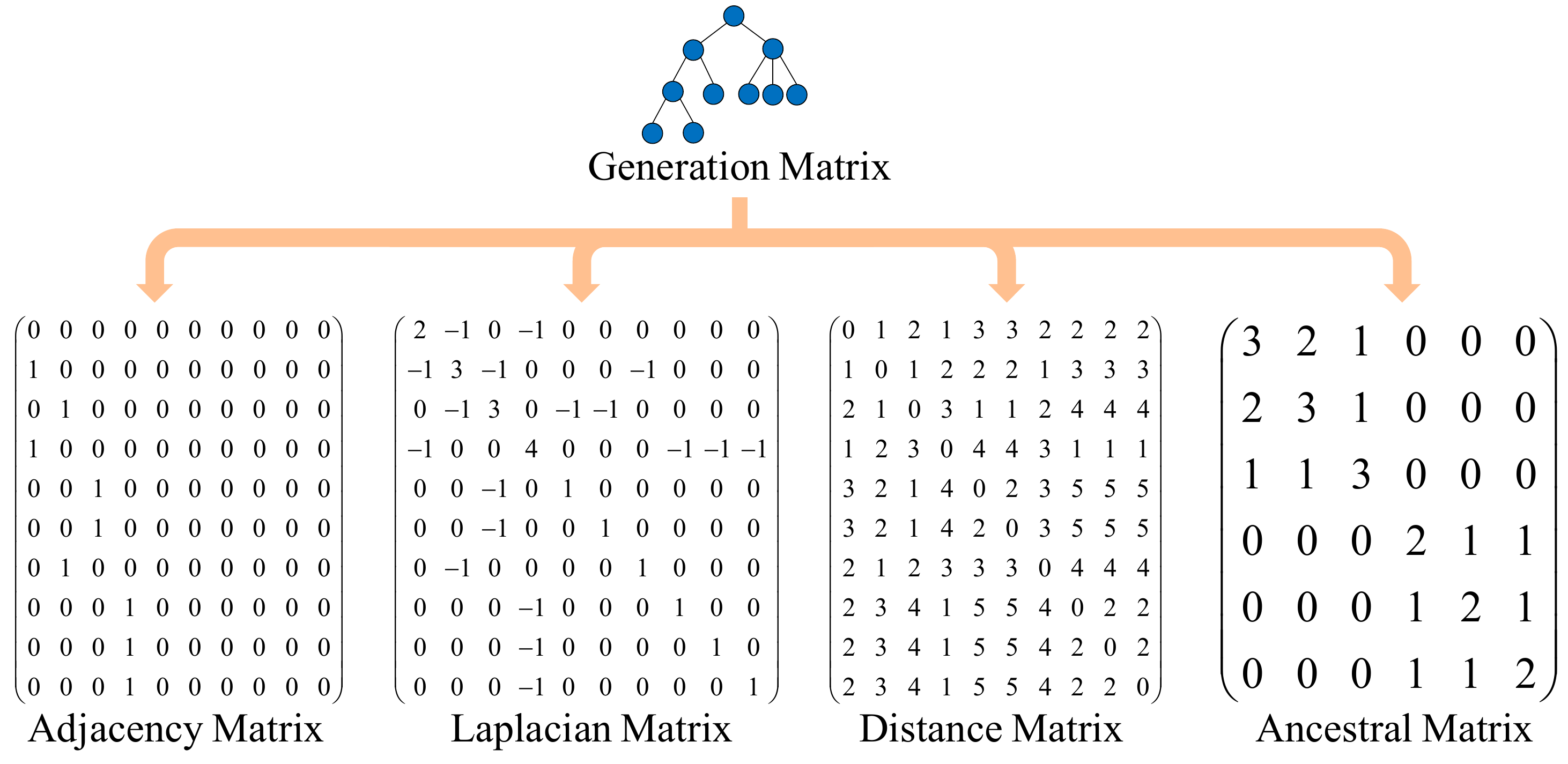}
	\caption{Conversion Relationships from Generation Matrix to Other Matrix Representations}
	\label{fig:cvr}
\end{figure}

Our research shows that Generation Matrix is not an isolated matrix representation from others. Through proper operations, we can convert Generation Matrix into other matrix representations. Fig. \ref{fig:cvr} shows the four matrix representations that can be transformed by the Generation Matrix constructed by the hierarchical tree in Fig. \ref{fig:ht}, including Adjacency Matrix, Laplacian Matrix, Distance Matrix, and Ancestor Matrix.

\begin{theorem}
	Let ${{\mathbf{A}}_{\mathcal{T}}}$ be Adjacency Matrix of $\mathcal{T}$, then ${{\mathbf{A}}_{\mathcal{T}}}$ can be obtained by the following expression of ${{\mathbf{G}}_{\mathcal{T}}}$:	
	\begin{equation}
	{{\mathbf{A}}_{\mathcal{T}}}=\mathbf{I}-{{\mathbf{G}}_{\mathcal{T}}}.
	\label{hexp}
	\end{equation}
	\label{thm:cam}
\end{theorem}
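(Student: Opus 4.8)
The plan is to prove the identity by a direct entrywise comparison of the two matrices, which is the most transparent route since both $\mathbf{A}_{\mathcal{T}}$ and $\mathbf{G}_{\mathcal{T}}$ are defined cellwise. First I would pin down the adjacency convention consistent with the notation $i\to j$ (meaning $j=f_i$): the Adjacency Matrix of the rooted tree $\mathcal{T}$ has $\left( \mathbf{A}_{\mathcal{T}} \right)_{i,j}=1$ exactly when there is an edge joining $i$ to its parent, i.e.\ when $j=f_i$, and $0$ otherwise; in particular its diagonal is zero, since $\mathcal{T}$ has no self-loops. Fixing this orientation is the one step that needs care, because the nonzero off-diagonal entries of $\mathbf{G}_{\mathcal{T}}$ sit precisely at the child-to-parent positions and carry a $-1$, and the whole identity hinges on those positions coinciding with the $1$ entries of $\mathbf{A}_{\mathcal{T}}$.

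Next I would expand the right-hand side entrywise. Writing $g_{i,j}$ for the entries of the Structure Matrix $\mathbf{G}_{\mathcal{T}}=\mathbf{G}_{\mathcal{T}}^{\left( \mathbf{1},\mathbf{1} \right)}$, Def.~\ref{def:gm} with all weights set to $1$ gives $g_{i,i}=1$, $g_{i,j}=-1$ when $j=f_i$, and $g_{i,j}=0$ otherwise. Hence, denoting by $\delta_{ij}$ the Kronecker delta,
\begin{equation}
\left( \mathbf{I}-{{\mathbf{G}}_{\mathcal{T}}} \right)_{i,j}=\delta_{ij}-g_{i,j}.
\end{equation}
I would then split according to the three cases defining $g_{i,j}$: on the diagonal $1-1=0$; on the parent positions $j=f_i$ one obtains $0-\left( -1 \right)=1$; and elsewhere $0-0=0$.

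Finally I would observe that these three outcomes reproduce exactly the entries of $\mathbf{A}_{\mathcal{T}}$ fixed in the first step, so the two matrices agree cell by cell and the identity $\mathbf{A}_{\mathcal{T}}=\mathbf{I}-\mathbf{G}_{\mathcal{T}}$ follows. There is no genuine obstacle here beyond this bookkeeping: subtracting $\mathbf{G}_{\mathcal{T}}$ from the identity simply erases the unit diagonal and flips the sign of the off-diagonal entries, which is precisely the operation that converts the Structure Matrix into the Adjacency Matrix. I would keep the argument short and purely computational, noting only that the lower-triangular shape of $\mathbf{G}_{\mathcal{T}}$ under Descending Order of Height (Def.~\ref{def:doh}) guarantees the resulting $\mathbf{A}_{\mathcal{T}}$ is itself lower triangular, consistent with edges oriented from each child toward its parent.
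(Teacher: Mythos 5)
Your entrywise verification is correct, and it is essentially the only argument available: the paper states Theorem~1 without proof precisely because it is immediate from Definition~4 of the Generation Matrix, which is the computation you carry out. Your one point of genuine care — fixing the child-to-parent (directed, lower-triangular) adjacency convention rather than the symmetric undirected one — is exactly the right observation, and it matches the convention the paper implicitly uses, since $\mathbf{I}-\mathbf{G}_{\mathcal{T}}$ is lower triangular and could not equal a symmetric adjacency matrix.
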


\begin{theorem}
	Let ${{\mathbf{L}}_{\mathcal{T}}}$ be Laplacian Matrix of $\mathcal{T}$, then ${{\mathbf{L}}_{\mathcal{T}}}$ can be obtained by the following expression of ${{\mathbf{G}}_{\mathcal{T}}}$:	
	\begin{equation}
	{{\mathbf{L}}_{\mathcal{T}}}={{\mathbf{G}}_{\mathcal{T}}}^{T}{{\mathbf{G}}_{\mathcal{T}}}-{{\mathbf{e}}_{1}}{{\mathbf{e}}_{1}}^{T}.
	\label{exp:gclm}
	\end{equation}
	\label{thm:clm}
\end{theorem}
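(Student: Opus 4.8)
The plan is to verify the identity entrywise by directly expanding the Gram matrix $\mathbf{G}_{\mathcal{T}}^{T}\mathbf{G}_{\mathcal{T}}$ and comparing it against the standard Laplacian $\mathbf{L}_{\mathcal{T}}=\mathbf{D}_{\mathcal{T}}-\mathbf{A}_{\mathcal{T}}$, where $\mathbf{A}_{\mathcal{T}}$ is the symmetric adjacency matrix of the underlying undirected tree and $\mathbf{D}_{\mathcal{T}}$ is the diagonal matrix of vertex degrees. By Def. \ref{def:gm} with unit weights, column $i$ of $\mathbf{G}_{\mathcal{T}}$ carries a single $+1$ in row $i$ together with a $-1$ in every row $k$ with $k\in\mathcal{C}_{i}$, since $g_{k,i}=-1$ holds exactly when $i=f_{k}$. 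I would record this column structure first, because every entry of the product is a sum of products of these $\pm1$ values.

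Next I would compute $(\mathbf{G}_{\mathcal{T}}^{T}\mathbf{G}_{\mathcal{T}})_{i,j}=\sum_{k}g_{k,i}g_{k,j}$. On the diagonal ($i=j$) the term $k=i$ contributes $1$ and each child $k\in\mathcal{C}_{i}$ contributes $(-1)^{2}=1$, giving $1+|\mathcal{C}_{i}|$; the child count here is exactly the quantity read off in Prop. \ref{prop:chn}. For an off-diagonal entry ($i\neq j$) a term survives only when a single index $k$ makes both factors nonzero, and a short case split shows the only possibilities are $k=i$ with $i\in\mathcal{C}_{j}$ (so $j=f_{i}$) or $k=j$ with $j\in\mathcal{C}_{i}$ (so $i=f_{j}$), each contributing $-1$; the ``common child'' case is impossible because every node has a unique parent. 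Hence the off-diagonal entry equals $-1$ precisely when $i$ and $j$ are parent and child, i.e.\ adjacent in the tree, and $0$ otherwise, which is exactly $-\mathbf{A}_{\mathcal{T}}$.

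Comparing the two matrices, the off-diagonal parts agree immediately, and for every non-root node the diagonal $1+|\mathcal{C}_{i}|$ already coincides with the degree $\deg(i)=|\mathcal{C}_{i}|+1$. The sole discrepancy sits at the root: having no parent, it satisfies $\deg(1)=|\mathcal{C}_{1}|$, yet the product still returns $1+|\mathcal{C}_{1}|$, an excess of exactly one in entry $(1,1)$. Subtracting the rank-one matrix $\mathbf{e}_{1}\mathbf{e}_{1}^{T}$ removes this surplus and leaves $\mathbf{D}_{\mathcal{T}}-\mathbf{A}_{\mathcal{T}}=\mathbf{L}_{\mathcal{T}}$, establishing \eqref{exp:gclm}. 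I expect the main obstacle to be conceptual rather than computational: the uniform $+1$ arising from each node's own diagonal weight inflates every diagonal entry by one, and this overcounts precisely the root, the unique vertex lacking a parent edge. Recognizing that the correction is exactly $\mathbf{e}_{1}\mathbf{e}_{1}^{T}$, and that the descending-height numbering forces the root to be node $1$, is the crux of the argument.
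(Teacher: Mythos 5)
Your proposal is correct and follows essentially the same route as the paper's own proof: both expand $(\mathbf{G}_{\mathcal{T}}^{T}\mathbf{G}_{\mathcal{T}})_{ij}=\sum_{k}g_{k,i}g_{k,j}$ entrywise using the fact that column $i$ is supported on $\{i\}\cup\mathcal{C}_{i}$, split into the cases $i=j=1$, $i=j>1$, parent--child pairs, and non-adjacent pairs (where the intersection of supports is empty, your ``common child'' observation), and identify the $\mathbf{e}_{1}\mathbf{e}_{1}^{T}$ term as correcting the root's missing parent edge. No gaps; your closing remark that the descending-height ordering places the root at index $1$ matches the paper's convention exactly.
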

\begin{theorem}
	Let ${{\mathbf{D}}_{\mathcal{T}}}$ be Distance Matrix of $\mathcal{T}$, then ${{\mathbf{D}}_{\mathcal{T}}}$ can be obtained by the following expression of ${{\mathbf{G}}_{\mathcal{T}}}$:	
	\begin{equation}
	{{\mathbf{D}}_{\mathcal{T}}}={{\mathbf{G}}_{\mathcal{T}}}^{-1}\mathbf{I}{{\mathbf{I}}^{T}}+\mathbf{I}{{\mathbf{I}}^{T}}{{\mathbf{G}}_{\mathcal{T}}}^{-T}-2{{\left( {{\mathbf{G}}_{\mathcal{T}}}^{T}{{\mathbf{G}}_{\mathcal{T}}} \right)}^{-1}}.
	\label{exp:gcdm}
	\end{equation}
	\label{thm:cdm}
\end{theorem}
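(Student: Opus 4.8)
The plan is to prove the identity entrywise, reducing the matrix equation to a statement about a single pair of nodes $(i,j)$ and then recognizing each of the three summands through the vector/matrix properties already established. Here $\mathbf{I}\mathbf{I}^{T}$ is the outer product of the all-ones vector (written $\mathbf{1}$ in Prop.~\ref{prop:chn}--\ref{prop:depn}), so I would first fix the convention that the $(i,j)$ entry of ${{\mathbf{D}}_{\mathcal{T}}}$ is the length (number of edges) of the unique path joining nodes $i$ and $j$, and then compute the $(i,j)$ entry of the right-hand side term by term.

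For the first term, Prop.~\ref{prop:depn} tells us that ${{\mathbf{G}}_{\mathcal{T}}}^{-1}\mathbf{1}$ is exactly the depth vector $\mathbf{d}$, whose $i$-th component is the depth of node $i$ (root at depth $1$). Right-multiplying by $\mathbf{1}^{T}$ broadcasts this vector across all columns, so the $(i,j)$ entry of ${{\mathbf{G}}_{\mathcal{T}}}^{-1}\mathbf{1}\mathbf{1}^{T}$ is $\mathrm{depth}(i)$. The second term is the transpose of the first, hence its $(i,j)$ entry is $\mathrm{depth}(j)$. For the third term, Prop.~\ref{prop:can} identifies the $(i,j)$ entry of ${{\left( {{\mathbf{G}}_{\mathcal{T}}}^{T}{{\mathbf{G}}_{\mathcal{T}}} \right)}^{-1}}$ as the number of common ancestors of $i$ and $j$ (with the self-ancestor convention, so that the diagonal recovers the depth). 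Assembling the three pieces gives
\begin{equation}
{{\left( {{\mathbf{D}}_{\mathcal{T}}} \right)}_{ij}}=\mathrm{depth}(i)+\mathrm{depth}(j)-2\left| \mathrm{Anc}(i)\cap \mathrm{Anc}(j) \right|,
\label{eq:planEntry}
\end{equation}
where $\mathrm{Anc}(i)$ denotes the ancestor set of $i$ including $i$ itself.

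The remaining and conceptually central step is the combinatorial identity showing that the right-hand side of \eqref{eq:planEntry} equals the tree distance. The key observation is that $\mathrm{Anc}(i)$ and $\mathrm{Anc}(j)$ are the root-to-$i$ and root-to-$j$ paths, so their intersection is precisely the root-to-$w$ path, where $w$ is the lowest common ancestor of $i$ and $j$; consequently $\left| \mathrm{Anc}(i)\cap \mathrm{Anc}(j) \right|=\mathrm{depth}(w)$. Since the unique $i$--$j$ path climbs from $i$ up to $w$ and descends to $j$, its edge length is $\left(\mathrm{depth}(i)-\mathrm{depth}(w)\right)+\left(\mathrm{depth}(j)-\mathrm{depth}(w)\right)$, which matches \eqref{eq:planEntry} exactly. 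The diagonal case $i=j$ is immediate, since $w=i$ forces the expression to vanish.

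I expect the main obstacle to be the lowest-common-ancestor argument in the last paragraph rather than the algebra: once Prop.~\ref{prop:depn} and Prop.~\ref{prop:can} are invoked, the three terms fall out mechanically, but one must argue carefully that the ancestor-set intersection is exactly the root-to-LCA path and that its cardinality equals $\mathrm{depth}(w)$ under the self-ancestor convention, so that the counting in \eqref{eq:planEntry} aligns with the path-through-LCA decomposition without an off-by-one discrepancy. Verifying the depth normalization (root at depth $1$) is consistent across both properties is the subtle point to check.
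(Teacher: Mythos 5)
Your proof is correct and takes essentially the same route as the paper's own argument: both verify the identity entrywise, read off the three terms via Prop.~\ref{prop:depn} and Prop.~\ref{prop:can}, and conclude with the path-through-LCA decomposition $d_{ij}=\left(\mathrm{depth}(i)-\mathrm{depth}(w)\right)+\left(\mathrm{depth}(j)-\mathrm{depth}(w)\right)$, the paper writing this as $d_{ij}=(d_1+d_c)+(d_2+d_c)-2d_c$ with $c$ the nearest common ancestor. Your explicit justification that $\mathrm{Anc}(i)\cap\mathrm{Anc}(j)$ is exactly the root-to-LCA path (so its cardinality is $\mathrm{depth}(w)$ under the root-depth-$1$ convention) only makes precise a step the paper takes for granted.
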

\begin{theorem}
	Let ${{\mathbf{C}}_{\mathcal{T}}}$ be Ancestral Matrix\cite{mxp:tam} of $\mathcal{T}$, then ${{\mathbf{C}}_{\mathcal{T}}}$ can be obtained by the following expression of ${{\mathbf{G}}_{\mathcal{T}}}$:	
	\begin{equation}
	{{\mathbf{C}}_{\mathcal{T}}}={{\mathbf{H}}_{\mathcal{H}}}{{\left( {{\mathbf{G}}_{\mathcal{T}}}^{T}{{\mathbf{G}}_{\mathcal{T}}} \right)}^{-1}}{{\mathbf{H}}_{\mathcal{H}}}^{T}-1.
	\label{exp:gccm}
	\end{equation}
	\label{thm:ccm}
\end{theorem}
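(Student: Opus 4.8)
The plan is to show that the conjugated matrix $\mathbf{H}_{\mathcal{H}}(\mathbf{G}_{\mathcal{T}}^{T}\mathbf{G}_{\mathcal{T}})^{-1}\mathbf{H}_{\mathcal{H}}^{T}$ is exactly the leaf-indexed table of common-ancestor counts, and then to bridge the gap between ``common ancestors'' and ``common edges'' with a single off-by-one identity. First I would invoke Prop. \ref{prop:can}: writing $\mathbf{M}=(\mathbf{G}_{\mathcal{T}}^{T}\mathbf{G}_{\mathcal{T}})^{-1}$, its entry $m_{ij}$ is the number of common ancestors of nodes $i$ and $j$ (with each node counted as an ancestor of itself, so that $m_{ii}$ is the depth of $i$). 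The only remaining structural ingredient is the extraction carried out by $\mathbf{H}_{\mathcal{H}}$.

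Second I would compute the conjugation $\mathbf{H}_{\mathcal{H}}\mathbf{M}\mathbf{H}_{\mathcal{H}}^{T}$ entrywise. Using \eqref{mmPro1} with $\mathcal{S}=\mathcal{H}$ gives $\mathbf{H}_{\mathcal{H}}^{T}\mathbf{e}_{j}=\mathbf{e}_{\hbar_{j}}$, so that $(\mathbf{H}_{\mathcal{H}}\mathbf{M}\mathbf{H}_{\mathcal{H}}^{T})_{ij}=\mathbf{e}_{\hbar_{i}}^{T}\mathbf{M}\mathbf{e}_{\hbar_{j}}=m_{\hbar_{i},\hbar_{j}}$. Since $\mathcal{H}$ lists the node numbers of the $m$ leaves, this says the $(i,j)$ entry equals the number of common ancestors of the $i$-th and $j$-th leaves, i.e. the conjugation selects the principal submatrix of $\mathbf{M}$ on the leaf indices, which is the correct $m\times m$ size to match $\mathbf{C}_{\mathcal{T}}$.

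The key step, and the only genuinely combinatorial one, is to convert common-ancestor counts into common-edge counts. For two leaves $p$ and $q$ (possibly equal), let $a$ be their lowest common ancestor; the set of common ancestors is exactly the nodes on the root-to-$a$ path, so its cardinality equals the depth $d$ of $a$ (root at depth $1$). The edges shared by the two root-paths are precisely the edges of this same root-to-$a$ path, of which there are $d-1$. Hence the number of common edges equals (number of common ancestors) $-\,1$ in every case, including the diagonal $p=q$, where $a$ is the leaf itself and the shared path is the leaf's entire root-path. This identity is what produces the additive $-1$ in \eqref{exp:gccm}, and I would argue it carefully for both the distinct-leaf and diagonal cases so that the convention that a node is its own ancestor (used in Prop. \ref{prop:can}) lines up with the edge count.

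Finally I would assemble the pieces: subtracting the all-ones matrix (written ``$-1$'' in \eqref{exp:gccm}) from $\mathbf{H}_{\mathcal{H}}\mathbf{M}\mathbf{H}_{\mathcal{H}}^{T}$ lowers every entry by one, turning common-ancestor counts into common-edge counts, which is by definition $\mathbf{C}_{\mathcal{T}}$. The main obstacle I anticipate is not the linear algebra but pinning down the exact convention of the Ancestral Matrix of \cite{mxp:tam} (whether the diagonal counts the full root-path and whether a node is its own ancestor) and checking that the off-by-one is consistent with Prop. \ref{prop:can}; once that bookkeeping is fixed, the result follows directly from the mapping-matrix identities.
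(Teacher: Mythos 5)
Your proposal is correct and follows essentially the same route as the paper's proof: invoking Prop.~\ref{prop:can} to interpret the entries of ${{\left( {{\mathbf{G}}_{\mathcal{T}}}^{T}{{\mathbf{G}}_{\mathcal{T}}} \right)}^{-1}}$ as common-ancestor counts (the depth of the lowest common ancestor), extracting the leaf-indexed principal submatrix via ${{\mathbf{H}}_{\mathcal{H}}}$, and subtracting $1$ to pass from ancestor counts to the Ancestral Matrix's edge-based convention. Your write-up is in fact more careful than the paper's, which states the leaf extraction and the off-by-one without spelling out the conjugation identity $({{\mathbf{H}}_{\mathcal{H}}}\mathbf{M}{{\mathbf{H}}_{\mathcal{H}}}^{T})_{ij}=m_{{{\hbar }_{i}},{{\hbar }_{j}}}$ or the diagonal case, but the underlying argument is identical.
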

\begin{proof}
	The proofs of Thm. 2-4 in Appendix A.4 to A.6.
\end{proof}

It can be seen from the theorems above that we can convert Generation Matrix into other matrix representations by simple matrix expressions. However, the reverse is not easy. Except for Adjacency Matrix, other matrix representations cannot be directly converted back to Generation Matrix. Therefore, we can use Generation Matrix to construct other matrix representations. Besides, it also implies that the theories of Generation Matrix have a particular internal connection with the matrix represented. We can combine the theories of Generation Matrix and other matrix representations to solve more problems about hierarchical trees. 

\section{The Application on Differentially Private Hierarchical Tree Release}
\subsection{Hierarchical Tree Release Based on Generation Matrix}
In this section, we introduce how to apply Generation Matrix to efficiently and concisely achieve an optimally consistent release on differentially private hierarchical tree release. Since each leaf of ${\mathcal{T}}$ corresponds to a ${{x}_{i}}$, we use the mapping matrix ${{\mathbf{H}}_{\mathcal{H}}}$ to represent the mapping relationship between leaf nodes and hierarchical tree nodes. Using Prop. \ref{prop:up}, the hierarchical tree building process ${{\operatorname{BuildTree}}_{\mathcal{T}}}$ can be described as 
\begin{equation}
\mathbf{v}={{\operatorname{BuildTree}}_{\mathcal{T}}}\left( \mathbf{x} \right)={{\mathbf{G}}_{\mathcal{T}}}^{-T}{{\mathbf{H}}_{\mathcal{H}}}^{T}\mathbf{x}.
\label{hexp}
\end{equation}

At the same time, we also define the inverse tree-building process ${{\operatorname{BuildTree}}_{\mathcal{T}}}^{-1}$ as the following expression, which is taking the leaves of $\mathcal{T}$ and restoring them to $\mathbf{x}$.
\begin{equation}
\mathbf{x}={{\operatorname{BuildTree}}_{\mathcal{T}}}^{-1}\left( \mathbf{v} \right)={{\mathbf{H}}_{\mathcal{H}}}\mathbf{v}.
\label{hexp}
\end{equation}

Although most works\cite{dpht:bta,dpht:uhm,dpht:dps,dpht:dpt,dpht:pef} do not take matrix analysis as the theoretical basis for optimally consistent release, there are many advantages to applying matrix analysis. One of them is error analysis. Using matrix analysis, we can quickly calculate the overall mean square error of the ``Node Query'' after the post-processing for optimally consistent release.
\begin{theorem}
	Given the privacy budget $\varepsilon$ and the to-be-released hierarchical tree $\mathcal{T}$ containing $n$ nodes and $m$ leaves, whose height is $h$, the overall mean square error before and after post-processing $mse\left( \widetilde{\mathbf{v}} \right)$ and $mse\left( \overline{\mathbf{v}} \right)$ satisfy
	\begin{equation}
	\operatorname{mse}\left( \widetilde{\mathbf{v}} \right)=\sum\nolimits_{i=1}^{n}{\mathbb{E}\left( {{\left( {{\widetilde{v}}_{i}}-{{v}_{i}} \right)}^{2}} \right)}={2n{{h}^{2}}}/{{{\varepsilon }^{2}}}\;,
	\label{hexp}
	\end{equation}
	\begin{equation}
	\operatorname{mse}\left( \overline{\mathbf{v}} \right)=\sum\nolimits_{i=1}^{n}{\mathbb{E}\left( {{\left( {{\overline{v}}_{i}}-{{v}_{i}} \right)}^{2}} \right)}={2m{{h}^{2}}}/{{{\varepsilon }^{2}}}\;.
	\label{hexp}
	\end{equation}
	\label{thm:mse}
\end{theorem}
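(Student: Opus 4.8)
The plan is to route both identities through the noise vector $\boldsymbol{\xi}$ defined by $\widetilde{\mathbf{v}} = \mathbf{v} + \boldsymbol{\xi}$ in \eqref{dp:addNoi}, recalling that each $\xi_i$ is i.i.d. $\operatorname{Lap}(\Delta/\varepsilon)$ with $\Delta = h$, so that $\operatorname{Var}(\xi_i) = \mathbb{E}(\xi_i^2) = 2h^2/\varepsilon^2$. For the pre-processing error this is immediate: since $\widetilde{v}_i - v_i = \xi_i$, summing $\mathbb{E}(\xi_i^2)$ over the $n$ nodes yields $\operatorname{mse}(\widetilde{\mathbf{v}}) = 2nh^2/\varepsilon^2$.

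The post-processing error is the substantive part. First I would rewrite the closed form \eqref{exp:ic} as $\overline{\mathbf{v}} = (\mathbf{I} - \mathbf{P})\widetilde{\mathbf{v}}$ with $\mathbf{P} = \mathbf{M}_{\mathcal{T}}(\mathbf{M}_{\mathcal{T}}^T \mathbf{M}_{\mathcal{T}})^{-1}\mathbf{M}_{\mathcal{T}}^T$, the orthogonal projector onto the column space of $\mathbf{M}_{\mathcal{T}}$; hence $\mathbf{I} - \mathbf{P}$ is symmetric and idempotent. Because the noise-free vector $\mathbf{v}$ is a genuine hierarchical tree it satisfies the consistency constraint $\mathbf{M}_{\mathcal{T}}^T \mathbf{v} = \mathbf{0}$ of \eqref{opt:ic}, so $\mathbf{v}$ lies in the null space of $\mathbf{M}_{\mathcal{T}}^T$, giving $\mathbf{P}\mathbf{v} = \mathbf{0}$ and therefore $\overline{\mathbf{v}} - \mathbf{v} = (\mathbf{I} - \mathbf{P})(\mathbf{v} + \boldsymbol{\xi}) - \mathbf{v} = (\mathbf{I} - \mathbf{P})\boldsymbol{\xi}$. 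The reconstruction error is thus exactly the projection of the injected noise onto the feasible (consistent) subspace.

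With this reduction the mean square error becomes a trace: using idempotency and symmetry,
\begin{equation}
\operatorname{mse}(\overline{\mathbf{v}}) = \mathbb{E}\left( \left\| (\mathbf{I} - \mathbf{P})\boldsymbol{\xi} \right\|^2 \right) = \mathbb{E}\left( \boldsymbol{\xi}^T (\mathbf{I} - \mathbf{P})\boldsymbol{\xi} \right) = \operatorname{tr}\left( (\mathbf{I} - \mathbf{P})\,\mathbb{E}(\boldsymbol{\xi}\boldsymbol{\xi}^T) \right).
\end{equation}
Since the $\xi_i$ are independent with common variance $2h^2/\varepsilon^2$, we have $\mathbb{E}(\boldsymbol{\xi}\boldsymbol{\xi}^T) = (2h^2/\varepsilon^2)\mathbf{I}$, so the expression collapses to $(2h^2/\varepsilon^2)\operatorname{tr}(\mathbf{I} - \mathbf{P}) = (2h^2/\varepsilon^2)(n - \operatorname{rank}(\mathbf{M}_{\mathcal{T}}))$, because the trace of an orthogonal projector equals the rank of the matrix it projects onto.

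The crux, and the step I expect to demand the most care, is identifying $\operatorname{rank}(\mathbf{M}_{\mathcal{T}})$. I would argue that $\mathbf{M}_{\mathcal{T}} \in \mathbb{R}^{n \times n_1}$ has full column rank: by Def. \ref{def:cmm} its $n_1$ columns are indexed by the non-leaf nodes, and column $j$ carries the entry $+1$ in its own row $j$ while every other column vanishes in that row, so no nontrivial linear combination of columns can be zero. Hence $\operatorname{rank}(\mathbf{M}_{\mathcal{T}}) = n_1$, and since the $n$ nodes split into $n_1$ non-leaf nodes and $m$ leaves we have $n - n_1 = m$. Substituting gives $\operatorname{mse}(\overline{\mathbf{v}}) = 2mh^2/\varepsilon^2$, which completes the argument. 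Alternatively, both the rank and the invertibility of $\mathbf{M}_{\mathcal{T}}^T\mathbf{M}_{\mathcal{T}}$ can be read off from the Generation Matrix inner-product equivalent to $\mathbf{M}_{\mathcal{T}}$, keeping the derivation within the framework of this paper.
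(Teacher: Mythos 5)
Your argument follows essentially the same route as the paper's proof in Appendix~A.7: both reduce the problem to the trace of the covariance of the projected noise, use $\mathbb{E}\left( \boldsymbol{\xi }{{\boldsymbol{\xi }}^{T}} \right)=\left( {2{{h}^{2}}}/{{{\varepsilon }^{2}}}\; \right)\mathbf{I}$, and conclude via $\operatorname{trace}\left( \mathbf{I}-\mathbf{P} \right)=n-{{n}_{1}}=m$. The only cosmetic difference is how the trace of the projector is evaluated: the paper uses cyclicity, $\operatorname{trace}\left( {{\mathbf{M}}_{\mathcal{T}}}{{\left( {{\mathbf{M}}_{\mathcal{T}}}^{T}{{\mathbf{M}}_{\mathcal{T}}} \right)}^{-1}}{{\mathbf{M}}_{\mathcal{T}}}^{T} \right)=\operatorname{trace}\left( \mathbf{I} \right)={{n}_{1}}$, while you identify it with $\operatorname{rank}\left( {{\mathbf{M}}_{\mathcal{T}}} \right)$; these are the same computation in substance. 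You are in fact more careful than the paper on one point: you explicitly verify ${{\mathbf{M}}_{\mathcal{T}}}^{T}\mathbf{v}=\mathbf{0}$, hence $\overline{\mathbf{v}}-\mathbf{v}=\left( \mathbf{I}-\mathbf{P} \right)\boldsymbol{\xi }$, which is exactly the unbiasedness needed to equate the mean square error with the trace of the covariance --- a step the paper performs silently.

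However, your justification of full column rank is wrong as stated. You claim that column $j$ ``carries the entry $+1$ in its own row $j$ while every other column vanishes in that row.'' By Def.~\ref{def:cmm}, for every non-root non-leaf node $j$ the row $j$ of ${{\mathbf{M}}_{\mathcal{T}}}$ has \emph{two} nonzero entries: $+1$ in column $j$ and $-1$ in column ${{f}_{j}}$ (the parent ${{f}_{j}}$ is itself a non-leaf node, so ${{f}_{j}}\le {{n}_{1}}$ indexes a column). So the premise fails for every column except the root's, and the one-line independence argument collapses: it does not rule out a null combination whose coefficients cancel along parent--child pairs. The fix is immediate and stays within the paper's framework: under Descending Order of Height one has ${{f}_{j}}<j$, so the top ${{n}_{1}}\times {{n}_{1}}$ block of ${{\mathbf{M}}_{\mathcal{T}}}$ is exactly ${{\mathbf{G}}_{{{\mathcal{T}}^{\left( 1 \right)}}}}$ (as the paper notes in Section~5.2), a lower triangular matrix with unit diagonal, hence nonsingular, giving $\operatorname{rank}\left( {{\mathbf{M}}_{\mathcal{T}}} \right)={{n}_{1}}$. (Equivalently, a null combination forces ${{c}_{1}}=0$ from row $1$ and ${{c}_{j}}={{c}_{{{f}_{j}}}}$ from row $j$, which propagates ${{c}_{j}}=0$ to all non-leaf $j$.) Your fallback via Thm.~\ref{thm:ipeqr} is also sound --- ${{\mathbf{M}}_{\mathcal{T}}}^{T}{{\mathbf{M}}_{\mathcal{T}}}={{\mathbf{G}}_{{{\mathcal{T}}^{\left( 1 \right)}}\leftarrow \mathcal{T}}}^{T}{{\mathbf{G}}_{{{\mathcal{T}}^{\left( 1 \right)}}\leftarrow \mathcal{T}}}$ with triangular factor of diagonal $\sqrt{1+{{\theta }_{i}}}\ge 1$ by Thm.~\ref{thm:svgm} --- but it is heavier machinery than the triangularity observation; note also that the paper itself never proves this rank fact, since invertibility of ${{\mathbf{M}}_{\mathcal{T}}}^{T}{{\mathbf{M}}_{\mathcal{T}}}$ is presupposed in formula \eqref{exp:ic}.
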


\begin{proof}
	See Appendix A.7.
\end{proof}
According to Thm. \ref{thm:mse}, the overall mean square error depends on the number of leaves after post-processing. Generally, $n$ is much less than the number of leaves $m$, so post-processing will significantly reduce the error. As the proof shown in Appendix A.7, We applied the property of the trace of the matrix to obtain a concrete and concise demonstration, which embodies matrix analysis's great potential for solving problems.

Next, we will introduce how to apply Generation Matrix to achieve an efficient enough algorithm.

\subsection{$\text{``LO''}-$QR Decomposition on ${{\mathbf{M}}_{\mathcal{T}}}$}

To obtain an efficient release algorithm, we apply QR decomposition to analyze Formula \eqref{exp:ic}. Unfortunately, the traditional QR decomposition\cite{mx:laa} cannot meet our analysis requirements, so we propose another QR decomposition form, namely the $\text{``LO''}-$QR decomposition.

\begin{definition}[$\text{``LO''}-$QR Decomposition]
	For matrix $\mathbf{M}\in {{\mathbb{R}}^{n\times m}}\left( n\ge m \right)$, QR decomposition looks for an orthogonal matrix $\mathbf{Q}$, which converts $\mathbf{M}$ into a form composed of lower triangular matrix $\mathbf{L}$ and zero matrix. i.e.,
	\begin{equation}
	\mathbf{M}=\mathbf{Q}\left[ \begin{matrix}
	\mathbf{L}  \\
	\mathbf{O}  \\
	\end{matrix} \right].	
	\label{qr2}
	\end{equation}
	\label{def:lqrd}
\end{definition}

Correspondingly, we call the traditional QR decomposition the $\text{``UO''}-$QR decomposition, which decomposes a matrix into an upper triangular matrix. Although the two have different forms, they both achieve decomposition through a series of basic orthogonal transformations. Householder transformation is the most widely used among various orthogonal transformation techniques because of its high efficiency, easy implementation, and applicability to sparse matrices. Completing QR decomposition once requires $m$ householder transformations. To describe the QR decomposition process in more detail, we define $\left( \mathcal{S},j \right)-$Householder transformation to describe each transformation.

\begin{definition}[$\left( \mathcal{S},j \right)-$Householder Transformation]
	For matrix $\mathbf{M}$, given an ordered set $\mathcal{S}=\left\langle {{s}_{1}},{{s}_{2}},\ldots ,{ {s}_{r}} \right\rangle $ and column number $j$, $\left( \mathcal{S},j \right)-$Householder Transformation selects the rows ${{s}_{1}},{{s}_{2}},\ldots ,{{s}_{r}}$ and column $j$ of $\mathbf{M}$ as the reference for householder transformation, $\mathbf{Y}=\mathbf{QM}$. The transformation result will make $\mathbf{Y}$ satisfy ${{y}_{{{s}_{1}},j}}=\sqrt{\sum\nolimits_{s\in \mathcal{S}}{{{m}_{s,j}}^{2}}}$ and ${{y}_{{{s}_{i}},j}}=0,2\le i\le r $, where $\mathbf{Q}$ satisfies the following expression:
	\begin{equation}
	\left\{ \begin{aligned}
	& \mathbf{m}={{\mathbf{H}}_{\mathcal{S}}}\mathbf{M}{{\mathbf{e}}_{j}} \\ 
	& \boldsymbol{\omega }=\mathbf{m}-\left\| \mathbf{m} \right\|{{\mathbf{e}}_{1}} \\ 
	& \mathbf{Q}=\mathbf{I}-2{{\mathbf{H}}_{\mathcal{S}}}^{T}\frac{\boldsymbol{\omega }{{\boldsymbol{\omega }}^{T}}}{{{\boldsymbol{\omega }}^{T}}\boldsymbol{\omega }}{{\mathbf{H}}_{\mathcal{S}}} \\ 
	\end{aligned} \right..
	\label{exp:hst}
	\end{equation}
	\label{def:hst}
\end{definition}

For $\mathbf{M}$, we denote  $\left( \mathcal{S},j \right)-$Householder transformation as  $\mathbf{Y}={{\operatorname{House}}_{\mathcal{S},j}}\left( \mathbf{M} \right)$. The QR decomposition based on the Householder transformation can be expressed as
\begin{equation}
\mathbf{R}={{\operatorname{House}}_{{{\varsigma }_{m}}}}\centerdot \cdots \centerdot  {{\operatorname{House}}_{{{\varsigma }_{2}}}}\centerdot {{\operatorname{House}}_{{{\varsigma }_{1}}}}\left( \mathbf{M} \right),
\label{exp:hsQR}
\end{equation}
where $\mathbf{R}$ is the result of QR decomposition and ``$\centerdot$'' represents the operation of function composition such as ${{f}_{2}}\centerdot {{f}_{1}}\left( x \right)={{f}_{2}}\left( {{f}_{1}}\left( x \right) \right)$. For $\text{``UO''}-$QR decomposition, the parameter ${{\varsigma }_{i}}=\left( \left\langle i,\cdots ,n \right\rangle ,i \right)$; for $\text{``LO''}-$QR decomposition, the parameter ${{\varsigma }_{i}}=\left( \left\langle m-i+1,1,\cdots ,m-i,m+1,\cdots ,n \right\rangle ,m-i+1 \right)$.

Next, we apply the $\text{``LO''}-$QR decomposition on ${{\mathbf{M}}_{\mathcal{T}}}$. To understand how $\text{``LO''}-$QR decomposition affects ${{\mathbf{M}}_{\mathcal{T}}}$, we divide ${{\mathbf{M}}_{\mathcal{T}}}$ into the following forms, 
\begin{equation}
{{\mathbf{M}}_{\mathcal{T}}}\equiv \left[ \begin{matrix}
{{\mathbf{M}}_{\uparrow }}  \\
{{\mathbf{M}}_{\downarrow }}  \\
\end{matrix} \right].
\label{hexp}
\end{equation}
The upper half ${{\mathbf{M}}_{\uparrow }}\in {{\mathbb{R}}^{{{n}_{1}}\times {{n}_{1}}}}$ of ${{\mathbf{M}}_{\mathcal{T}}}$ consists of the first ${{n}_{1}}$ rows; The second half ${{\mathbf{M}}_{\downarrow }}\in {{\mathbb{R}}^{m\times {{n}_{1}}}}$ consists of the remaining $m$ rows.
According to Def. \ref{def:cmm}, we have ${{\mathbf{M}}_{\uparrow }}={{\mathbf{G} }_{{{\mathcal{T}}^{\left( 1 \right)}}}}$ and ${{\mathbf{M}}_{\downarrow }}=-{{\mathbf{ H}}_{{{\mathcal{S}}^{\left( f \right)}}}}$, 
where the $i$-th element of ordered set ${{\mathcal{S}}^{\left( f \right)}}$ satisfies $s_{i}^{\left( f \right)}=\text{ }{{f}_{i+{{n}_{1}}}}$.
Considering the effects of Householder transformation on ${{\mathbf{M}}_{\mathcal{T}}}$, we denote the matrix obtained after the $k$-th Householder transformation as $\mathbf{M}_{\mathcal{T}}^{\left( k \right)}\left( 0\le k\le {{n}_{1}} \right)$, whose upper half and second half are denoted as $\mathbf{M}_{\uparrow }^{\left( k \right)}$ and $\mathbf{M}_{\downarrow }^{\left( k \right)}$.
$\mathbf{M}_{\mathcal{T}}^{\left( 0 \right)}$ is the form before Householder transformation, satisfying  $\mathbf{M}_{\mathcal{T} }^{\left( 0 \right)}={{\mathbf{M}}_{\mathcal{T}}}$; $\mathbf{M}_{\mathcal{T}}^{\left( {{n}_{1}} \right)}$ is the result after $\text{``LO''}-$QR decomposition. Thm. \ref{thm:lqrps} demonstrates that, in the process of Householder transformation, $\mathbf{M}_{\mathcal{T}}^{\left( k \right)}$ satisfies some invariant properties.

\begin{figure}[h]
	\centering
	\includegraphics[width=6.4in,trim=0 340 0 0,clip]{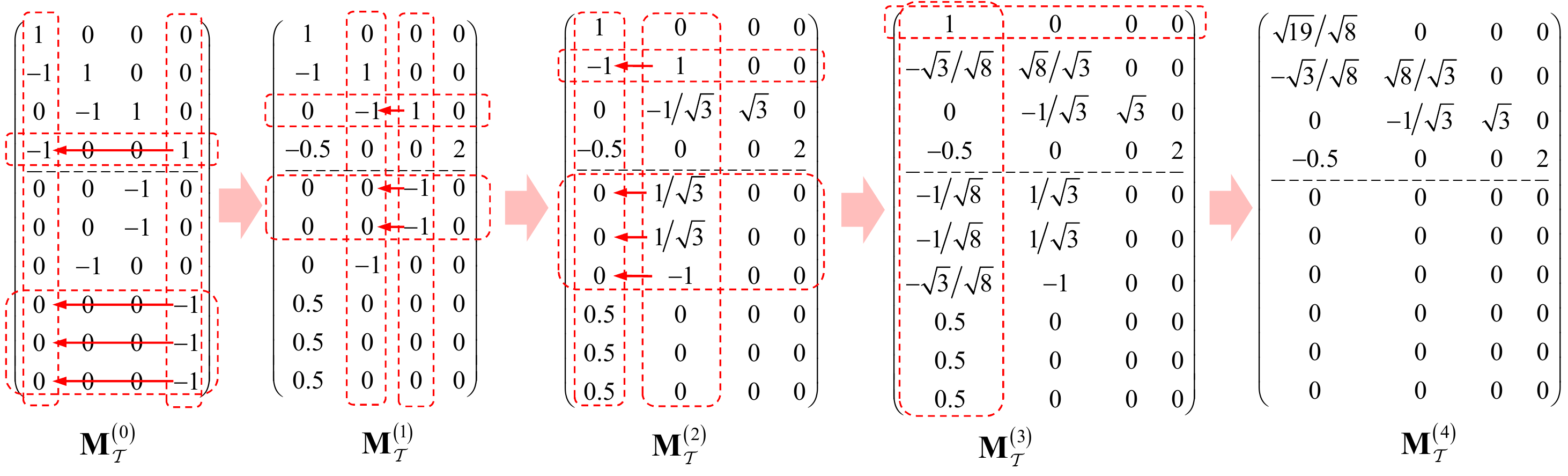}
	\caption{Householder Transformations in $\text{``LO''}-$QR Decomposition on ${{\mathbf{M}}_{\mathcal{T}}}$}
	\label{fig:qrd}
\end{figure}

\begin{theorem}
	For the $\text{``LO''}-$QR decomposition on ${{\mathbf{M}}_{\mathcal{T}}}$, after $k$ $\left( 0\le k\le {{n}_{1}}-1 \right)$ times of Householder transformation, $\mathbf{M }_{\mathcal{T}}^{\left( k \right)}$ always keeps the following three properties unchanged:	
	\begin{itemize}
		\itemindent 0em
		\item[a)]$\mathbf{M}_{\uparrow }^{\left( k \right)}\sim {{\mathbf{G}}_{{{\mathcal{T}}^{\left( 1 \right)}}}}$;
		\item[b)]Each line of $\mathbf{M}_{\downarrow }^{\left( k \right)}$ has one and only one non-zero value;	
		\item[c)]The last $k$ columns of $\mathbf{M}_{\downarrow }^{\left( k \right)}$ (i.e., the columns from ${{n}_{1}}-k+1$ to ${{n}_{1}}$) are all $0$.	
	\end{itemize}
	\label{thm:lqrps}
\end{theorem}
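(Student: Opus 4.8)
The plan is to prove the three invariants simultaneously by induction on $k$. The base case $k=0$ is immediate from ${{\mathbf{M}}_{\uparrow }}={{\mathbf{G}}_{{{\mathcal{T}}^{\left( 1 \right)}}}}$ and ${{\mathbf{M}}_{\downarrow }}=-{{\mathbf{H}}_{{{\mathcal{S}}^{\left( f \right)}}}}$: property (a) is trivial, property (b) holds because every row of a mapping matrix carries exactly one nonzero by Def. \ref{def:mpm}, and property (c) is vacuous. For the inductive step I would assume the properties for $\mathbf{M}_{\mathcal{T}}^{\left( k \right)}$ with $0\le k\le {{n}_{1}}-2$ and examine the $\left( k+1 \right)$-th transformation ${{\varsigma }_{k+1}}$, which acts on column $j:={{n}_{1}}-k$ with pivot row $j$ and annihilation rows $\left\{ 1,\ldots ,j-1 \right\}\cup \left\{ {{n}_{1}}+1,\ldots ,n \right\}$.

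Everything hinges on locating the support of the reflection vector $\boldsymbol{\omega }=\mathbf{m}-\left\| \mathbf{m} \right\|{{\mathbf{e}}_{1}}$ of Def. \ref{def:hst}, where $\mathbf{m}$ is column $j$ of $\mathbf{M}_{\mathcal{T}}^{\left( k \right)}$ restricted to the rows of $\mathcal{S}$. I would first record two structural facts. (i) Since $\mathcal{T}$ is in descending order of height (Def. \ref{def:doh}), every child of $j$ in ${{\mathcal{T}}^{\left( 1 \right)}}$ has index exceeding $j$ and thus lies in the finalized rows $\left\{ j+1,\ldots ,{{n}_{1}} \right\}$ that are excluded from $\mathcal{S}$; hence the only upper entry of $\mathbf{m}$ is its diagonal. (ii) Through the first $k$ steps each reflection alters only its own pivot row in the upper block, so the upper rows $1,\ldots ,j$ are still at their original values; in particular the diagonal entry of $\mathbf{m}$ equals $1$. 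Together these give $\boldsymbol{\omega }$ an upper support equal to the single row $\left\{ j \right\}$, plus the lower rows on which column $j$ is nonzero.

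With this support description the invariants fall out by checking which columns can change, namely only those whose restriction to $\mathcal{S}$ meets $\boldsymbol{\omega }$. Through the lower support of $\boldsymbol{\omega }$ only column $j$ qualifies, because by the inductive hypothesis each affected lower row carries its single nonzero in column $j$; through the single upper row $j$ only columns $j$ and ${{f}_{j}}$ qualify, these being the sole nonzeros of the pristine row $j$. Hence at most columns $j$ and ${{f}_{j}}$ change. For column $j$ the reflection collapses the whole restricted column onto the pivot, zeroing its lower part and extending property (c) to $k+1$ zero columns, while its upper pattern (the diagonal together with the untouched children at rows $>j$) is preserved. For column ${{f}_{j}}$, the upper support of $\boldsymbol{\omega }$ being $\left\{ j \right\}$ means only the already-present entry $\left( j,{{f}_{j}} \right)$ is touched; a one-line computation shows it becomes $-1/\left\| \mathbf{m} \right\|\ne 0$, so no fill-in is created and no entry is lost, which yields $\mathbf{M}_{\uparrow }^{\left( k+1 \right)}\sim {{\mathbf{G}}_{{{\mathcal{T}}^{\left( 1 \right)}}}}$ and completes property (a).

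I expect property (b) to be the main obstacle, since it requires the single nonzeros to \emph{migrate} cleanly from column $j$ up to column ${{f}_{j}}$. The subtle point is that annihilating column $j$ strips the nonzero from every lower row whose entry currently sits at node $j$, and each must reappear exactly once in column ${{f}_{j}}$. I would verify this with the explicit rank-one update: on such a row $r$, column ${{f}_{j}}$ was previously zero and now receives $-{{\mu }_{r}}/\left\| \mathbf{m} \right\|$, where ${{\mu }_{r}}\ne 0$ is the old entry; this is nonzero precisely because the migration scalar does not vanish, which reduces to $\left\| \mathbf{m} \right\|\ne 1$ and holds by fact (ii) whenever there is something to migrate. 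Rows already holding their nonzero in column ${{f}_{j}}$ fall outside the support of $\boldsymbol{\omega }$ and are untouched, so no two nonzeros ever collide and each lower row again has exactly one nonzero, closing the induction.
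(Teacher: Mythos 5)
Your proof is correct and takes essentially the same route as the paper's: induction on $k$ carrying all three invariants, the same reduction of the Householder row set to the pivot row plus the lower rows (via lower-triangularity of $\mathbf{M}_{\uparrow }^{\left( k \right)}$), and the same use of invariant b) to kill the cross terms in ${{\boldsymbol{\omega }}^{T}}\mathbf{m}_{j'}$, so that only columns ${{t}_{k+1}}$ and ${{f}_{{{t}_{k+1}}}}$ are touched. Your support-of-$\boldsymbol{\omega }$ phrasing, with the explicit values $-1/\left\| \mathbf{m} \right\|$ for the pivot row's parent entry and $-{{\mu }_{r}}/\left\| \mathbf{m} \right\|$ for the migrated lower entries, is exactly the content of the paper's entrywise update formula \eqref{exp:mRs} in Appendix A.8.
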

\begin{proof}
	See Appendix A.8.
\end{proof}

As shown in Fig. \ref{fig:qrd}, with the Householder transformation proceeds, the non-zero elements in $\mathbf{M}_{\downarrow }^{\left( k \right)}$ shift from right to left, and the position of the non-zero elements of $\mathbf{M}_{\uparrow }^{\left( k \right)}$ remains unchanged. Specifically, in the $k$-th Householder transformation, all non-zero elements in the $\left( {{n}_{1}}-k+1 \right)$-th column of $\mathbf{M}_{\downarrow }^{\left( k \right)}$ are transferred to the ${{f}_{{{n}_{1}}-k+1}} $-th column. Combining Thm. \ref{thm:lqrps}, we can infer the specific form of $\mathbf{M}_{\mathcal{T}}^{\left( {{n}_{1}}-1 \right)}$ after ${{n}_{1}}-1$ times of Householder transformation. After the last householder transformation, the result satisfies the following theorem.

\begin{theorem}
	After $\text{``LO''}-$QR decomposition, there is a $\mathbf{G}_{{{\mathcal{T}}^{\left( 1 \right)}}}^{\left( {{\mathbf{w}}_{node}},{{\mathbf{w}}_{edge}} \right)}$ and an orthogonal matrix $\mathbf{Q}$, which let  ${{\mathbf{M}}_{\mathcal{T}}}$ satisfies:
	\begin{equation}
	{{\mathbf{M}}_{\mathcal{T}}}=\mathbf{Q}\left[ \begin{matrix}
	\mathbf{G}_{{{\mathcal{T}}^{\left( 1 \right)}}}^{\left( {{\mathbf{w}}_{node}},{{\mathbf{w}}_{edge}} \right)}  \\
	\mathbf{O}  \\
	\end{matrix} \right].
	\label{hexp}
	\end{equation}
	\label{thm:gmqr}
\end{theorem}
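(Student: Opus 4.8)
The plan is to pick up exactly where Thm.~\ref{thm:lqrps} leaves off: it controls the first $n_1-1$ Householder transformations, so only the last ($n_1$-th) transformation remains to be analysed. First I would instantiate Thm.~\ref{thm:lqrps} at $k=n_1-1$. Property (c) then says that columns $2,\ldots,n_1$ of $\mathbf{M}_{\downarrow}^{(n_1-1)}$ vanish, while property (b) says every row of $\mathbf{M}_{\downarrow}^{(n_1-1)}$ carries exactly one non-zero entry; combining the two forces each such non-zero to sit in the first column, so $\mathbf{M}_{\downarrow}^{(n_1-1)}$ is supported entirely on its first column (which is then fully non-zero), and property (a) guarantees $\mathbf{M}_{\uparrow}^{(n_1-1)}\sim{{\mathbf{G}}_{{{\mathcal{T}}^{\left(1\right)}}}}$.

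Next I would apply the final transformation $\operatorname{House}_{\varsigma_{n_1}}$, whose parameter (with the column count $n_1$ playing the role of $m$ in Def.~\ref{def:hst}) selects column $1$ together with row $1$ of the upper block and all rows $n_1+1,\ldots,n$ of the lower block. The decisive observation is that, among exactly these selected rows, only column $1$ carries any non-zero entry: in the lower block this is the content of the previous paragraph, and in row $1$ of the upper block it follows from the Sparsity property (Prop.~\ref{prop:sp}), since the root row of a Generation Matrix has its single non-zero on the diagonal. Hence, restricted to the selected rows, every column other than the first vanishes and is therefore left invariant by the reflection, while column $1$ of the lower block is annihilated and its norm is deposited into the $(1,1)$ entry. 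Consequently $\mathbf{M}_{\downarrow}^{(n_1)}=\mathbf{O}$, the non-zero pattern of $\mathbf{M}_{\uparrow}^{(n_1)}$ is unchanged except that its $(1,1)$ entry becomes $\sqrt{\sum\nolimits_{s}{{m}_{s,1}}^{2}}\neq 0$ (positive because column $1$ of the lower block was non-zero), and the parent-edge and remaining diagonal entries, living in the untouched columns $2,\ldots,n_1$ and in rows $2,\ldots,n_1$ of column $1$, stay non-zero. Thus $\mathbf{M}_{\uparrow}^{(n_1)}\sim{{\mathbf{G}}_{{{\mathcal{T}}^{\left(1\right)}}}}$ with all required entries non-zero, which by Def.~\ref{def:sgm} and Def.~\ref{def:gm} identifies $\mathbf{M}_{\uparrow}^{(n_1)}$ as a genuine Generation Matrix $\mathbf{G}_{{{\mathcal{T}}^{\left(1\right)}}}^{\left({{\mathbf{w}}_{node}},{{\mathbf{w}}_{edge}}\right)}$ for some admissible weight vectors.

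Finally I would assemble the orthogonal factor. Each $\operatorname{House}_{\varsigma_k}$ is realised by an orthogonal $\mathbf{Q}_k$ (Def.~\ref{def:hst}), so $\mathbf{M}_{\mathcal{T}}^{(n_1)}=\mathbf{Q}_{n_1}\cdots\mathbf{Q}_{1}\mathbf{M}_{\mathcal{T}}$; setting $\mathbf{Q}=\mathbf{Q}_{1}^{T}\cdots\mathbf{Q}_{n_1}^{T}$, which is orthogonal as a product of orthogonal matrices, and substituting the block form just obtained yields $\mathbf{M}_{\mathcal{T}}=\mathbf{Q}\left[\begin{matrix}\mathbf{G}_{{{\mathcal{T}}^{\left(1\right)}}}^{\left({{\mathbf{w}}_{node}},{{\mathbf{w}}_{edge}}\right)}\\\mathbf{O}\end{matrix}\right]$, as claimed.

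I expect the main obstacle to be the second step: rigorously arguing that the last transformation annihilates the lower block without disturbing the Generation-Matrix pattern of the upper block. The crux is the ``only column $1$ is active on the selected rows'' observation, which hinges on correctly combining properties (b) and (c) of Thm.~\ref{thm:lqrps} with the sparsity of row $1$; once that is in place, the reflection decouples column $1$ from the rest and the conclusion is immediate. A secondary bookkeeping nuisance is reconciling the indices in the generic $\varsigma_i$ formula (stated for an $m$-column matrix) against the present $n_1$-column matrix, which I would handle by the substitution $m\mapsto n_1$.
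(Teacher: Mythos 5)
Your proposal is correct and takes essentially the same approach as the paper's own proof: both instantiate Thm.~\ref{thm:lqrps} at $k=n_1-1$, observe that in the final $\left( \left\langle 1,n_1+1,\ldots ,n \right\rangle ,1 \right)$-Householder transformation only the $\left( 1,1 \right)$ entry of the upper block can change (since row $1$ of a Generation Matrix has its single non-zero on the diagonal and the lower block is supported on column $1$ by properties b) and c)), and conclude that the lower block is annihilated while $\mathbf{M}_{\uparrow }^{\left( n_1 \right)}\sim {{\mathbf{G}}_{{{\mathcal{T}}^{\left( 1 \right)}}}}$ with the new pivot $\sqrt{m{{_{1,1}^{\left( n_1-1 \right)}}^{2}}+\sum\nolimits_{s=n_1+1}^{n}{m{{_{s,1}^{\left( n_1-1 \right)}}^{2}}}}$. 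Your explicit assembly of the orthogonal factor $\mathbf{Q}$ from the individual reflectors is a bookkeeping step the paper leaves implicit, not a genuinely different route.
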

\begin{proof}
	See Appendix A.9.
\end{proof}

Thm. \ref{thm:gmqr} implies that the calculation ${{\left( {{\mathbf{M}}_{\mathcal{T}}}^{T}{{\mathbf{M}}_{\mathcal{T}}} \right)}^{-1}}$ about ${{\mathbf{M}}_{\mathcal{T}}}$ in expression \eqref{exp:ic} can be replaced by ${{\mathcal{T}}^{\left( 1 \right)}}$. Since it is related to $\mathcal{T}$ and ${{\mathcal{T}}^{\left( 1 \right)}}$ simultaneously, we denote it as ${{\mathbf{G}}_{{{\mathcal{T}}^{\left( 1 \right)}}\leftarrow \mathcal{T}}}$. Thm. \ref{thm:ipeqr} shows that ${{\mathbf{G}}_{{{\mathcal{T}}^{\left( 1 \right)}}\leftarrow \mathcal{T}}}$ and ${{\mathbf{M}}_{\mathcal{T}}}$ are inner-product-equivalent.

\begin{theorem}
	For arbitrary ${{\mathbf{M}}_{\mathcal{T}}}$, there is a  ${{\mathbf{G}}_{{{\mathcal{T}}^{\left( 1 \right)}}\leftarrow \mathcal{T}}}$ \textbf{inner-product-equivalent} to it, which satisfies
	\begin{equation}
	{{\mathbf{M}}_{\mathcal{T}}}^{T}{{\mathbf{M}}_{\mathcal{T}}}={{\mathbf{G}}_{{{\mathcal{T}}^{\left( 1 \right)}}\leftarrow \mathcal{T}}}^{T}{{\mathbf{G}}_{{{\mathcal{T}}^{\left( 1 \right)}}\leftarrow \mathcal{T}}},
	\label{exp:ipe}
	\end{equation}
	where ${{\mathbf{G}}_{{{\mathcal{T}}^{\left( 1 \right)}}\leftarrow \mathcal{T}}}$ is the upper half $\mathbf{M}_{\downarrow }^{\left( {{n}_{1}} \right)}$ of ${{\mathbf{M}}_{\mathcal{T}}}$ after $\text{``LO''}-$QR decomposition.
	\label{thm:ipeqr}
\end{theorem}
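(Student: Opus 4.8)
The plan is to read the claim off almost immediately from Theorem \ref{thm:gmqr}, which has already done all the structural work. By that theorem the $\text{``LO''}-$QR decomposition yields an orthogonal matrix $\mathbf{Q}$ together with a Generation Matrix $\mathbf{G}_{\mathcal{T}^{(1)}}^{(\mathbf{w}_{node},\mathbf{w}_{edge})}$ for which
\[
\mathbf{M}_{\mathcal{T}}=\mathbf{Q}\left[\begin{matrix}\mathbf{G}_{\mathcal{T}^{(1)}}^{(\mathbf{w}_{node},\mathbf{w}_{edge})}\\ \mathbf{O}\end{matrix}\right].
\]
I would simply name this surviving upper block $\mathbf{G}_{\mathcal{T}^{(1)}\leftarrow\mathcal{T}}:=\mathbf{G}_{\mathcal{T}^{(1)}}^{(\mathbf{w}_{node},\mathbf{w}_{edge})}=\mathbf{M}_{\uparrow}^{(n_1)}$, the upper half of the matrix remaining after the $n_1$ Householder transformations.

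The core step is then a single computation exploiting that an orthogonal $\mathbf{Q}$ preserves inner products. Substituting the factorization into $\mathbf{M}_{\mathcal{T}}^T\mathbf{M}_{\mathcal{T}}$ and using $\mathbf{Q}^T\mathbf{Q}=\mathbf{I}$ gives
\[
\mathbf{M}_{\mathcal{T}}^T\mathbf{M}_{\mathcal{T}}=\left[\begin{matrix}\mathbf{G}_{\mathcal{T}^{(1)}\leftarrow\mathcal{T}}^T & \mathbf{O}\end{matrix}\right]\left[\begin{matrix}\mathbf{G}_{\mathcal{T}^{(1)}\leftarrow\mathcal{T}}\\ \mathbf{O}\end{matrix}\right]=\mathbf{G}_{\mathcal{T}^{(1)}\leftarrow\mathcal{T}}^T\,\mathbf{G}_{\mathcal{T}^{(1)}\leftarrow\mathcal{T}},
\]
which is exactly the asserted inner-product equivalence \eqref{exp:ipe}.

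Consequently there is essentially no residual obstacle in this statement itself; the difficulty has all been front-loaded into Theorems \ref{thm:lqrps} and \ref{thm:gmqr}, which track the Householder transformations and certify that the block at the bottom of the decomposition is annihilated while the top block is a genuine Generation Matrix similar to $\mathbf{G}_{\mathcal{T}^{(1)}}$. The only point I would check carefully is notational: the matrix named $\mathbf{G}_{\mathcal{T}^{(1)}\leftarrow\mathcal{T}}$ must be the upper half $\mathbf{M}_{\uparrow}^{(n_1)}$ rather than the annihilated lower half $\mathbf{M}_{\downarrow}^{(n_1)}=\mathbf{O}$, and it must inherit the lower-triangular, invertible Generation-Matrix form of $\mathcal{T}^{(1)}$ so that the name and the subsequent use of its inverse are justified. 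This invertibility is precisely what makes it legitimate to replace $(\mathbf{M}_{\mathcal{T}}^T\mathbf{M}_{\mathcal{T}})^{-1}$ in Formula \eqref{exp:ic} by a single Generation-Matrix inner product, which is the payoff this theorem is engineered to deliver.
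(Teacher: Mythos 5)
Your proposal is correct and follows essentially the same route as the paper's own proof in Appendix A.10: invoke Theorem \ref{thm:gmqr} to write ${{\mathbf{M}}_{\mathcal{T}}}=\mathbf{Q}\left[ \begin{matrix} {{\mathbf{G}}_{{{\mathcal{T}}^{\left( 1 \right)}}\leftarrow \mathcal{T}}} \\ \mathbf{O} \end{matrix} \right]$ and cancel ${{\mathbf{Q}}^{T}}\mathbf{Q}=\mathbf{I}$ in ${{\mathbf{M}}_{\mathcal{T}}}^{T}{{\mathbf{M}}_{\mathcal{T}}}$. Your notational check is also well taken, since the theorem statement's reference to ``the upper half $\mathbf{M}_{\downarrow }^{\left( {{n}_{1}} \right)}$'' is a typo for $\mathbf{M}_{\uparrow }^{\left( {{n}_{1}} \right)}$, exactly as you identify.
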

\begin{proof}
	See Appendix A.10.
\end{proof}

\subsection{Generation Matrix-based Optimally Consistent Release Algorithm}
The property of inner product equivalence provides us with a vital optimization idea for an optimal release, as shown in Cor. \ref{cor:fgic}. Using matrix analysis, we convert the expression \eqref{exp:ic} into an expression about ${{\mathbf{G}}_{{{\mathcal{T}}^{\left( 1 \right)}}\leftarrow \mathcal{T}}}$ and then use the mathematical properties of Generation Matrix to improve the efficiency of optimal release.

\begin{corollary}
	The expression $\mathbf{y}={{\left( {{\mathbf{M}}_{\mathcal{T}}}^{T}{{\mathbf{M}}_{\mathcal{T}}} \right)}^{-1}}\mathbf{x}$ can be obtained by performing an upward propagation and a downward propagation successively about ${{\mathbf{G}}_{{{\mathcal{T}}^{\left( 1 \right)}}\leftarrow \mathcal{T}}}$. That is 
	\begin{equation}
	\mathbf{y}={{\left( {{\mathbf{M}}_{\mathcal{T}}}^{T}{{\mathbf{M}}_{\mathcal{T}}} \right)}^{-1}}\mathbf{x}={{\mathbf{G}}_{{{\mathcal{T}}^{\left( 1 \right)}}\leftarrow \mathcal{T}}}^{-1}\left( {{\mathbf{G}}_{{{\mathcal{T}}^{\left( 1 \right)}}\leftarrow \mathcal{T}}}^{-T}\mathbf{x} \right).
	\label{hexp}
	\end{equation}
	\label{cor:fgic}
\end{corollary}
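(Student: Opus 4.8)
The plan is to reduce the claim to two results already in hand: the inner-product equivalence of Theorem \ref{thm:ipeqr} together with the two propagation identities of Properties \ref{prop:up} and \ref{prop:dp}. The one preliminary point I would pin down first is that ${{\mathbf{G}}_{{{\mathcal{T}}^{\left( 1 \right)}}\leftarrow \mathcal{T}}}$ is, by Theorem \ref{thm:gmqr}, genuinely a Generation Matrix on the subtree ${{\mathcal{T}}^{\left( 1 \right)}}$, namely $\mathbf{G}_{{{\mathcal{T}}^{\left( 1 \right)}}}^{\left( {{\mathbf{w}}_{node}},{{\mathbf{w}}_{edge}} \right)}$. As a lower triangular matrix whose diagonal entries are the non-zero node weights, it is invertible, so every inverse appearing below is well defined and the propagation properties are applicable to it.

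First I would substitute the inner-product equivalence directly into the definition of $\mathbf{y}$, giving
\begin{equation}
\mathbf{y}={{\left( {{\mathbf{M}}_{\mathcal{T}}}^{T}{{\mathbf{M}}_{\mathcal{T}}} \right)}^{-1}}\mathbf{x}={{\left( {{\mathbf{G}}_{{{\mathcal{T}}^{\left( 1 \right)}}\leftarrow \mathcal{T}}}^{T}{{\mathbf{G}}_{{{\mathcal{T}}^{\left( 1 \right)}}\leftarrow \mathcal{T}}} \right)}^{-1}}\mathbf{x}.
\end{equation}
Then, applying the standard identities ${{\left( \mathbf{AB} \right)}^{-1}}={{\mathbf{B}}^{-1}}{{\mathbf{A}}^{-1}}$ and ${{\left( {{\mathbf{A}}^{T}} \right)}^{-1}}={{\mathbf{A}}^{-T}}$ to the factored form, the product separates as
\begin{equation}
\mathbf{y}={{\mathbf{G}}_{{{\mathcal{T}}^{\left( 1 \right)}}\leftarrow \mathcal{T}}}^{-1}\left( {{\mathbf{G}}_{{{\mathcal{T}}^{\left( 1 \right)}}\leftarrow \mathcal{T}}}^{-T}\mathbf{x} \right),
\end{equation}
which is exactly the asserted closed form.

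It then remains only to read off the algorithmic interpretation. Setting $\mathbf{z}={{\mathbf{G}}_{{{\mathcal{T}}^{\left( 1 \right)}}\leftarrow \mathcal{T}}}^{-T}\mathbf{x}$, Property \ref{prop:up} identifies $\mathbf{z}$ as the upward propagation of $\mathbf{x}$ on ${{\mathcal{T}}^{\left( 1 \right)}}$; subsequently forming ${{\mathbf{G}}_{{{\mathcal{T}}^{\left( 1 \right)}}\leftarrow \mathcal{T}}}^{-1}\mathbf{z}$ is, by Property \ref{prop:dp}, the downward propagation of $\mathbf{z}$. Hence $\mathbf{y}$ is produced by one upward propagation followed by one downward propagation, as claimed.

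Since every step is a direct appeal to a prior result, there is no genuinely hard obstacle here; the derivation is essentially a single algebraic manipulation plus identification. The only point that requires care is bookkeeping: Properties \ref{prop:up} and \ref{prop:dp} were stated for a matrix of the form $\mathbf{G}_{\mathcal{T}}^{\left( {{\mathbf{w}}_{node}},{{\mathbf{w}}_{edge}} \right)}$, so I must explicitly invoke Theorem \ref{thm:gmqr} to certify that ${{\mathbf{G}}_{{{\mathcal{T}}^{\left( 1 \right)}}\leftarrow \mathcal{T}}}$ is of that form \emph{before} attaching the propagation reading to the two factors. Everything else is routine.
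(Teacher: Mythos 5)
Your proposal is correct and follows essentially the same route as the paper's own proof: substitute the inner-product equivalence of Theorem \ref{thm:ipeqr} into ${{\left( {{\mathbf{M}}_{\mathcal{T}}}^{T}{{\mathbf{M}}_{\mathcal{T}}} \right)}^{-1}}\mathbf{x}$ and factor the inverse of the product into ${{\mathbf{G}}_{{{\mathcal{T}}^{\left( 1 \right)}}\leftarrow \mathcal{T}}}^{-1}{{\mathbf{G}}_{{{\mathcal{T}}^{\left( 1 \right)}}\leftarrow \mathcal{T}}}^{-T}$ using invertibility. Your added bookkeeping --- invoking Theorem \ref{thm:gmqr} to certify that ${{\mathbf{G}}_{{{\mathcal{T}}^{\left( 1 \right)}}\leftarrow \mathcal{T}}}$ is a genuine (hence lower triangular and invertible) Generation Matrix before applying Properties \ref{prop:up} and \ref{prop:dp} --- is a sound refinement of the invertibility claim the paper simply asserts.
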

\begin{proof}
	See Appendix A.11.
\end{proof}
According to Cor. \ref{cor:fgic}, we get another optimal release form as follows.
\begin{equation}
\overline{\mathbf{v}}=\widetilde{\mathbf{v}}-{{\mathbf{M}}_{\mathcal{T}}}\left( {{\mathbf{G}}_{{{\mathcal{T}}^{\left( 1 \right)}}\leftarrow \mathcal{T}}}^{-1}\left( {{\mathbf{G}}_{{{\mathcal{T}}^{\left( 1 \right)}}\leftarrow \mathcal{T}}}^{-T}\left( {{\mathbf{M}}_{\mathcal{T}}}^{T}\widetilde{\mathbf{v}} \right) \right) \right).
\label{exp:fic}
\end{equation}

We illustrate with parentheses that Formula \eqref{exp:fic} is calculated from right to left, ensuring that all multiplication and solution equations are for matrices and vectors. According to the sparsity of ${{\mathbf{M}}_{\mathcal{T}}}$, the time complexity of ${{\mathbf{M}}_{\mathcal{T}}}$-related multiplication calculation in Formula \eqref{exp:fic} is $O\left( n \right)$. Besides, according to Prop. \ref{prop:up} and Prop. \ref{prop:dp}, the time complexity of solving the linear equation of Generation Matrix is also $O\left( {{n}_{1}} \right)$. Therefore, the overall time complexity of Formula \eqref{exp:fic} is $O\left( n \right)$. Formula \eqref{exp:fic} has completely summarized the core process of optimally consistent release. So long as we execute the formula directly after constructing ${{\mathbf{G}}_{{{\mathcal{T}}^{\left( 1 \right)}}\leftarrow \mathcal{T}}}$, we can efficiently obtain the optimal consistent release. The algorithm description with Generation Matrices is very concise and easy to implement.

However, only ensuring that the calculation of ${{\mathbf{G}}_{{{\mathcal{T}}^{\left( 1 \right)}}\leftarrow \mathcal{T}}}$ is also highly efficient, the whole process is efficient. So, we further proposed Thm. \ref{thm:svgm} to calculate it.

\begin{theorem}
	Let ${{w}_{i}}$$\left( 1\le i\le {{n}_{1}} \right)$ and ${{w}_{i\to {{f}_{i}}}}$$\left( 2\le i\le {{n}_{1}} \right)$ as the weights of the nodes and edges in ${{\mathbf{G}}_{{{\mathcal{T}}^{\left( 1 \right)}}\leftarrow \mathcal{T}}}$, respectively. They satisfy
	\begin{equation}
	\left\{ \begin{aligned}
	& {{w}_{i}}=\sqrt{1+{{\theta }_{i}}} \\ 
	& {{w}_{i\to {{f}_{i}}}}={{w}_{i}}^{-1} \\ 
	\end{aligned} \right.,
	\label{hexp}
	\end{equation}
	where ${{\theta }_{i}}\left( 1\le i\le {{n}_{1}} \right)$ satisfies
	\begin{equation}
	{{\theta }_{i}}=\left| {{\mathcal{C}}_{i}} \right|-\sum\limits_{j\in {{\mathcal{C}}_{i}}\wedge j\le {{n}_{1}}}{{{\left( 1+{{\theta }_{j}} \right)}^{-1}}}.
	\label{hexp}
	\end{equation}
	\label{thm:svgm}
\end{theorem}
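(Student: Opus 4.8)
The plan is to read off the weights by comparing the two sides of the inner-product equivalence established in Thm. \ref{thm:ipeqr}, namely ${{\mathbf{M}}_{\mathcal{T}}}^{T}{{\mathbf{M}}_{\mathcal{T}}}={{\mathbf{G}}_{{{\mathcal{T}}^{\left( 1 \right)}}\leftarrow \mathcal{T}}}^{T}{{\mathbf{G}}_{{{\mathcal{T}}^{\left( 1 \right)}}\leftarrow \mathcal{T}}}$, entry by entry. Both matrices are $n_1\times n_1$ and indexed by the non-leaf nodes $1,\ldots,n_1$, so matching their entries should pin down every $w_i$ and $w_{i\to f_i}$ without ever forming $\mathbf{Q}$ explicitly.

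First I would compute ${{\mathbf{M}}_{\mathcal{T}}}^{T}{{\mathbf{M}}_{\mathcal{T}}}$ directly from Def. \ref{def:cmm}. Column $j$ of ${{\mathbf{M}}_{\mathcal{T}}}$ equals ${{\mathbf{e}}_{j}}-\sum_{i\in {{\mathcal{C}}_{j}}}{{\mathbf{e}}_{i}}$, carrying a $+1$ in row $j$ and a $-1$ in each child row. Since distinct columns have disjoint child-row supports (every node has a unique parent), the diagonal entry is $1+\left| {{\mathcal{C}}_{j}} \right|$, while the $(j,k)$ off-diagonal is $-1$ exactly when $j$ and $k$ form a parent-child pair of non-leaf nodes, and $0$ otherwise.

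Next I would compute ${{\mathbf{G}}_{{{\mathcal{T}}^{\left( 1 \right)}}\leftarrow \mathcal{T}}}^{T}{{\mathbf{G}}_{{{\mathcal{T}}^{\left( 1 \right)}}\leftarrow \mathcal{T}}}$ using the Generation Matrix structure of Def. \ref{def:gm}: column $j$ carries $w_j$ on the diagonal and each non-leaf child $i$ of $j$ contributes $-w_{i\to f_i}$. Here the bookkeeping must stay straight, the summed children are only those surviving in ${{\mathcal{T}}^{\left( 1 \right)}}$, i.e. $i\in {{\mathcal{C}}_{j}}\wedge i\le n_1$, and I would note that the parent of a non-leaf node inside ${{\mathcal{T}}^{\left( 1 \right)}}$ coincides with its parent in $\mathcal{T}$, because the parent of any non-leaf node is itself a non-leaf node. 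Carrying out the products yields the diagonal ${w_j}^2+\sum_{i\in {{\mathcal{C}}_{j}}\wedge i\le n_1}{{w_{i\to f_i}}^2}$ and the off-diagonal $-w_j w_{j\to f_j}$ whenever $j$ is a child of $k$.

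Matching off-diagonals then forces $w_j w_{j\to f_j}=1$, i.e. $w_{j\to f_j}={w_j}^{-1}$, which is the second claimed identity; substituting this into the matched diagonal equation ${w_j}^2+\sum_{i\in {{\mathcal{C}}_{j}}\wedge i\le n_1}{w_i}^{-2}=1+\left| {{\mathcal{C}}_{j}} \right|$ and setting $\theta_j={w_j}^2-1$ produces exactly the stated recursion. The main obstacle is not the algebra but the well-definedness: the recursion runs bottom-up on ${{\mathcal{T}}^{\left( 1 \right)}}$ (children carry larger indices under descending order of height), and I would verify by induction that $1+\theta_j>0$ so that $w_j=\sqrt{1+\theta_j}$ is real, using that $\theta_j\ge 0$ at the leaves of ${{\mathcal{T}}^{\left( 1 \right)}}$ and that each summand ${(1+\theta_i)}^{-1}\le 1$ keeps $\theta_j\ge \left| {{\mathcal{C}}_{j}} \right|-\left|\{ i\in \mathcal{C}_j : i\le n_1\}\right|\ge 0$ at the inductive step. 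This positivity check, together with the careful leaf/non-leaf bookkeeping, is the delicate part of the argument.
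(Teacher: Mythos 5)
Your proposal is correct, and it reaches the theorem by a genuinely different route than the paper. The paper's proof (Appendix A.12) stays inside the QR process: it defines $\theta_i$ as the residual column mass $\sum_{s=n_1+1}^{n} \bigl(m^{(t_i)}_{s,i}\bigr)^2$, and uses the Householder update formula together with the invariants of Thm.~\ref{thm:lqrps} to track how each transformation transfers the nonzeros of a child's column into its parent's column, reading off $w_i=\sqrt{1+\theta_i}$ and $w_{i\to f_i}=w_i^{-1}$ as the pivot values generated along the way. You instead treat the factorization as a black box: taking Thm.~\ref{thm:gmqr} (which guarantees $\mathbf{G}_{\mathcal{T}^{(1)}\leftarrow\mathcal{T}}$ has the sparsity pattern of a Generation Matrix on $\mathcal{T}^{(1)}$) and Thm.~\ref{thm:ipeqr}, you equate the two Gram matrices entrywise with the weights as unknowns. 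Your entry computations are right on both sides: for $\mathbf{M}_{\mathcal{T}}^{T}\mathbf{M}_{\mathcal{T}}$, disjointness of child sets gives diagonal $1+\left|\mathcal{C}_j\right|$ and off-diagonal $-1$ exactly at non-leaf parent--child pairs; for the Generation-Matrix Gram, since row $i$ meets only columns $i$ and $f_i$, you get $w_j^2+\sum_{i\in\mathcal{C}_j\wedge i\le n_1} w_{i\to f_i}^2$ on the diagonal and $-w_j w_{j\to f_j}$ off it; and your remark that the parent of a non-leaf node is itself non-leaf keeps the $\mathcal{T}$ versus $\mathcal{T}^{(1)}$ bookkeeping sound. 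There is no circularity, since Thms.~\ref{thm:gmqr} and \ref{thm:ipeqr} precede this theorem and do not depend on it. Your route is shorter and avoids the Householder bookkeeping entirely; the paper's computation, in exchange, exhibits the intermediate matrices $\mathbf{M}_{\mathcal{T}}^{(k)}$ explicitly, which is what makes the migration picture of Thm.~\ref{thm:lqrps} visible.

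One caveat you should patch: the Gram identity determines the factor only up to a sign per row --- replacing $\mathbf{G}_{\mathcal{T}^{(1)}\leftarrow\mathcal{T}}$ by $\mathbf{D}\,\mathbf{G}_{\mathcal{T}^{(1)}\leftarrow\mathcal{T}}$ with $\mathbf{D}$ diagonal with entries $\pm 1$ flips $w_i$ and $w_{i\to f_i}$ together and leaves the Gram matrix unchanged. So your matching proves $w_i^2=1+\theta_i$ and $w_{i\to f_i}=w_i^{-1}$, but not yet the positive root asserted in the statement. One line closes the gap: by Def.~\ref{def:hst} each Householder step sets the pivot to $y_{s_1,j}=\sqrt{\sum\nolimits_{s\in\mathcal{S}}{m_{s,j}}^2}\ge 0$, so the diagonal of the QR output, i.e.\ each $w_i$, is nonnegative, forcing $w_i=+\sqrt{1+\theta_i}$ and hence $w_{i\to f_i}=w_i^{-1}>0$. (Alternatively, the sign is a harmless normalization, since row signs cancel in the only downstream use $\mathbf{G}^{-1}\left(\mathbf{G}^{-T}\mathbf{x}\right)$ of Cor.~\ref{cor:fgic}.) Your positivity induction $\theta_j\ge 0$ is correct but strictly optional once existence of the factor is granted by Thm.~\ref{thm:gmqr}, since then $1+\theta_j=w_j^2>0$ automatically; it does, however, usefully confirm that the recursion is well defined when run on its own, as in Alg.~\ref{alg:bGM}.
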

\begin{proof}
	See Appendix A.12.
\end{proof}

Thm. \ref{thm:svgm} shows that ${{w}_{i}}$ and ${{w}_{i\to {{f}_{i}}}}$ can be directly calculated by only requiring ${{\theta }_{i}}$, and the calculation of ${{\theta }_{i}}$ only needs to traverse from ${{n}_{1}}$ to $1$ once. Since this process involves the calculation of $\left| {{\mathcal{C}}_{i}} \right|$, the overall time complexity of constructing ${{\mathbf{G}}_{{{\mathcal{T}}^{\left( 1 \right)}}\leftarrow \mathcal{T}}}$ is $O\left( n \right)$. The specific construction process is shown in Alg. \ref{alg:bGM}.

\begin{algorithm}
	\newcommand{\LASTCON}{\item[\algorithmiclastcon]}
	\footnotesize
	\caption{Construct ${{\mathbf{G}}_{{{\mathcal{T}}^{\left( 1 \right)}}\leftarrow \mathcal{T}}}$ from ${{\mathbf{M}}_{\mathcal{T}}}$}
	\renewcommand{\algorithmicrequire}{\textbf{Input:}}
	\renewcommand{\algorithmicensure}{\textbf{Output:}}
	\label{alg:bGM}
	\begin{algorithmic}[1]
		\REQUIRE Consistency Constraint Matrix ${{\mathbf{M}}_{\mathcal{T}}}$
		\ENSURE the inner-product-equivalent ${{\mathbf{G}}_{{{\mathcal{T}}^{\left( 1 \right)}}\leftarrow \mathcal{T}}}$
		\STATE Initialize $\boldsymbol{\theta }\in {{\mathbb{R}}^{{{n}_{1}}\times 1}}$, satisfying ${{\theta }_{i}}=\left | {\mathcal{C}}_{i} \right|$.
		\STATE \textbf{for} $i={{n}_{1}}\text{ to 2}$ \textbf{do} ${{\theta }_{{{f}_{i}}}}\leftarrow {{\theta }_{{{f}_{i}}}}-{{\left( 1+{{\theta }_{i}} \right)}^{-1}}$;
		\STATE \textbf{for} $i=1\text{ to }{{n}_{1}}$ \textbf{do} Let ${{w}_{i}}=\sqrt{1+{{\theta }_{i}}}$;
		\STATE \textbf{for} $i=2\text{ to }{{n}_{1}}$ \textbf{do} Let ${{w}_{i\to {{f}_{i}}}}={{w}_{i}}^{-1}$;
		\STATE Construct ${{\mathbf{G}}_{{{\mathcal{T}}^{\left( 1 \right)}}\leftarrow \mathcal{T}}}$ by ${{w}_{i}}$ and ${{w}_{i\to {{f}_{i}}}}$;
		\RETURN ${{\mathbf{G}}_{{{\mathcal{T}}^{\left( 1 \right)}}\leftarrow \mathcal{T}}}$;
	\end{algorithmic}
\end{algorithm}

In summary, we propose a Generation Matrix-based optimally consistent release algorithm (GMC) for differentially private hierarchical trees, described as Alg. \ref{alg:GIC}. Note that Step 1 to Step 3 in Alg. \ref{alg:GIC} are the normal hierarchical tree release process, and Step 4 is the call of Alg. \ref{alg:bGM}. Only step 5 is the core step, which uses formula \eqref{exp:fic} to achieve optimally consistent release. In the previous, the time complexity of formula \eqref{exp:fic} has been proved as $O\left( n \right)$. Therefore, the overall time complexity of GMC is also $O\left( n \right)$. Besides, it can be seen that GMC is a two-stage algorithm, i.e., the construction of ${{\mathbf{G}}_{{{\mathcal{T}}^{\left( 1 \right)}}\leftarrow \mathcal{T}}}$ and post-processing. If the same hierarchical tree is used for multiple releases, we only need to construct ${{\mathbf{G}}_{{{\mathcal{T}}^{\left( 1 \right)}}\leftarrow \mathcal{T}}}$ once.

\begin{algorithm}
	\newcommand{\LASTCON}{\item[\algorithmiclastcon]}
	\footnotesize
	\caption{Generation Matrix-based Optimally Consistent Release Algorithm}
	\renewcommand{\algorithmicrequire}{\textbf{Input:}}
	\renewcommand{\algorithmicensure}{\textbf{Output:}}
	\label{alg:GIC}
	\begin{algorithmic}[1]
		\REQUIRE hierarchical tree $\mathcal{T}$, dataset $\mathcal{D}$, privacy parameters $\varepsilon $
		\ENSURE the optimally consistent release $\overline{\mathbf{v}}$
		\STATE Construct a vector $\mathbf{x}$ from $\mathcal{D}$, which satisfies ${{x}_{i}}={{\phi }_{i}}\left( \mathbf{ D} \right)$.
		\STATE Build a hierarchical tree with $\mathbf{v}={{\mathbf{G}}_{\mathcal{T}}}^{-T}{{\mathbf{H}}_{\mathcal{H}}}^{T}\mathbf{x}$;
		\STATE Calculate the height of $\mathcal{T}$, then get $\widetilde{\mathbf{v}}$ by adding noise to $\mathbf{v}$, where ${{\widetilde{v}}_{i}}={{v}_{i}}+{{\xi }_{i}},{{\xi }_{i}}\sim \operatorname{Lap}\left( {h }/{\varepsilon }\; \right)$.		
		\STATE Construct ${{\mathbf{M}}_{\mathcal{T}}}$, then substitute it into Alg. \ref{alg:bGM} to obtain ${{\mathbf{G}}_{{{\mathcal{T} }^{\left( 1 \right)}}\leftarrow \mathcal{T}}}$;
		\STATE Calculate $\overline{\mathbf{v}}=\widetilde{\mathbf{v}}-{{\mathbf{M}}_{\mathcal{T}}}\left( {{\mathbf{G}}_{{{\mathcal{T}}^{\left( 1 \right)}}\leftarrow \mathcal{T}}}^{-1}\left( {{\mathbf{G}}_{{{\mathcal{T}}^{\left( 1 \right)}}\leftarrow \mathcal{T}}}^{-T}\left( {{\mathbf{M}}_{\mathcal{T}}}^{T}\widetilde{\mathbf{v}} \right) \right) \right)$;
		\RETURN $\overline{\mathbf{v}}$;
	\end{algorithmic}
\end{algorithm}

In addition, we can further optimize the algorithm. Considering that the construction of ${{\mathbf{G}}_{{{\mathcal{T}}^{\left( 1 \right)}}\leftarrow \mathcal{T}}}$ involves square root extraction, which may cause more calculation overhead, we propose an improved version of Alg. \ref{alg:GIC} to avoid any square root. The main idea is shown as follows.
\begin{equation}
\overline{\mathbf{v}}=\widetilde{\mathbf{v}}-{{\mathbf{M}}_{\mathcal{T}}}\left( \mathbf{G}{{_{{{\mathcal{T}}^{\left( 1 \right)}}}^{\left( \boldsymbol{\theta }',\mathbf{1} \right)}}^{-1}}\left( \boldsymbol{\theta }'*\left( \mathbf{G}{{_{{{\mathcal{T}}^{\left( 1 \right)}}}^{\left( \boldsymbol{\theta }',\mathbf{1} \right)}}^{-T}}\left( {{\mathbf{M}}_{\mathcal{T}}}^{T}\widetilde{\mathbf{v}} \right) \right) \right) \right).
\label{hexp}
\end{equation}
Where the vector $\boldsymbol{\theta }'={{\left[ {{\theta }_{1}},{{\theta }_{2}},\ldots ,{{\theta }_{{{n}_{1}}}} \right]}^{T}}+1$, and ``$*$'' is Hadamard product \cite{mx:mdc}. The new version without square root extraction can slightly improve the calculation efficiency, which shows that the algorithm under the matrix description has high scalability. The improvement of the algorithm only needs to make some slight modifications on Formula \eqref{exp:fic}. Furthermore, we can directly extend the existing model to solve other hierarchical tree problems, such as the hierarchical tree release with the non-uniform privacy budget.

\section{Experiment}

We conducted experiments to verify the performance of GMC our proposed for differentially private hierarchical tree release. All our experiments are run on a computer with Dual $4$ Core $3.9$ GHz AMD Ryzen CPUs, $32$GB RAM, and MATLAB's development software. To improve the reliability of our experimental results, we repeatedly run the same experimental setup $100$ times and then take the average of multiple experimental results as the final results. In addition, we denote the output of our algorithm by ${{\mathbf{v}}^{\left( out \right)}}$ or ${{\mathbf{x}}^{\left( out \right)}}$, $ v_{i}^{\left( out \right)}$, and their $i$-th outputs are denoted as $v_{i}^{\left( out \right)}$ and $ x_{i}^{\left( out \right)}$, respectively.

\subsection{Datasets and Comparison Algorithms}
Our experiments run on $3$ large datasets with more than $10$ million nodes. They are Census2010, NYCTaxi, and SynData, with the following details.

\textbf{Census2010}\cite{dt:csf}: Due to the plan that the U.S. Census Bureau announced using differential privacy\cite{dt:tuc}, we adopt the 2010 U.S. Census dataset as our first dataset, which contains demographic information of all Americans. The statistical results are divided into $8$ levels by the geographic components, i.e., ``United States - State - County - County Subdivision - Place/Remainder - Census Tract - Block Group - Block''. The dataset contains $312,471,327$ individuals, which construct a hierarchical tree containing $11,802,162$ nodes. One of its typical applications is to provide users with queries on the population of the area of interest, called ``Node Query''. For example, a user submits a query request ``What is the population of Albany County in New York, USA?''. The system will return the value at the node ``USA - New York - Albany County''. After verification, we ensured that the data before adding noise satisfies consistency.

\textbf{NYCTaxi}\cite{dt:nyc}: This data set comes from car ride records in New York City in 2013. In order to ensure the uniqueness of the data, we selected the data provided by Creative Mobile Technologies. According to the start time of the taxi, we constructed a statistical histogram of the frequency of people taking a taxi in seconds and provided a range query. The process of building a hierarchical tree is random, which ensures that our algorithm can handle any hierarchical tree structure. The data contains $86,687,775$ individuals. Since we count the ride frequency by seconds, the number of histograms (leaf nodes) totals $31,536,000$. The fan-out of nodes is random. In our experiments, we set the proportion of nodes with fan-outs of $2$, $3$, $4$ and $5$ to $\left\{ 40\%,30\%,20\%,10\% \right\}$. In this way, the hierarchical tree we build contains approximately $45$ to $50$ million nodes.

\textbf{SynData}: To test the hierarchical tree with special structures, we adopt a randomly synthesized dataset for our experiments. In synthetic data, the hierarchical tree structure is a complete binary, and the number of nodes $n$ is controlled by the tree height $h$, where $n={{2}^{h}}-1$. We generate ${{x}_{i}}$ by Poisson distribution, i.e., ${{x}_{i}}\sim \operatorname{Poi}\left( \lambda  \right)$. In our experiments, we set $\lambda =100$ and take $h=24$ as the complete dataset, containing $16,777,215$ nodes. Like NYCTaxi, we focus on the ``Range Query'' on SynData.

\begin{table}[h]
	\centering
	\caption{Description of the Algorithms}
	\label{tab:doa}
	\tabcolsep 3pt
	\begin{tabular}{*{2}{c}}
		\toprule[1.5pt]
		\multirow{1}{2.5cm}{\centering \textbf{Name}}&\multirow{1}{13cm}{\centering \textbf{Description}}\\\hline
		
		\multirow{1}{2.5cm}{\centering \textbf{Processless}}&\multirow{1}{13cm}{No processing after adding noise.}\\\hline
		
		\multirow{2}{2.5cm}{\centering \textbf{Boosting}}&\multirow{2}{13cm}{Boosting\cite{dpht:bta}, the classic post-processing of only for the complete trees. For arbitrary structure, we take the fan-out of root as the fan-out of the algorithm.}\\
		\multicolumn{1}{c}{}&\multicolumn{1}{c}{} \\\hline
		
		\multirow{2}{2.5cm}{\centering \textbf{PrivTrie}}&\multirow{2}{13cm}{The post-processing in PrivTrie\cite{dpht:pef}, which has been proven to achieve the optimally consistent release for arbitrary hierarchical tree.}\\
		\multicolumn{1}{c}{}&\multicolumn{1}{c}{} \\\hline
		
		\multirow{2}{2.5cm}{\centering \textbf{GMC}}&\multirow{2}{13cm}{Our post-processing for arbitrary hierarchical tree, which achieve the optimally consistent release based on Generation Matrix.}\\
		\multicolumn{1}{c}{}&\multicolumn{1}{c}{} \\
		\bottomrule[1.5pt]
	\end{tabular}
\end{table}
Our experiments compare $4$ different algorithm settings. Their details are shown in Tab. \ref{tab:doa}. Because our experiment is for large-scale hierarchical trees with more than $10$ million nodes, the time complexity of the selected algorithm settings is both $O\left( n \right)$.

\subsection{Verifying the Effectiveness for GMC}

In the first experiment, we verify the effectiveness of GMC with two main aspects, i.e., error and consistency. 

The error is measured by the root mean square error ($rmse$), whose calculation methods are different for ``Node Query'' and ``Range Query''. We denote the $rmse$ of ``Node Query'' (For Census2010) by ${{rmse}_{NQ}}$, and its formula is
\begin{equation}
{{rmse}_{NQ}}=\sqrt{\frac{1}{n}\sum\nolimits_{i=1}^{n}{{{\left( v_{i}^{\left( out \right)}-{{v}_{i}} \right)}^{2}}}}.
\label{exp:rmse1}
\end{equation}
And the $rmse$ of ``Range Query'' (For NYCTaxi and SynData) is denoted as ${{rmse}_{RQ}}$, calculated by
\begin{equation}
{{rmse}_{RQ}}=\sqrt{\frac{1}{q}\sum\nolimits_{i=1}^{q}{{{\left( \sum\nolimits_{j={{a}_{i}}}^{{{b}_{i}}}{x_{i}^{\left( out \right)}}-\sum\nolimits_{j={{a}_{i}}}^{{{b}_{i}}}{{{x}_{j}}} \right)}^{2}}}}.
\label{exp:rmse1}
\end{equation}
Where $q$ is the number of range queries selected randomly and without repetition, and the range of the $i$-th query is recorded as $\left[ {{a}_{i}},{{b}_ {i}} \right]$. The main reason for random sampling is that all range queries are up to $C_{m}^{2}$ so that we cannot test all range queries. In our experiment, we take $q={{10}^{5}}$.

The consistency of the outputs is measured by consistency bias. Let $\mathbf{\Delta }={{\mathbf{M}}^{T}}{{\mathbf{v}}^{\left( out \right)}}$, and ${{\Delta }_{i}}$ as the $i$-th element of $\mathbf{\Delta }$, then $bias$ satisfies
\begin{equation}
bias=\sqrt{{\sum\nolimits_{i=1}^{{{n}_{1}}}{{{\Delta }_{i}}^{2}}}/{{{n}_{1}}}\;}.
\label{exp:rmse1}
\end{equation}

Besides, we adopt the complete datasets in the experiment and take $\varepsilon =1$.
\begin{table}[h]
	\centering
	\caption{Experimental Results on Multiple Algorithm Settings and Large-scale Datasets}
	\label{tab:exp1}
	\begin{tabular}{|c|r|r|r|r|r|r|}
		\hline
		\multicolumn{1}{|l|}{\multirow{2}{*}{}} & \multicolumn{2}{c|}{Census2010}                      & \multicolumn{2}{c|}{NYCTaxi}                        & \multicolumn{2}{c|}{SynData}                         \\ \cline{2-7} 
		\multicolumn{1}{|l|}{}                  & \multicolumn{1}{c|}{${{rmse}}$} & \multicolumn{1}{c|}{${{bias}}$} & \multicolumn{1}{c|}{${{rmse}}$} & \multicolumn{1}{c|}{${{bias}}$} & \multicolumn{1}{c|}{${{rmse}}$} & \multicolumn{1}{c|}{${{bias}}$} \\ \hline
		\textbf{Processless} &11.32	&49.61	&138.42	&48.38	&166.37	&61.24
		\\ \hline
		\textbf{Boosting}&11.30	&159.01	&120.75	&22.42	&\textbf{74.94}	&\textbf{0.00}\\ \hline
		\textbf{PrivTrie}&\textbf{11.00}	&\textbf{0.00}	&\textbf{68.69}	&\textbf{0.00}	&\textbf{74.94}	&\textbf{0.00}
		\\ \hline
		\textbf{GMC}&\textbf{11.00}	&\textbf{0.00}	&\textbf{68.69}	&\textbf{0.00}	&\textbf{74.94}	&\textbf{0.00}\\
		\hline
	\end{tabular}
\end{table}

Since the optimally consistent release problem of the differentially private hierarchical tree is a convex optimization problem, its solution is unique. If the outputs of GMC also satisfy optimally consistent, they should be the same as PrivTrie. The results in Tab. \ref{tab:exp1} confirm this point. They show that GMC is effective and correct. However, it does not mean that PrivTrie and GMC are entirely equivalent. Their implementations are entirely different, so we need further to analyze the algorithms' performance in the subsequent experiments.

In addition, the results also show the major drawback of Boosting. i.e., it can only guarantee the consistency of complete trees. If the fan-outs of nodes are different, Boosting cannot guarantee that the results are consistent. Especially for Census2010, where the fan-out of nodes is highly different, Boosting results in more significant consistency bias after post-processing.

\subsection{Performance Testing}

In this section, we focus on the algorithms' performance and test the running time of the algorithms above from small-scale data to large-scale data. To construct hierarchical trees with different scales, we adopted the following different methods according to the characteristics of each dataset. For Census2010, we obtain a smaller hierarchical tree by $k-$order subtree. For example, in its $4-$order subtree, the leaves represent the ``County Subdivision'' level. For NYC taxi, we divide the data of 2013 into $12$ months. The data of the first $k$ months as the $k$-th subset. For SynData, we control the data scale by directly setting the tree height. In the experiment, we used the six different tree heights $\left\{ 8,12,16,20,24 \right\}$ to generate hierarchical trees with different scales. Since Processless does not perform any processing for consistency, we omit it in the experiment.

\begin{figure*}[!h]
	\subfigure[Census2010]{
		\begin{minipage}[t]{5.3cm}
			\centering
			\includegraphics[width=2.3in,trim=20 20 50 60,clip]{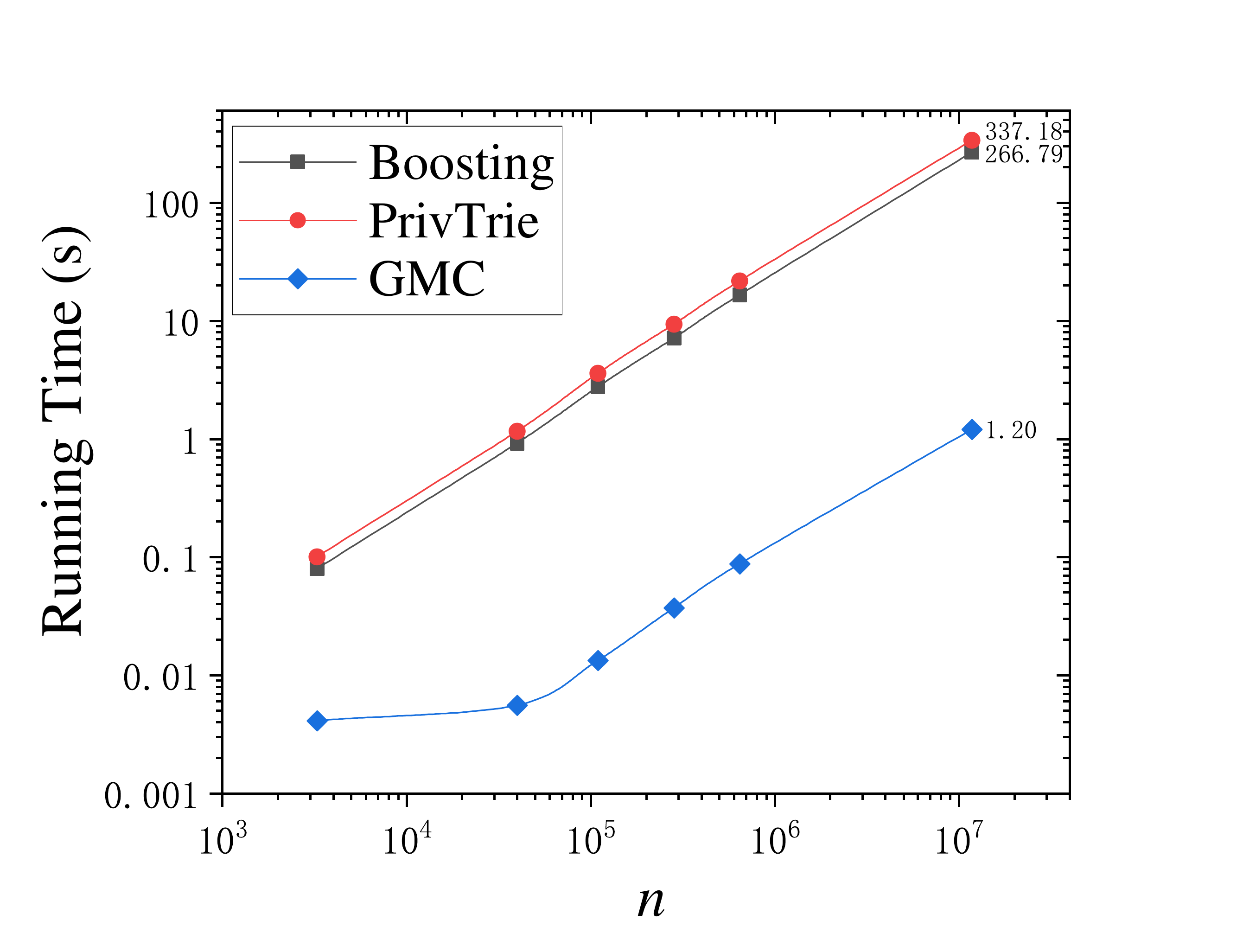}
		\end{minipage}%
	}%
	\subfigure[NYCTaxi]{
		\begin{minipage}[t]{5.4cm}
			\centering
			\includegraphics[width=2.3in,trim=20 20 50 60,clip]{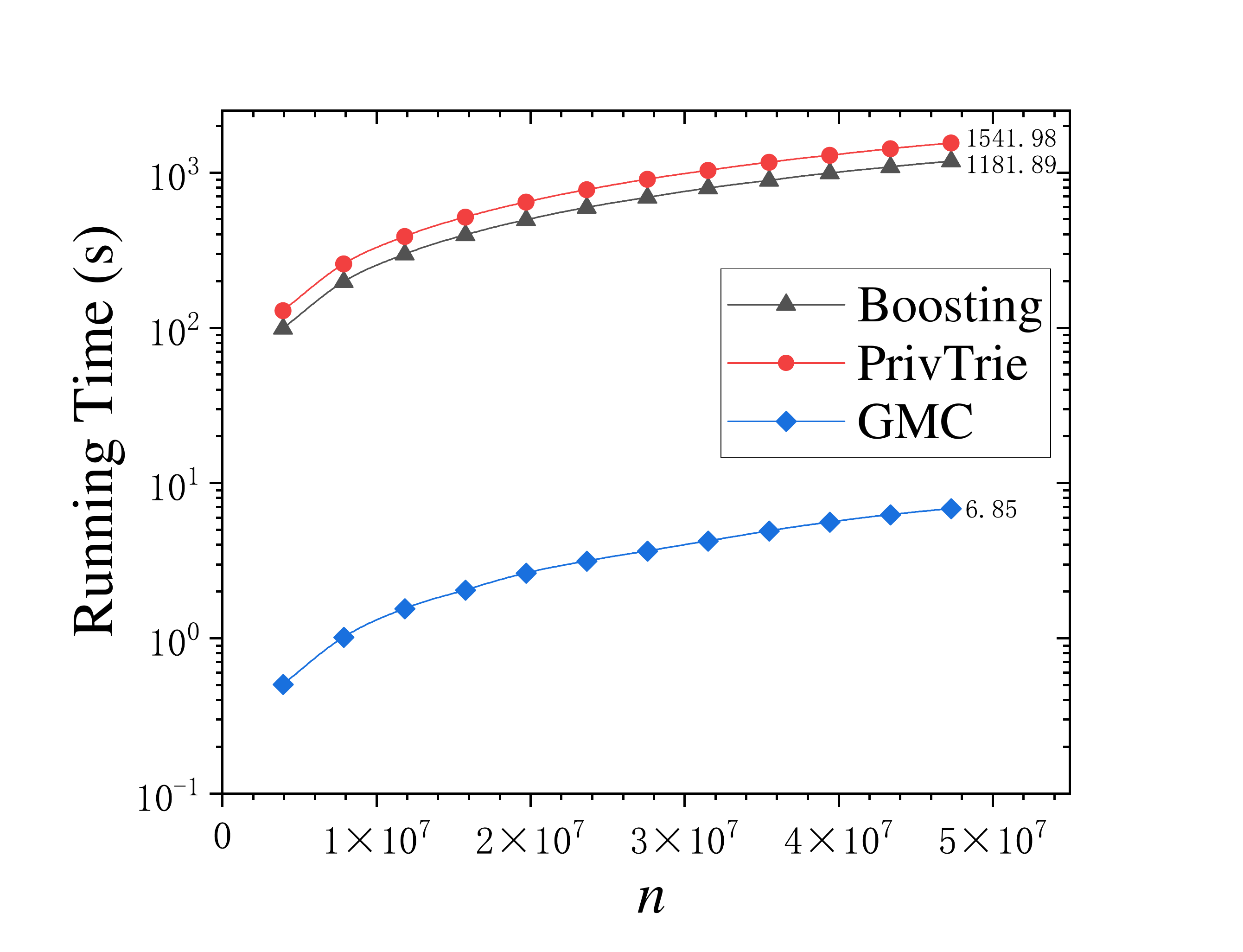}
		\end{minipage}%
	}%
	\subfigure[SynData]{
		\begin{minipage}[t]{5.4cm}
			\centering
			\includegraphics[width=2.3in,trim=20 20 50 60,clip]{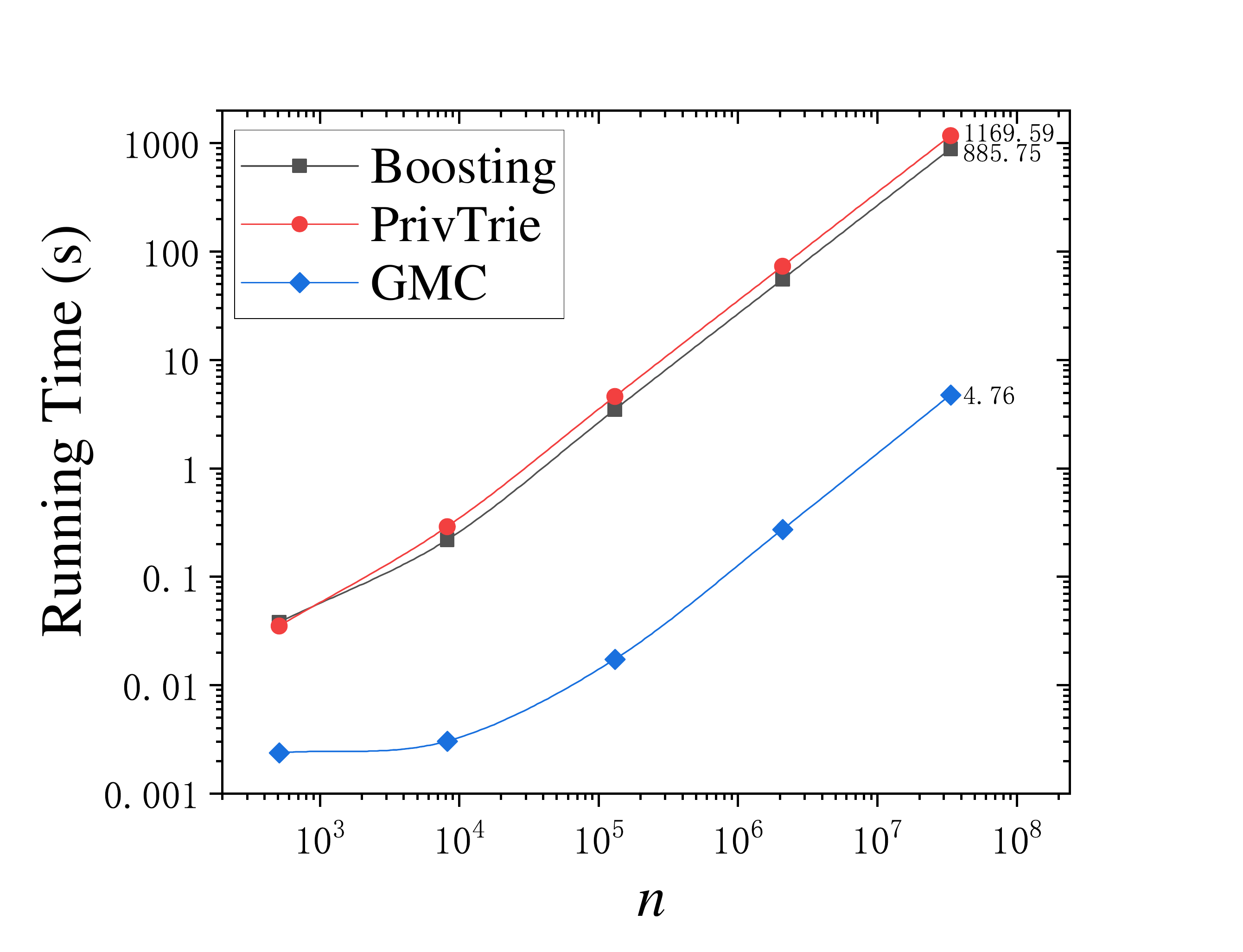}
		\end{minipage}%
	}%
	\caption{The Comparison of Running Time for Various Different Data Scales}
	\label{fig:expVs}
\end{figure*}
In Fig. \ref{fig:expVs}, the experimental results show that all three algorithms can complete the post-processing within some time, but their running times are quite different. Boosting and PrivTrie need more than 200 seconds to process the hierarchical trees with tens of millions of nodes, while GMC only needs about $2$ seconds to process the same data (See Fig. \ref{fig:expVs}a). Their performance gap is up to $100$ times. It shows that even if Boosting and PrivTrie have reached the lowest time complexity with $O\left( n \right)$, they still have much space for performance improvement. In addition to the relatively inefficient recursive algorithms, another important reason is that GMC uses standard matrix operations to complete the post-processing after establishing the matrix model. These standard matrix operations introduce many optimization techniques in the underlying design, which can make full use of computer resources and significantly improve computing efficiency. Although the results above do not deny that Boosting and PrivTrie can handle large-scale hierarchical trees, applying them to some scenarios, such as real-time data updates or more complex models, may cause considerable challenges in computational efficiency.

\begin{figure}[!h]
	\centering
	\includegraphics[width=2.5in,trim=0 10 0 60,clip]{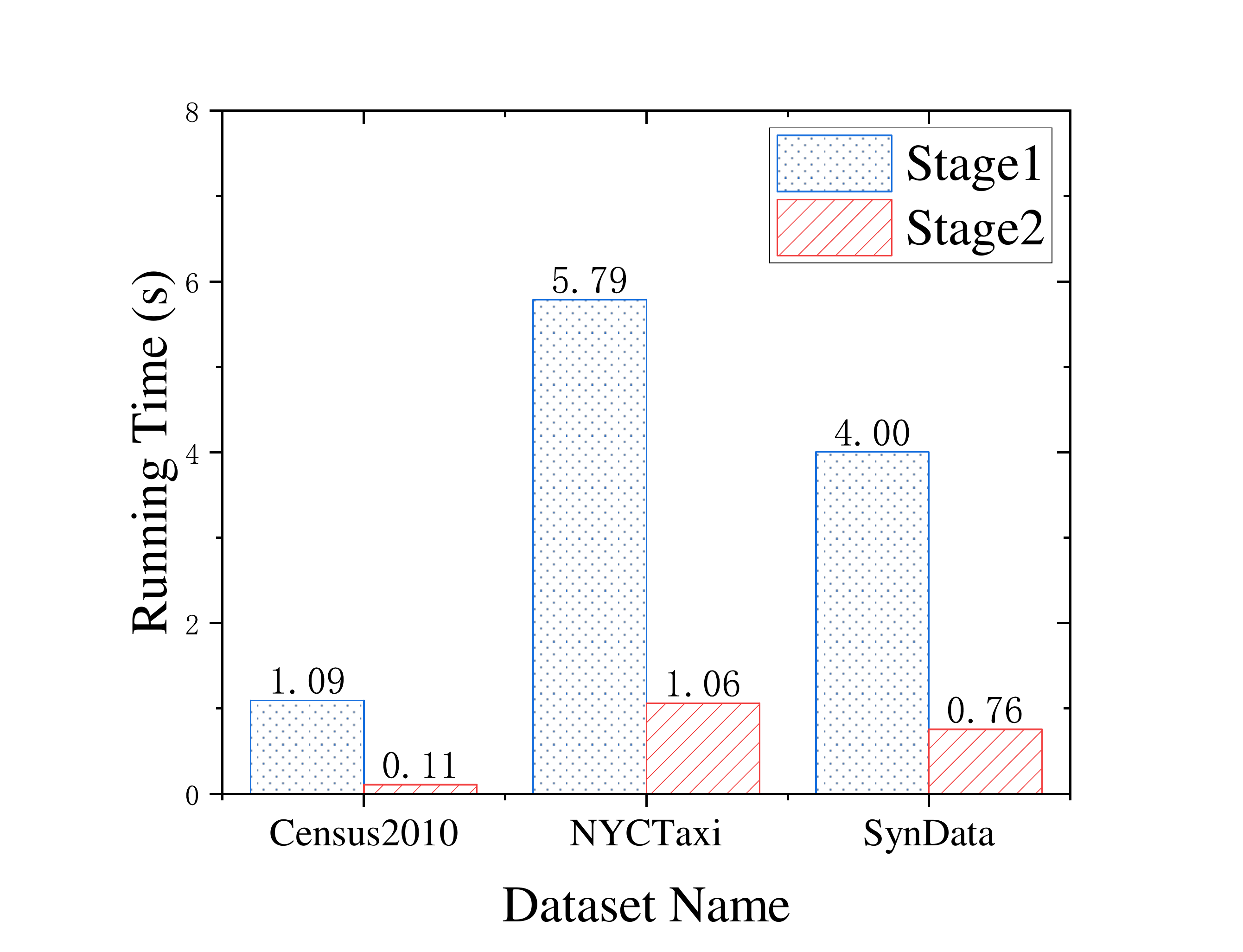}
	\caption{Running Time for Two Stages of GMC with Complete Datasets}
	\label{fig:exp2sg}
\end{figure}

Finally, we test the running time for the two stages of GMC, i.e., the Generation Matrix construction (Stage 1) and the post-processing (Stage 2). In Fig. \ref{fig:exp2sg}, we can see that the time overhead in Stage 1 is much more significant than in Stage 2. The reason is the Generation Matrix construction includes the sorting by Descending Order of Height, counting the number of children, and some high-cost operations such as division, square root (in the process of calculating ${{\theta }_{i}}$). It is worth noting that the Generation Matrix construction is independent of the data to be released and does not involve individual privacy. When the same hierarchical tree structure is reused in multiple releases, we only need to perform the construction process once, further reducing the overall running time of the optimally consistent releases.

\section{Conclusions and Future Work}

In the previous, we successfully defined Generation Matrix and demonstrated many of its critical mathematical properties. Using Generation Matrix, we can implement various hierarchical tree operations without accessing the local structure, which provides crucial theoretical support for the Matrixing Research Method of hierarchical trees. The application on the differentially private hierarchical tree release reflects the Generation Matrix's practicability. The proposed GMC based on Generation Matrix provides a concise algorithm for optimally consistent release. Our experiments show that GMC has achieved a significant performance improvement of up to $100$ times compared with the state-of-the-art schemes.

The scientific problem that we solve in this paper is not very complicated, but it is classic and is suitable as an example to show the practicability of the Generation Matrix. However, the hierarchical tree problems that Generation Matrix can solve are far more than the application. We can use it to explore more complex hierarchical tree release problems and even the problems that have not been solved so far.  For example, the issue of non-negative consistent release of hierarchical trees currently does not exist any closed-form solution with time complexity of $O\left( n \right)$, which makes it very difficult to solve the optimal release satisfying both consistency and non-negativity for large-scale data. Nonetheless, Generation Matrix provides us with a critical analysis tool to challenge the problem.

\appendix
\section{Partial Proofs}

\subsection{Proof of Property \ref{prop:inv}}\label{prov:prop:inv}
\begin{proof}
	According to the basic properties of the lower triangular matrix\cite{mx:tar}, the inverse ${{\mathbf{G}}_{\mathcal{T}}}^{-1}$ of the lower triangular matrix ${{\mathbf{G}}_{\mathcal{T}}}$ is also a lower triangular matrix.
	
	Considering ${{\mathbf{G}}_{\mathcal{T}}}^{-1}$'s $j$-th column vector $\mathbf{g}_{j}^{\left( -1 \right)}={{\mathbf{G}}_{\mathcal{T}}}^{-1}{{\mathbf{e}}_{j}}$, whose $i$-th element is denoted as $g_{ij}^{\left( -1 \right)}$, we have $g_{ij}^{\left( -1 \right)}=0,i<j$. Since only the $j$-th element of ${{\mathbf{e}}_{j}}$ is $1$, but the rest are all $0$, $g_{ij}^{\left( -1 \right)}$, for $i\ge j$, satisfies
	\begin{equation}
	g_{ij}^{\left( -1 \right)}=\left\{ \begin{array}{*{35}{l}}
	1 & ,i=j  \\
	g_{{{f}_{j}},j}^{\left( -1 \right)} & ,i>j  \\
	\end{array} \right..
	\label{rec:ginv}
	\end{equation}
	
	Therefore, $g_{jj}^{\left( -1 \right)}=1$. Next, we adopt the contradiction method to prove. Suppose $j$ is the ancestor of $i$, but $g_{ij}^{\left( -1 \right)}\ne 1$.
	
	Let $t_{0}^{\left( i \right)},t_{1}^{\left( i \right)},\ldots ,t_{p}^{\left( i \right)}$ denote the node $i$ and its $p$ ancestors respectively.
	
	Since $j$ is an ancestor of $i$, there is a $q$ such that $j=it_{q}^{\left( i \right)}$. Therefore, we have $ g_{jj}^{\left( -1 \right)}=g_{t_{q}^{\left( i \right)}j}^{\left( -1 \right)}=\cdots =g_{t_{1}^{\left( i \right)}j}^{\left( -1 \right)}=g_{ij}^{\left( -1 \right)}\ne 1$ according to the recurrence formula \eqref{rec:ginv}.
	
	The conclusion contradicts $g_{jj}^{\left( -1 \right)}=1$. Therefore, when $j$ is an ancestor of $i$, $g_{ij}^{\left( -1 \right)}=1$.
	
	If $j$ is not an ancestor of $i$, obviously $j$ is not the root, i.e., $j>1$. According to the formula \eqref{rec:ginv}, we have $g_{1,j}^{\left( -1 \right)}=g_{t_{p}^{\left( i \right)}j}^{\left(- 1 \right)}=\cdots =g_{t_{1}^{\left( i \right)}j}^{\left( -1 \right)}=g_{ij}^{\left( -1 \right)}$. Since $g_{1,j}^{\left( -1 \right)},j>1$ has been proved to be $0$, thus $g_{ij}^{\left( -1 \right)}=0$.
\end{proof}

\subsection{Proof of Property \ref{prop:can}}\label{prov:prop:can}
\begin{proof}
	Considering the $i$-th row of ${{\mathbf{G}}_{\mathcal{T}}}^{-1}$, we denote ${{U}_{i}}=\left\{ k\left| g_{ik}^{\left( -1 \right)}\ne 0 \right. \right\}$ as the set formed by the column subscripts of the non-zero elements in the $i$-th row. According to $\mathbf{M}={{\left( {{\mathbf{G}}_{\mathcal{T}}}^{T}{{\mathbf{G}}_{\mathcal{T}}} \right)}^{-1}}={{\mathbf{G}}_{\mathcal{T}}}^{-1}{{\mathbf{G}}_{\mathcal{T}}}^{-T}$, we have ${{m}_{ij}}$ satisfied
	
	\begin{equation}
	{{m}_{ij}}= \sum\nolimits_{k=1}^{n}{g_{ik}^{\left( -1 \right)}g_{jk}^{\left( -1 \right)}}=\sum\nolimits_{k\in {{U}_{i}}\cap {{U}_{j}}}{g_{ik}^{\left( -1 \right)}g_{jk}^{\left( -1 \right)}} = \left| {{U}_{i}}\cap {{U}_{j}} \right|  
	\label{hexp}
	\end{equation}
	
	According to Prop. \ref{prop:inv}, ${{U}_{i}}=\left\{ k\left| k\text{ is an ancestor of }i \right. \right\}$. Therefore, ${{U}_{i}}\cap {{U}_{j}}$ is the common ancestor of $i$ and $j$ and ${{m}_{ij}}$ records the number of their common ancestors.
	
	Specifically, if $i=j$, ${{m}_{ii}}=\left| {{U}_{i}} \right|$, i.e., the number of ancestors of $i$ plus $1$ (itself). Let the depth of the root be 1, ${{m}_{ii}}$ is the depth of $i$.
\end{proof}

\subsection{Proof of Property \ref{prop:dd}}\label{prov:prop:dd}
\begin{proof}
	Let $\boldsymbol{\alpha }={{\left[ {{\alpha }_{1}},{{\alpha }_{2}},\ldots ,{{\alpha }_{n}} \right]}^{T}}$, $\boldsymbol{\beta }={{\left[ {{\beta }_{1}},{{\beta }_{2}},\ldots ,{{\beta }_{n}} \right]}^{T}}$. According to Equation \eqref{eq:ddEq}, we have ${{w}_{i}}={{\alpha }_{i}}{{\beta }_{i}}$, ${{w}_{i\to {{f}_{i}}}}={{\alpha }_{{{f}_{i}}}}{{\beta }_{i}}$. That is, $\ln {{w}_{i}}=\ln {{\alpha }_{i}}+\ln {{\beta }_{i}}$, $\ln {{w}_{i\to {{f}_{i}}}}={{\alpha }_{{{f}_{i}}}}+{{\beta }_{i}}$. Therefore, we have the following matrix equation holds.
	\begin{equation}
	\left[ \begin{matrix}
	\mathbf{I} & \mathbf{I}  \\
	\mathbf{I}-\mathbf{G} & \mathbf{I}  \\
	\end{matrix} \right]\left[ \begin{matrix}
	\ln \boldsymbol{\alpha }  \\
	\ln \boldsymbol{\beta }  \\
	\end{matrix} \right]=\left[ \begin{matrix}
	\ln {{\mathbf{w}}_{node}}  \\
	\ln {{\mathbf{w}}_{edge}}  \\
	\end{matrix} \right].
	\label{hexp}
	\end{equation}
	
	According to the property of Block Matrix Inversion, we have
	\begin{equation}
	{{\left[ \begin{matrix}
			\mathbf{I} & \mathbf{I}  \\
			\mathbf{I}-\mathbf{G} & \mathbf{I}  \\
			\end{matrix} \right]}^{-1}}=\left[ \begin{matrix}
	{{\mathbf{G}}^{-1}} & -{{\mathbf{G}}^{-1}}  \\
	\mathbf{I}-{{\mathbf{G}}^{-1}} & {{\mathbf{G}}^{-1}}  \\
	\end{matrix} \right].
	\label{hexp}
	\end{equation}
	And then,
	\begin{equation}
	\left[ \begin{matrix}
	\ln \boldsymbol{\alpha }  \\
	\ln \boldsymbol{\beta }  \\
	\end{matrix} \right]= \left[ \begin{matrix}
	{{\mathbf{G}}_{\mathcal{T}}}^{-1}\left( \ln {{\mathbf{w}}_{node}}-\ln {{\mathbf{w}}_{edge}} \right)  \\
	\ln {{\mathbf{w}}_{node}}+{{\mathbf{G}}_{\mathcal{T}}}^{-1}\left( \ln {{\mathbf{w}}_{edge}}-\ln {{\mathbf{w}}_{node}} \right)  \\
	\end{matrix} \right] = \left[ \begin{matrix}
	\ln \boldsymbol{\alpha }  \\
	\ln {{\mathbf{w}}_{node}}-\ln \boldsymbol{\alpha }  \\
	\end{matrix} \right].
	\label{eq:lnab}
	\end{equation}
	
	Performing the exponential operations $\exp \left( * \right)$ on both sides of the equation \eqref{eq:lnab} at the same time, we have formula \eqref{eq:svab} holds.
\end{proof}

\subsection{Proof of Theorem \ref{thm:clm}}\label{prov:thm:clm}
\begin{proof}
	Let ${{l}_{ij}}$ denote the element in row $i$ and column $j$ of ${{\mathbf{L}}_{\mathcal{T}}}$. According to the definition of Laplacian Matrix\cite{mxp:twm}, we have the diagonal elements ${{l}_{ii}}$ representing the number of adjacent nodes of $i$; and if $j$ is the adjacent node of $i$, i.e., $j={{f}_{i}}$ or $i={{f}_{j}}$, we have ${{l}_{ij}}=-1$; otherwise, ${{l} _{ij}}=0$.
	
	According to the expression \eqref{exp:gclm}, we have
	\begin{equation}
	{{l}_{ij}}=\left\{ \begin{array}{*{35}{l}}
	\left( \sum\nolimits_{k\in {{U}_{1}}}{{{g}_{k1}}{{g}_{k1}}} \right)-1 & ,i=j=1  \\
	\sum\nolimits_{k\in {{U}_{i}}\cap {{U}_{j}}}{{{g}_{ki}}{{g}_{kj}}} & ,\text{otherwise}  \\
	\end{array} \right.
	\end{equation}
	where ${{U}_{j}}=\left\{ k\left| {{g}_{kj}}\ne 0 \right. \right\}=\left\{ j \right\}\cup {{\mathcal{C}}_{j}}$ denotes the set formed by the row subscripts of non-zero elements in the $j$-th column of ${{\mathbf{G}}_{\mathcal{T}}}$.
	
	Next, we discuss the following situations:
	
	For $i=j=1$, under the definition of Laplacian Matrix, ${{l}_{11}}$ should be equal to the number of children of the root, i.e., ${{l}_{11 }}=\left| {{\mathcal{C}}_{1}} \right|$. 	 According to the expression \eqref{exp:gclm}, there is $k\in \left\{ 1 \right\}\cup {{\mathcal{C}}_{1}}$ and ${{l}_{11}}=\left| \left\{ 1 \right\}\cup {{\mathcal{C}}_{1}} \right|-1=\left| {{\mathcal{C}}_{1}} \right|$. It conforms to the definition of Laplacian Matrix.
	
	For $i=j>1$, under the definition of Laplacian Matrix, ${{l}_{ii}}$ should be equal to the number of $i$'s children and parent, i.e., ${ {l}_{ii}}=\left| {{\mathcal{C}}_{i}}\cup {{f}_{i}} \right|=\left| {{\mathcal{C} }_{i}} \right|+1$. According to the expression \eqref{exp:gclm}, ${{l}_{ii}}=\sum\nolimits_{k\in {{U}_{i}}}{{{g}_{ki}}{{g}_ {ki}}}=\left| \left\{ i \right\}\cup {{\mathcal{C}}_{i}} \right|=\left| {{\mathcal{C}}_{ i}} \right|+1$. It conforms to the definition of Laplacian Matrix.
	
	For $i\ne j$ but $i\in {{\mathcal{C}}_{j}}$, under the definition of Laplacian Matrix, ${{l}_{ij}}=-1$. According to the expression \eqref{exp:gclm}, $k\in {{U}_{i}}\cap {{U}_{j}}\text{=}\left( \left\{ i \right\}\cup { {\mathcal{C}}_{i}} \right)\cap \left( \left\{ j \right\}\cup {{\mathcal{C}}_{j}} \right)=\left \{ i \right\}\cap {{\mathcal{C}}_{j}}=\left\{ i \right\}$, we have ${{l}_{ij}}={{ g}_{ii}}{{g}_{ij}}=-1$. Again, It conforms to the definition of Laplacian Matrix. Similarly, if $i\ne j$ and $j\in {{\mathcal{C}}_{i}}$, ${{l}_{ij}}=-1$.
	Finally, when $i\ne j$ and $j$ and $i$ do not contain a parent-child relationship, ${{U}_{i}}\cap {{U}_{j}}=\varnothing $ , ${{L}_{ij}}=0$, which also conforms to the definition of Laplacian Matrix.
	
	Therefore, in any case, the expression\eqref{exp:gclm} always conforms to the definition of the Laplacian Matrix.
\end{proof}

\subsection{Proof of Theorem \ref{thm:cdm}}\label{prov:thm:cdm}
\begin{proof}
	Let ${{d}_{ij}}$ denote the element in row $i$ and column $j$ of ${{\mathbf{D}}_{\mathcal{T}}}$. 
	
	\begin{figure}[h]
		\centering
		\includegraphics[width=1.5in,trim=0 480 0 0,clip]{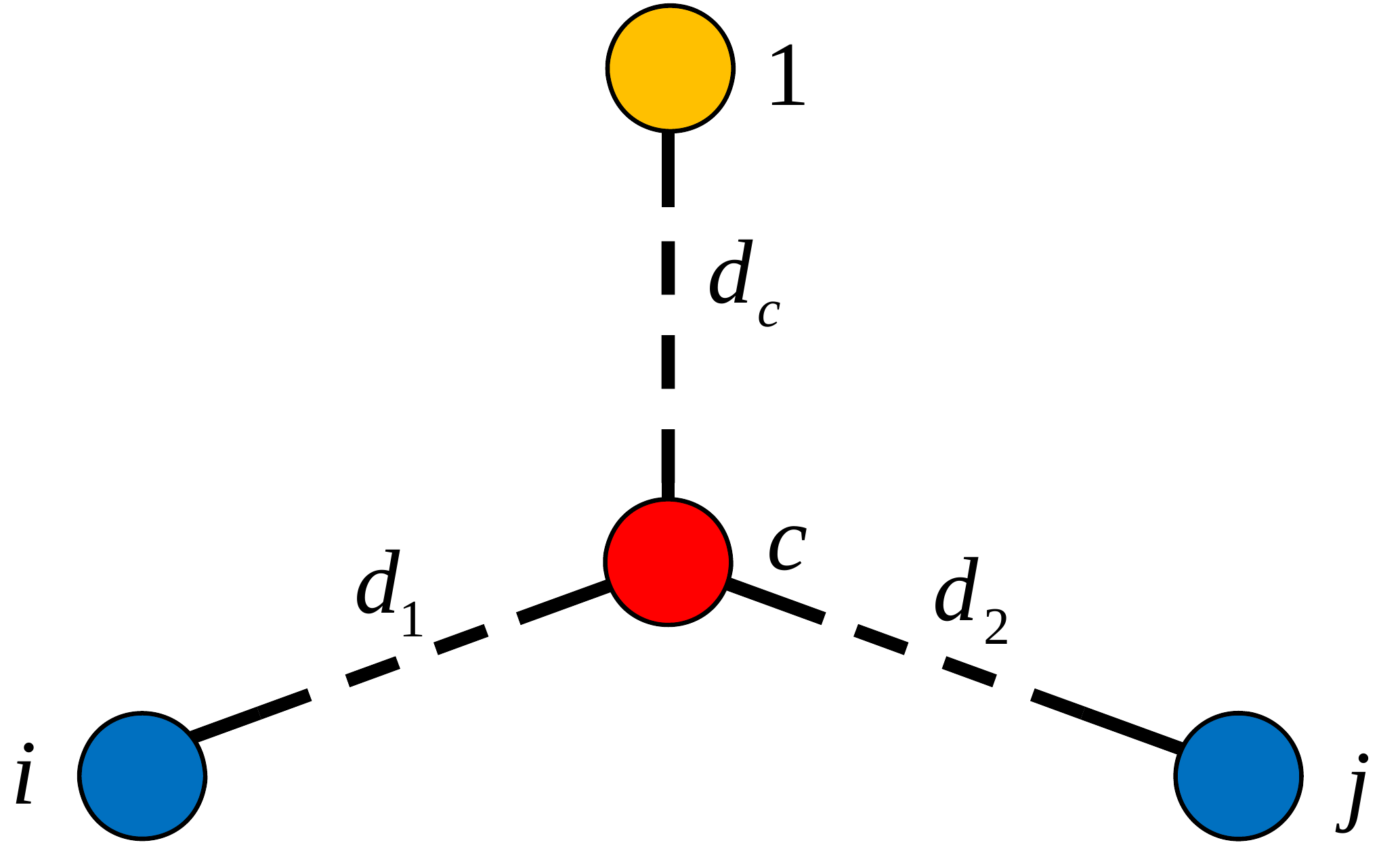}
		\caption{The Distance of Node $i$, $j$ and Their Nearest Common Ancestor}
		\label{fig:pr1}
	\end{figure}
	According to the definition of distance matrix\cite{mxp:sdm}, ${{d}_{ij}}$ is the distance from node $i$ to $j$. As shown in Fig. \ref{fig:pr1}, $c$ (red node) is the nearest common ancestor of nodes $i$ and $j$. ${{d}_{c}}$ denotes the depth of node $c$ (i.e., the distance from node C to the root (orange node) plus 1); ${{d}_{1}}$ denotes the distance from $c$ to $i$; ${{d}_{2}}$ denotes the distance from $c$ to $j$. Obviously, the distance ${{d}_{ij}}={{d}_{1}}+{{d}_{2}}$.
	
	Considering expression \eqref{exp:gcdm}, let ${{\mathbf{M}}_{1}}={{\mathbf{G}}_{\mathcal{T}}}^{-1}\mathbf{I}{{\mathbf{I}}^{T}}$, ${{\mathbf{M}}_{2}}=\mathbf{I}{{\mathbf{I}}^{T}}{{\mathbf{G}}_{\mathcal{T}}}^{-T}$, ${{\mathbf{M}}_{2}}=\mathbf{I}{{\mathbf{I}}^{T}}{{\mathbf{G}}_{\mathcal{T}}}^{-T}$ and $m_{ij}^{\left( k \right)}$ denotes the element in the row $i$ and column $j$ of ${{\mathbf{M}}_{k}}$. According to Prop. \ref{prop:depn}, $m_{ij}^{\left( 1 \right)}=m_{ji}^{\left( 2 \right)}$ is the depth of the node $i$, i.e., $m_{ij}^{\left( 3 \right)}={{d}_{c}}$. According to Prop. \ref{prop:can}, $m_{ij}^{\left( 3 \right)}$ is the number of common ancestors of $i$ and $j$ (the depth of $c$), i.e., $m_{ij}^{\left( 3 \right)}={{d}_{c}}$.
	
	Substituting them into expression \eqref{exp:gcdm}, we have
	\begin{equation}
	{{d}_{ij}}= m_{ij}^{\left( 1 \right)}+m_{ij}^{\left( 2 \right)}-2m_{ij}^{\left( 3 \right)} = \left( {{d}_{1}}+{{d}_{c}} \right)+\left( {{d}_{2}}+{{d}_{c}} \right)-2{{d}_{c}} = {{d}_{1}}+{{d}_{2}}
	\label{exp:dij}
	\end{equation}
	The expression  \eqref{exp:dij} is consistent with the definition, so ${{\mathbf{D}}_{\mathcal{T}}}$ can be calculated by expression \eqref{exp:gcdm}.
\end{proof}
\subsection{Proof of Theorem \ref{thm:ccm}}\label{prov:thm:ccm}
\begin{proof}
	Let ${{c}_{ij}}$ denote the element in row $i$ and column $j$ of ${{\mathbf{C}}_{\mathcal{T}}}$. 
	
	According to the definition of Ancestral Matrix\cite{mxp:tam}, ${{c}_{ij}}$ represents the distance from the nearest common ancestor of leaves $i$ and $j$ to the root.
	
	Note, Prop. \ref{prop:can} has been proved that the element of ${{\left( {{\mathbf{G}}_{\mathcal{T}}}^{T}{{\mathbf{G}}_{\mathcal {T}}} \right)}^{-1}}$ is the number of the common nodes, i.e., the distance from the nearest common ancestor to the root node plus $1$.
	
	Therefore, we can get ${{\mathbf{C}}_{\mathcal{T}}}$ by taking the sub-matrix corresponding to the leaves in ${{\left( {{\mathbf{G}}_{\mathcal{T}}}^{T}{{\mathbf{G}}_{\mathcal{T}}} \right)}^{-1}}$ and then subtracting $1$.
\end{proof}

\subsection{Proof of Theorem \ref{thm:mse}}\label{prov:thm:mse}
\begin{proof}
	According to formula \eqref{dp:addNoi}, the noise we add to each element in $\widetilde{\mathbf{v}}$ is i.i.d, and satisfies $\operatorname{Lap}\left( {h}/{ \varepsilon }\; \right)$. Since the variance of $\operatorname{Lap}\left( b \right)$ is $2{{b}^{2}}$, the covariance matrix  of $\widetilde{\mathbf{v}}$ is $D\left( \widetilde{\mathbf{v}} \right)=2\left( {{{h}^{2}}}/{{{\varepsilon }^{2}}}\; \right)\mathbf{I}$.
	
	According to the mean square error analysis of differential privacy, we have
	\begin{equation}
	\operatorname{mse}\left( \widetilde{\mathbf{v}} \right)= \operatorname{trace}\left( D\left( \widetilde{\mathbf{v}} \right) \right)=2\left( {{{h}^{2}}}/{{{\varepsilon }^{2}}}\; \right)\operatorname{trace}\left( \mathbf{I} \right) = {2n{{h}^{2}}}/{{{\varepsilon }^{2}}}\;
	\end{equation}
	
	In addition, according to formula \eqref{exp:ic},
	\begin{equation}
	D\left( \overline{\mathbf{v}} \right)= D\left( \left( \mathbf{I}-{{\mathbf{M}}_{\mathcal{T}}}{{\left( {{\mathbf{M}}_{\mathcal{T}}}^{T}{{\mathbf{M}}_{\mathcal{T}}} \right)}^{-1}}{{\mathbf{M}}_{\mathcal{T}}}^{T} \right)\widetilde{\mathbf{v}} \right) = 2\left( {{{h}^{2}}}/{{{\varepsilon }^{2}}}\; \right)\left( \mathbf{I}-{{\mathbf{M}}_{\mathcal{T}}}{{\left( {{\mathbf{M}}_{\mathcal{T}}}^{T}{{\mathbf{M}}_{\mathcal{T}}} \right)}^{-1}}{{\mathbf{M}}_{\mathcal{T}}}^{T} \right)  
	\end{equation}
	Then,
	\begin{equation}
	\begin{aligned}
	 \operatorname{mse}\left( \overline{\mathbf{v}} \right)=& \operatorname{trace}\left( D\left( \overline{\mathbf{v}} \right) \right) = 2\left( {{{h}^{2}}}/{{{\varepsilon }^{2}}}\; \right)\operatorname{trace}\left( \mathbf{I}-{{\mathbf{M}}_{\mathcal{T}}}{{\left( {{\mathbf{M}}_{\mathcal{T}}}^{T}{{\mathbf{M}}_{\mathcal{T}}} \right)}^{-1}}{{\mathbf{M}}_{\mathcal{T}}}^{T} \right) \\ 
	=& 2\left( {{{h}^{2}}}/{{{\varepsilon }^{2}}}\; \right)\left( n-\operatorname{trace}\left( {{\mathbf{M}}_{\mathcal{T}}}^{T}{{\mathbf{M}}_{\mathcal{T}}}{{\left( {{\mathbf{M}}_{\mathcal{T}}}^{T}{{\mathbf{M}}_{\mathcal{T}}} \right)}^{-1}} \right) \right) \\ 
	\end{aligned}.
	\label{exp:msev1}
	\end{equation}
	Since ${{\mathbf{M}}_{\mathcal{T}}}\in {{\mathbb{R}}^{n\times {{n}_{1}}}}$, we have
	\begin{equation}
	\operatorname{trace}\left( {{\mathbf{M}}_{\mathcal{T}}}^{T}{{\mathbf{M}}_{\mathcal{T}}}{{\left( {{\mathbf{M}}_{\mathcal{T}}}^{T}{{\mathbf{M}}_{\mathcal{T}}} \right)}^{-1}} \right)=\operatorname{trace}\left( \mathbf{I} \right)={{n}_{1}}.
	\end{equation}
	Substituting it into \eqref{exp:msev1}, we have
	\begin{equation}
	\operatorname{mse}\left( \overline{\mathbf{v}} \right)=2\left( {{{h}^{2}}}/{{{\varepsilon }^{2}}}\; \right)\left( n-{{n}_{1}} \right)={2m{{h}^{2}}}/{{{\varepsilon }^{2}}}\;
	\end{equation}
\end{proof}

\subsection{Proof of Theorem \ref{thm:lqrps}}\label{prov:thm:lqrps}
\begin{proof}
	By Def. \ref{def:lqrd}, it is obvious that the process of Householder transformation satisfies property c). Because the process of $\text{``LO''}-$QR decomposition is the process of setting $0$ column by column from the last column to the first column.
	
	According to Def. \ref{def:cmm}, $\mathbf{M}_{\mathcal{T}}^{\left( 0 \right)}$ satisfies both properties a),b) and c).
	
	Assume that M satisfies both properties a), b) and c). By the formula \eqref{exp:hsQR}, the expression of the $k$-th Householder transformation is as follows:
	\begin{equation}
	\mathbf{M}_{\mathcal{T}}^{\left( k \right)}={{\operatorname{House}}_{{{\mathcal{S}}_{k}},{{t}_{k}}}}\left( \mathbf{M}_{\mathcal{T}}^{\left( k-1 \right)} \right),
	\label{hexp}
	\end{equation}
	where ${{t}_{k}}={{n}_{1}}-k+1$. Considering that $\mathbf{M}_{\mathcal{T}}^{\left( k-1 \right)}$ satisfies the property a) and is a lower triangular matrix, the values of ${{\mathcal{S}}_{k}}$ can be simplified, taking ${{\mathcal{S}}_{k}}=\left\langle {{t}_{k}},{{n}_{1}}+1,\ldots ,n \right\rangle$.
	
	According to Def. \ref{def:hst}
	, during the transformation process, ${{\mathbf{m}}_{k}}$ and ${{\boldsymbol{\omega }}_{k}}$ satisfy
	\begin{equation}
	\left\{ \begin{aligned}
	& {{\mathbf{m}}_{k}}={{\left[ m_{{{t}_{k}},{{t}_{k}}}^{\left( k-1 \right)},m_{{{n}_{1}}+1,{{t}_{k}}}^{\left( k-1 \right)},m_{{{n}_{1}}+2,{{t}_{k}}}^{\left( k-1 \right)},\ldots ,m_{n,{{t}_{k}}}^{\left( k-1 \right)} \right]}^{T}} \\ 
	& {{\boldsymbol{\omega }}_{k}}={{\left[ m_{{{t}_{k}},{{t}_{k}}}^{\left( k-1 \right)}-\rho ,m_{{{n}_{1}}+1,{{t}_{k}}}^{\left( k-1 \right)},m_{{{n}_{1}}+2,{{t}_{k}}}^{\left( k-1 \right)},\ldots ,m_{n,{{t}_{k}}}^{\left( k-1 \right)} \right]}^{T}} \\ 
	\end{aligned} \right.,
	\label{hexp}
	\end{equation}
	where ${{\rho }_{k}}=\left\| {{\mathbf{m}}_{k}} \right\|=\sqrt{\sum\limits_{s\in {{\mathcal{S}}_{k}}}{m{{_{s,{{t}_{k}}}^{\left( k \right)}}^{2}}}}$. By calculating ${{\boldsymbol{\omega }}_{k}}^{T}{{\mathbf{m}}_{k}}$, we have
	\begin{equation}
	{{\boldsymbol{\omega }}_{k}}^{T}{{\mathbf{m}}_{k}}=\left( m_{{{t}_{k}},{{t}_{k}}}^{\left( k-1 \right)}-{{\rho }_{k}} \right)m_{{{t}_{k}},j}^{\left( k-1 \right)}+\sum\limits_{s={{n}_{1}}+1}^{n}{m_{s,{{t}_{k}}}^{\left( k-1 \right)}m_{s,j}^{\left( k-1 \right)}}.
	\label{hexp}
	\end{equation}
	
	Due to $\mathbf{M}_{\mathcal{T}}^{\left( k-1 \right)}$ satisfies the property b), for the same row $s>{{n}_{1}}$, at most only one of $ m_{s,{{t}_{k}}}^{\left( k-1 \right)}$ and $ m_{s,j}^{\left( k-1 \right)}$ in $\mathbf{M}_{\mathcal{T}}^{\left( k-1 \right)}$ is non-zero. Therefore,
	\begin{equation}
	{{\boldsymbol{\omega }}_{k}}^{T}{{\mathbf{m}}_{k}}=\left( m_{{{t}_{k}},{{t}_{k}}}^{\left( k-1 \right)}-{{\rho }_{k}} \right)m_{{{t}_{k}},j}^{\left( k-1 \right)}.
	\label{hexp}
	\end{equation}
	
	Simplifying according to formula \eqref{exp:hst}, we can get $\mathbf{M}_{\mathcal{T}}^{\left( k \right)}$ after $k$-th Householder transformation, whose element $m_{i,j}^{\left( k \right)}$ in in row $i$ and column $j$ satisfies
	\begin{equation}
	m_{i,j}^{\left( k \right)}=\left\{ \begin{array}{*{35}{l}}
	{{\rho }_{k}} & ,i={{t}_{k}},j={{t}_{k}}  \\
	0 & ,{{n}_{1}}+1\le i\le n,j={{t}_{k}}  \\
	{m_{{{t}_{k}},j}^{\left( k-1 \right)}m_{{{t}_{k}},{{t}_{k}}}^{\left( k-1 \right)}}/{{{\rho }_{k}}} & ,i={{t}_{k}},j\ne {{t}_{k}}  \\
	m_{i,j}^{\left( k-1 \right)}+{m_{i,{{t}_{k}}}^{\left( k-1 \right)}m_{{{t}_{k}},j}^{\left( k-1 \right)}}/{{{\rho }_{k}}} & ,{{n}_{1}}+1\le i\le n,j\ne {{t}_{k}}  \\
	m_{i,j}^{\left( k-1 \right)} & ,\text{otherwise}  \\
	\end{array} \right.
	\label{exp:mRs}
	\end{equation}
	
	According to the recursive expression \eqref{exp:mRs}, consider the properties of $\mathbf{M}_{\mathcal{T}}^{\left( k \right)}$. First, consider the property a). Since $\mathbf{M}_{\uparrow }^{\left( k-1 \right)}\sim {{\mathbf{G}}_{{{\mathcal{T}}^{\left( 1 \right)}}}}$, and only the row ${{t}_{k}}={{n}_{1}}-k+1$ in the first ${{n}_{1}}$ rows of $\mathbf{M}_{\mathcal{T}}^{\left( k-1 \right)}$ is affected, according to the sparsity of GM, it can be known that the ${{t}_{k}}$-th row of $\mathbf{M}_{\mathcal{T}}^{\left( k-1 \right)}$ satisfies $ m_{{{t}_{k}},j}^{\left( k-1 \right)}\ne 0$ if and only if $ j\in \left\{ {{t}_{k}},{{f}_{{{t}_{k}}}} \right\}$.
	
	Since ${m_{{{t}_{k}},{{t}_{k}}}^{\left( k-1 \right)}}/{{{\rho }_{k}}}\;\ne 0$, according to $ m_{{{t}_{k}},j}^{\left( k \right)}={m_{{{t}_{k}},j}^{\left( k-1 \right)}m_{{{t}_{k}},{{t}_{k}}}^{\left( k-1 \right)}}/{{{\rho }_{k}}}\;$ for $ j\ne {{t}_{k}}$, we have $ m_{{{t}_{k}},j}^{\left( k \right)}\ne 0\Leftrightarrow m_{{{t}_{k}},j}^{\left( k-1 \right)}\ne 0$.
	
	Therefore, there is no mutual conversion between non-zero elements and zero elements in the only affected ${{t}_{k}}$-th row. $\mathbf{M}_{\mathcal{T}}^{\left( k \right)}$ satisfies property a).
	
	Next consider property b). For the rows ${{n}_{1}}+1\sim n$, according to the property c), all the values of the ${{t}_{k}}$-th column of $\mathbf{M}_{\downarrow }^{\left( k \right)}$ are $0$. Consider the $j$-th column $\left( j\ne {{t}_{k}} \right)$ of $\mathbf{M}_{\downarrow }^{\left( k \right)}$, we have
	\begin{equation}
	m_{i,j}^{\left( k \right)}=m_{i,j}^{\left( k-1 \right)}+{m_{i,{{t}_{k}}}^{\left( k-1 \right)}m_{{{t}_{k}},j}^{\left( k-1 \right)}}/{{{\rho }_{k}}}\;.
	\label{hexp}
	\end{equation}
	
	If $j\ne {{f}_{{{t}_{k}}}}$, there is $m_{{{t}_{k}},j}^{\left( k-1 \right)}=0$, i.e., $m_{i,j}^{\left( k \right)}=m_{i,j}^{\left( k-1 \right)}$; If $j={{f}_{{{t}_{k}}}}$, there is $m_{{{t}_{k}},j}^{\left( k-1 \right)}\ne 0$, i.e., $m_{i,j}^{\left( k \right)}\ne 0\Leftrightarrow m_{i,j}^{\left( k-1 \right)}\ne 0\vee m_{i,{{t}_{k}}}^{\left( k-1 \right)}\ne 0$.
	
	Therefore, the $k$-th Householder transformation is equivalent to transferring non-zero elements from the ${{t}_{k}}$-th column of $\mathbf{M}_{\downarrow }^{\left( k-1 \right)}$ to the ${{f}_{{{t}_{k}}}}$-th column of $\mathbf{M}_{\downarrow }^{\left( k-1 \right)}$. The process keeps the property b) holds.
	
	In summary, $\mathbf{M}_{\mathcal{T}}^{\left( k \right)}$ also satisfies the properties a), b) and c). Since $\mathbf{M}_{\mathcal{T}}^{\left( 0 \right)}$ satisfies the properties a), b) and c), all $\mathbf{M}_{\mathcal{T}}^{\left( k \right)} $$\left( 0\le k\le {{n}_{1}}-1 \right) $ satisfy the properties a), b) and c).
\end{proof}

\subsection{Proof of Theorem \ref{thm:gmqr}}\label{prov:thm:gmqr}
\begin{proof}
	By Thm. \ref{thm:lqrps}, after ${{n}_{1}}-1$ Householder transformations, $\mathbf{M}_{\mathcal{T}}^{\left( {{n}_{1}}-1 \right)}$ satisfies both the properties a), b), and c).
	
	Consider $\mathbf{M}_{\mathcal{T}}^{\left( {{n}_{1}} \right)}={{\operatorname{House}}_{{{\mathcal{S}}_{{{n}_{1}}}},1}}\left( \mathbf{M}_{\mathcal{T}}^{\left( {{n}_{1}}-1 \right)} \right)$ for the ${{n}_{1}}$-th Householder transformation. Due to $\mathbf{M}_{\uparrow }^{\left( {{n}_{1}}-1 \right)}\sim {{\mathbf{G}}_{{{\mathcal{T}}^{\left( 1 \right)}}}}$ and only $m_{1,1}^{\left( {{n}_{1}}-1 \right)}$ in the first row of $\mathbf{M}_{\uparrow }^{\left( {{n}_{1}}-1 \right)}$ is non-zero, which is the only element of $\mathbf{M}_{\uparrow }^{\left( {{n}_{1}}-1 \right)}$ affected by the last transformation. The transformed $m_{1,1}^{\left( {{n}_{1}} \right)}$ satisfies
	\begin{equation}
	m_{1,1}^{\left( {{n}_{1}} \right)}=\sqrt{m{{_{1,1}^{\left( {{n}_{1}}-1 \right)}}^{2}}+\sum\limits_{s={{n}_{1}}+1}^{n}{m{{_{s,1}^{\left( {{n}_{1}}-1 \right)}}^{2}}}}.
	\label{hexp}
	\end{equation}
	
	Therefore, there is still $\mathbf{M}_{\uparrow }^{\left( {{n}_{1}} \right)}\sim {{\mathbf{G}}_{{{\mathcal{T}}^{\left( 1 \right)}}}}$ after the last transformation, i.e., there is a $\mathbf{G}_{{{\mathcal{T}}^{\left( 1 \right)}}}^{\left( {{\mathbf{w}}_{node}},{{\mathbf{w}}_{edge}} \right)}$ satisfying $\mathbf{G}_{{{\mathcal{T}}^{\left( 1 \right)}}}^{\left( {{\mathbf{w}}_{node}},{{\mathbf{w}}_{edge}} \right)}=\mathbf{M}_{\uparrow }^{\left( {{n}_{1}} \right)}$. Besides, according to Def. \ref{def:hst} and property c), it can be known that $\mathbf{M}_{\downarrow }^{\left( {{n}_{1}} \right)}=\mathbf{O}$ after the last transformation.
\end{proof}

\subsection{Proof of Theorem \ref{thm:ipeqr}}\label{prov:thm:ipeqr}

\begin{proof}
	Let ${{\mathbf{G}}_{{{\mathcal{T}}^{\left( 1 \right)}}\leftarrow \mathcal{T}}}$ be the upper half of ${{\mathbf{M}}_{\mathcal{T}}}$ after the $\text{``LO''}-$QR decomposition. According to Thm. \ref{thm:gmqr}, we have
	\begin{equation}
	{{\mathbf{M}}_{\mathcal{T}}}=\mathbf{Q}\left[ \begin{matrix}
	{{\mathbf{G}}_{{{\mathcal{T}}^{\left( 1 \right)}}\leftarrow \mathcal{T}}}  \\
	\mathbf{O} \\
	\end{matrix} \right].
	\label{hexp}
	\end{equation}
	
	Substituting it into the expression ${{\mathbf{M}}_{\mathcal{T}}}^{T}{{\mathbf{M}}_{\mathcal{T}}}$, we have
	\begin{equation}
	{{\mathbf{M}}_{\mathcal{T}}}^{T}{{\mathbf{M}}_{\mathcal{T}}}= \left[ \begin{matrix}
	{{\mathbf{G}}_{{{\mathcal{T}}^{\left( 1 \right)}}\leftarrow \mathcal{T}}}^{T} & \mathbf{O}  \\
	\end{matrix} \right]{{\mathbf{Q}}^{T}}\mathbf{Q}\left[ \begin{matrix}
	{{\mathbf{G}}_{{{\mathcal{T}}^{\left( 1 \right)}}\leftarrow \mathcal{T}}}  \\
	\mathbf{O}  \\
	\end{matrix} \right] = \left[ \begin{matrix}
	{{\mathbf{G}}_{{{\mathcal{T}}^{\left( 1 \right)}}\leftarrow \mathcal{T}}}^{T} & \mathbf{O}  \\
	\end{matrix} \right]\left[ \begin{matrix}
	{{\mathbf{G}}_{{{\mathcal{T}}^{\left( 1 \right)}}\leftarrow \mathcal{T}}}  \\
	\mathbf{O}  \\
	\end{matrix} \right] = {{\mathbf{G}}_{{{\mathcal{T}}^{\left( 1 \right)}}\leftarrow \mathcal{T}}}^{T}{{\mathbf{G}}_{{{\mathcal{T}}^{\left( 1 \right)}}\leftarrow \mathcal{T}}}.  
	\label{hexp}
	\end{equation}
\end{proof}

\subsection{Proof of Corollary \ref{cor:fgic}}\label{prov:cor:fgic}

\begin{proof}
	Since ${{\mathbf{G}}_{{{\mathcal{T}}^{\left( 1 \right)}}\leftarrow \mathcal{T}}}$ is reversible, according to Thm. \ref{thm:ipeqr}, substituting formula \eqref{exp:ipe} into $\mathbf{y}={{\left( {{\mathbf{M}}_{\mathcal{T}}}^{T}{{\mathbf{M}}_{\mathcal{T}}} \right)}^{-1}}\mathbf{x}$, we have
	\begin{equation}
	\mathbf{y}= {{\left( {{\mathbf{M}}_{\mathcal{T}}}^{T}{{\mathbf{M}}_{\mathcal{T}}} \right)}^{-1}}\mathbf{x} = {{\mathbf{G}}_{{{\mathcal{T}}^{\left( 1 \right)}}\leftarrow \mathcal{T}}}^{-1}\left( {{\mathbf{G}}_{{{\mathcal{T}}^{\left( 1 \right)}}\leftarrow \mathcal{T}}}^{-T}\mathbf{x} \right). 
	\label{hexp}
	\end{equation}
\end{proof}

\subsection{Proof of Theorem \ref{thm:svgm}}\label{prov:thm:svgm}

\begin{proof}
	Define a sequence ${{\theta }_{i}}\left( 1\le i\le {{n}_{1}} \right)$ satisfies
	\begin{equation}
	{{\theta }_{i}}=\sum\limits_{s={{n}_{1}}+1}^{n}{m{{_{s,i}^{\left( {{t}_{i}} \right)}}^{2}}},
	\label{exp:theta}
	\end{equation}
	where ${{t}_{i}}={{n}_{1}}-i$. The $\left( {{t}_{i}}+1 \right)$-th Householder transformation is the Householder transformation on the $i$-th column. And let $j$ denote the child of $i$, i.e., $\left( j\in {{\mathcal{C}}_{i}} \right)$.
	
	By recursive expression \eqref{exp:mRs}, in the first ${{t}_{i}}$ Householder transformations, only the Householder transformation on the $j$-th column (i.e., the $\left( {{t}_{j}}+1 \right)$-th transformation) will cause the value of ${{m}_{s,i}}$$\left( {{n}_{1}}+1\le s\le n \right)$ to change. Therefore, $m_{s,i}^{\left( {{t}_{i}} \right)}$ satisfies
	\begin{equation}
	m_{s,i}^{\left( {{t}_{i}} \right)}=m_{s,i}^{\left( 0 \right)}+\sum\limits_{j\in {{\mathcal{C}}_{i}}\wedge j\le {{n}_{1}}}{\left( {m_{s,j}^{\left( {{t}_{j}} \right)}m_{j,i}^{\left( {{t}_{j}} \right)}}/{m_{j,j}^{\left( {{t}_{j}}+1 \right)}}\; \right)}.
	\label{hexp}
	\end{equation}
	
	For a non-leaf node $i$, since the values of $m_{j,i}^{\left( {{t}_{j}} \right)}$ and $ m_{j,j}^{\left( {{t}_{j}} \right)}$ haven't changed in the first ${{t}_{j}}$ Householder transformations, there are $ m_{j,j}^{\left( {{t}_{j}} \right)}=1$ and $ m_{j,i}^{\left( {{t}_{j}} \right)}=-1$. Therefore, $ m_{j,j}^{\left( {{t}_{j}}+1 \right)}$ satisfies 
	\begin{equation}
	m_{j,j}^{\left( {{t}_{j}}+1 \right)}=\sqrt{m{{_{j,j}^{\left( {{t}_{j}} \right)}}^{2}}+\sum\limits_{s={{n}_{1}}+1}^{n}{m{{_{s,j}^{\left( {{t}_{j}} \right)}}^{2}}}}=\sqrt{1+{{\theta }_{j}}}.
	\label{ana:mjj}
	\end{equation}
	
	Therefore, for ${{n}_{1}}+1\le s\le n$, we have
	\begin{equation}
	m_{s,i}^{\left( {{t}_{i}} \right)}=m_{s,i}^{\left( 0 \right)}-\sum\limits_{j\in {{\mathcal{C}}_{i}}\wedge j\le {{n}_{1}}}{{m_{s,j}^{\left( {{t}_{j}} \right)}}/{\sqrt{1+{{\theta }_{j}}}}\;}.
	\label{hexp}
	\end{equation}
	
	According to formula \eqref{exp:theta}, ${{\theta }_{i}}$ satisfies
	\begin{equation}
	{{\theta }_{i}}= \sum\limits_{s={{n}_{1}}+1}^{n}{m{{_{s,i}^{\left( {{t}_{i}} \right)}}^{2}}} = \sum\limits_{s={{n}_{1}}+1}^{n}{{{\left( m_{s,i}^{\left( 0 \right)}-\sum\limits_{j\in {{\mathcal{C}}_{i}}\wedge j\le {{n}_{1}}}{{m_{s,j}^{\left( {{t}_{j}} \right)}}/{\sqrt{1+{{\theta }_{j}}}}\;} \right)}^{2}}}. 
	\label{hexp}
	\end{equation}
	
	According to Thm. \ref{thm:lqrps}, the process of $\text{``LO''}-$QR decomposition always satisfies the property b), so at most only one item in $m_{s,i}^{\left( 0 \right)}-\sum\limits_{j\in {{\mathcal{C}}_{i}}\wedge j\le {{n}_{1}}}{{m_{s,j}^{\left( {{t}_{j}} \right)}}/{\sqrt{1+{{\theta }_{j}}}}\;}$ is non-zero, and we have
	\begin{equation} {{\theta }_{i}}= {{\left( m_{s,i}^{\left( 0 \right)}-\sum\limits_{j\in {{\mathcal{C}}_{i}}\wedge j\le {{n}_{1}}}{{m_{s,j}^{\left( {{t}_{j}} \right)}}/{\sqrt{1+{{\theta }_{j}}}}\;} \right)}^{2}} = m{{_{s,i}^{\left( 0 \right)}}^{2}}+\sum\limits_{j\in {{\mathcal{C}}_{i}}\wedge j\le {{n}_{1}}}{{m{{_{s,j}^{\left( {{t}_{j}} \right)}}^{2}}}/{\left( 1+{{\theta }_{j}} \right)}\;}. 
	\label{hexp}
	\end{equation}
	
	Therefore, 
	\begin{equation}
	\begin{aligned}
	{{\theta }_{i}}=& \sum\limits_{s={{n}_{1}}+1}^{n}{{{\left( m_{s,i}^{\left( 0 \right)}-\sum\limits_{j\in {{\mathcal{C}}_{i}}\wedge j\le {{n}_{1}}}{{m_{s,j}^{\left( {{t}_{j}} \right)}}/{\sqrt{1+{{\theta }_{j}}}}\;} \right)}^{2}}} = \sum\limits_{s={{n}_{1}}+1}^{n}{m{{_{s,i}^{\left( 0 \right)}}^{2}}}+\sum\limits_{j\in {{\mathcal{C}}_{i}}\wedge j\le {{n}_{1}}}{\left( {\sum\limits_{s={{n}_{1}}+1}^{n}{m{{_{s,j}^{\left( {{t}_{j}} \right)}}^{2}}}}/{\left( 1+{{\theta }_{j}} \right)}\; \right)} \\ 
	=& \sum\limits_{s={{n}_{1}}+1}^{n}{m{{_{s,i}^{\left( 0 \right)}}^{2}}}+\sum\limits_{j\in {{\mathcal{C}}_{i}}\wedge j\le {{n}_{1}}}{{{{\theta }_{j}}}/{\left( 1+{{\theta }_{j}} \right)}\;} = \sum\limits_{s={{n}_{1}}+1}^{n}{m{{_{s,i}^{\left( 0 \right)}}^{2}}}+\sum\limits_{j\in {{\mathcal{C}}_{i}}\wedge j\le {{n}_{1}}}{\left( 1-{{\left( 1+{{\theta }_{j}} \right)}^{-1}} \right)}. \\ 
	\end{aligned}
	\label{hexp}
	\end{equation}
	
	According to the definition of ${{\mathbf{M}}_{\downarrow }}$, for ${{n}_{1}}+1\le s\le n$, if $ s\in {{\mathcal{C}}_{i}}$, then $ m_{s,i}^{\left( 0 \right)}=-1$, otherwise $ m_{s,i}^{\left( 0 \right)}=0$. Finally, ${{\theta }_{i}}$ is reduced to
	\begin{equation}
	{{\theta }_{i}}= \sum\limits_{j\in {{\mathcal{C}}_{i}}\wedge j>{{n}_{1}}}{1}+\sum\limits_{j\in {{\mathcal{C}}_{i}}\wedge j\le {{n}_{1}}}{1}-\sum\limits_{j\in {{\mathcal{C}}_{i}}\wedge j\le {{n}_{1}}}{{1}/{\left( 1+{{\theta }_{j}} \right)}\;} = \sum\limits_{j\in {{\mathcal{C}}_{i}}}{1}-\sum\limits_{j\in {{\mathcal{C}}_{i}}\wedge j\le {{n}_{1}}}{{1}/{\left( 1+{{\theta }_{j}} \right)}\;} = \left| {{\mathcal{C}}_{i}} \right|-\sum\limits_{j\in {{\mathcal{C}}_{i}}\wedge j\le {{n}_{1}}}{{1}/{\left( 1+{{\theta }_{j}} \right)}\;}. 
	\label{hexp}
	\end{equation}
	According to formula \eqref{ana:mjj}, we have
	\begin{equation}
	{{w}_{i}}=m_{i,i}^{\left( {{t}_{i}}+1 \right)}=\sqrt{1+{{\theta }_{i}}}.
	\label{hexp}
	\end{equation}
	Then, according to formula \eqref{exp:mRs}, we have
	\begin{equation}
	{{w}_{i\to {{f}_{i}}}}= -m_{{{f}_{i}},i}^{\left( {{t}_{i}}+1 \right)}=-{m_{i,{{f}_{i}}}^{\left( {{t}_{i}} \right)}m_{i,i}^{\left( {{t}_{i}} \right)}}/{m_{i,i}^{\left( {{t}_{i}}+1 \right)}}\; = -{m_{i,{{f}_{i}}}^{\left( 0 \right)}m_{i,i}^{\left( 0 \right)}}/{m_{i,i}^{\left( {{t}_{i}}+1 \right)}}\;=-{\left( -1\times 1 \right)}/{{{w}_{i}}}\; = {{w}_{i}}^{-1}. 
	\label{hexp}
	\end{equation}
\end{proof}




\bibliographystyle{elsarticle-num}
\bibliography{refs}

\begin{thebibliography}{10}
\expandafter\ifx\csname url\endcsname\relax
  \def\url#1{\texttt{#1}}\fi
\expandafter\ifx\csname urlprefix\endcsname\relax\def\urlprefix{URL }\fi
\expandafter\ifx\csname href\endcsname\relax
  \def\href#1#2{#2} \def\path#1{#1}\fi

\bibitem{htapp:hsh}
S.~Niazi, M.~Ismail, S.~Grohsschmiedt, M.~Ronstrm, S.~Haridi, J.~Dowling,
  Hopsfs: Scaling hierarchical file system metadata using newsql databases.

\bibitem{dt:tuc}
J.~Abowd, The u.s. census bureau adopts differential privacy, 2018, pp.
  2867--2867.
\newblock \href {http://dx.doi.org/10.1145/3219819.3226070}
  {\path{doi:10.1145/3219819.3226070}}.

\bibitem{htapp:lmf}
T.~Lima, M.~de~Aguiar, Laplacian matrices for extremely balanced and unbalanced
  phylogenetic trees (08 2020).

\bibitem{htapp:ied}
S.~Garfinkel, J.~Abowd, S.~Powazek, Issues encountered deploying differential
  privacy (09 2018).
\newblock \href {http://dx.doi.org/10.1145/3267323.3268949}
  {\path{doi:10.1145/3267323.3268949}}.

\bibitem{mxp:twe}
X.~Li, Z.~Wang, Trees with extremal spectral radius of weighted adjacency
  matrices among trees weighted by degree-based indices, Linear Algebra and its
  Applications 620.
\newblock \href {http://dx.doi.org/10.1016/j.laa.2021.02.023}
  {\path{doi:10.1016/j.laa.2021.02.023}}.

\bibitem{mxp:twm}
S.~Ganesh, S.~Mohanty, Trees with matrix weights: Laplacian matrix and
  characteristic-like vertices (09 2020).

\bibitem{mxp:sdm}
R.~Bapat, S.~Sivasubramanian, Squared distance matrix of a tree: Inverse and
  inertia, Linear Algebra and its Applications 491.
\newblock \href {http://dx.doi.org/10.1016/j.laa.2015.09.008}
  {\path{doi:10.1016/j.laa.2015.09.008}}.

\bibitem{mxp:tam}
E.~Andriantiana, K.~Dadedzi, S.~Wagner, The ancestral matrix of a rooted tree,
  Linear Algebra and Its Applications 575 (2019) 35--65.
\newblock \href {http://dx.doi.org/10.1016/j.laa.2019.04.004}
  {\path{doi:10.1016/j.laa.2019.04.004}}.

\bibitem{mxp:gti}
W.~Fulton, J.~Harris, Graduate texts in mathematics, Representation Theory. A
  First Course, Readings in Mathematics 129.

\bibitem{gnn:gnn}
J.~Zhou, G.~Cui, S.~Hu, Z.~Zhang, C.~Yang, Z.~Liu, L.~Wang, C.~Li, M.~Sun,
  Graph neural networks: A review of methods and applications, AI Open 1 (2020)
  57--81.
\newblock \href {http://dx.doi.org/10.1016/j.aiopen.2021.01.001}
  {\path{doi:10.1016/j.aiopen.2021.01.001}}.

\bibitem{gnn:udg}
X.~A. Chen, Understanding spectral graph neural network.

\bibitem{spm:msm}
M.~Deveci, C.~Trott, S.~Rajamanickam, Multi-threaded sparse matrix-matrix
  multiplication for many-core and gpu architectures, Parallel Computing 78.
\newblock \href {http://dx.doi.org/10.1016/j.parco.2018.06.009}
  {\path{doi:10.1016/j.parco.2018.06.009}}.

\bibitem{dpht:bta}
M.~Hay, V.~Rastogi, G.~Miklau, D.~Suciu, Boosting the accuracy of
  differentially-private histograms through consistency, Proceedings of the
  VLDB Endowment 3.
\newblock \href {http://dx.doi.org/10.14778/1920841.1920970}
  {\path{doi:10.14778/1920841.1920970}}.

\bibitem{dpht:pef}
N.~Wang, X.~Xiao, Y.~Yang, T.~Hoang, H.~Shin, J.~Shin, G.~Yu, Privtrie:
  Effective frequent term discovery under local differential privacy, 2018, pp.
  821--832.
\newblock \href {http://dx.doi.org/10.1109/ICDE.2018.00079}
  {\path{doi:10.1109/ICDE.2018.00079}}.

\bibitem{dst:ucr}
J.~Jansson, K.~Sadakane, W.-K. Sung, Ultra-succinct representation of ordered
  trees with applications, J. Comput. Syst. Sci. 78 (2012) 619--631.
\newblock \href {http://dx.doi.org/10.1016/j.jcss.2011.09.002}
  {\path{doi:10.1016/j.jcss.2011.09.002}}.

\bibitem{dst:srol}
D.~Tsur, Succinct representation of labeled trees, Theoretical Computer Science
  562.
\newblock \href {http://dx.doi.org/10.1016/j.tcs.2014.10.006}
  {\path{doi:10.1016/j.tcs.2014.10.006}}.

\bibitem{dst:srod}
A.~Farzan, J.~Munro, Succinct representation of dynamic trees, Theor. Comput.
  Sci. 412 (2011) 2668--2678.
\newblock \href {http://dx.doi.org/10.1016/j.tcs.2010.10.030}
  {\path{doi:10.1016/j.tcs.2010.10.030}}.

\bibitem{dp:cnt}
C.~Dwork, F.~McSherry, K.~Nissim, A.~Smith, Calibrating noise to sensitivity in
  private data analysis, Journal of Privacy and Confidentiality 7 (2017)
  17--51.
\newblock \href {http://dx.doi.org/10.29012/jpc.v7i3.405}
  {\path{doi:10.29012/jpc.v7i3.405}}.

\bibitem{dpht:uhm}
W.~Qardaji, W.~Yang, N.~Li, Understanding hierarchical methods for
  differentially private histograms, Proceedings of the VLDB Endowment 6 (2013)
  1954--1965.
\newblock \href {http://dx.doi.org/10.14778/2556549.2556576}
  {\path{doi:10.14778/2556549.2556576}}.

\bibitem{dpht:dps}
G.~Cormode, M.~Procopiuc, E.~Shen, D.~Srivastava, T.~Yu, Differentially private
  spatial decompositions, Computing Research Repository - CORR\href
  {http://dx.doi.org/10.1109/ICDE.2012.16} {\path{doi:10.1109/ICDE.2012.16}}.

\bibitem{dpht:dpt}
S.~Yuan, D.~Pi, X.~Zhao, M.~Xu, Differential privacy trajectory data protection
  scheme based on r-tree, Expert Systems with Applications 182 (2021) 115215.
\newblock \href {http://dx.doi.org/10.1016/j.eswa.2021.115215}
  {\path{doi:10.1016/j.eswa.2021.115215}}.

\bibitem{dpc:mlp}
J.~Lee, Y.~Wang, D.~Kifer, Maximum likelihood postprocessing for differential
  privacy under consistency constraints, 2015, pp. 635--644.
\newblock \href {http://dx.doi.org/10.1145/2783258.2783366}
  {\path{doi:10.1145/2783258.2783366}}.

\bibitem{mx:laa}
G.~Strang, Linear algebra and learning from data, Wellesley-Cambridge Press
  Cambridge, 2019.

\bibitem{mx:tar}
G.~Birkenmeier, H.~Heatherly, J.~Kim, J.~Park, Triangular matrix
  representations, Journal of Algebra 230 (2000) 558--595.
\newblock \href {http://dx.doi.org/10.1006/jabr.2000.8328}
  {\path{doi:10.1006/jabr.2000.8328}}.

\bibitem{mx:rea}
L.~Minah, A.~Fox, G.~Sanders, Rounding error analysis of mixed precision block
  householder qr algorithms, SIAM Journal on Scientific Computing 43 (2021)
  A1723--A1753.
\newblock \href {http://dx.doi.org/10.1137/19M1296367}
  {\path{doi:10.1137/19M1296367}}.

\bibitem{mx:fio}
P.~Desai, S.~Aslan, J.~Saniie, Fpga implementation of gram-schmidt qr
  decomposition using high level synthesis, 2017, pp. 482--487.
\newblock \href {http://dx.doi.org/10.1109/EIT.2017.8053410}
  {\path{doi:10.1109/EIT.2017.8053410}}.

\bibitem{mx:agr}
W.~Fam, A.~Alimohammad, Givens rotation-based qr decomposition for mimo
  systems, IET Communications 11.
\newblock \href {http://dx.doi.org/10.1049/iet-com.2016.0789}
  {\path{doi:10.1049/iet-com.2016.0789}}.

\bibitem{ht:ecv}
R.~Stanley, Enumerative combinatorics | volume 1.

\bibitem{spm:its}
M.~C.~M. incorporates LAPACK, Increasing the speed and capabilities of matrix
  computation, [Online] (2000).

\bibitem{spm:tlm}
M.~Abadi, A.~Agarwal, P.~Barham, E.~Brevdo, Z.~Chen, C.~Citro, G.~Corrado,
  A.~Davis, J.~Dean, M.~Devin, S.~Ghemawat, I.~Goodfellow, A.~Harp, G.~Irving,
  M.~Isard, Y.~Jia, L.~Kaiser, M.~Kudlur, J.~Levenberg, X.~Zheng, Tensorflow :
  Large-scale machine learning on heterogeneous distributed systems (01 2015).

\bibitem{spm:cca}
Y.~Jia, E.~Shelhamer, J.~Donahue, S.~Karayev, J.~Long, R.~Girshick,
  S.~Guadarrama, T.~Darrell, Caffe: Convolutional architecture for fast feature
  embedding, MM 2014 - Proceedings of the 2014 ACM Conference on
  Multimedia\href {http://dx.doi.org/10.1145/2647868.2654889}
  {\path{doi:10.1145/2647868.2654889}}.

\bibitem{mx:mdc}
J.~Magnus, Matrix differential calculus with applications in statistics and
  econometrics\href {http://dx.doi.org/10.1002/9781119541219}
  {\path{doi:10.1002/9781119541219}}.

\bibitem{dt:csf}
U.~C. Bureau, 2010 census summary file 1,
  \url{https://www.census.gov/prod/cen2010/doc/sf1.pdf} (2012).

\bibitem{dt:nyc}
New york city taxi data,
  \url{http://www.nyc.gov/html/tlc/html/about/trip/record/data.shtml}.

\end{thebibliography}

\end{document}